\newcolumntype{C}{>{\centering\arraybackslash}X} 
\newcommand{\Cross}{$\mathbin{\tikz [x=1.5ex,y=1.5ex,line width=0.5ex, red] \draw (0,0) -- (1,1) (0,1) -- (1,0);}$}
\newcommand{\Checkmark}{$\color{green}\checkmark$}
\newcommand{\Circle}{$\mathbin{\tikz [x=1.1ex,y=1.1ex,line width=2ex, orange] \draw (0,0) circle (1pt);}$}
\newtheorem{theorem}{Theorem}
\newtheorem{lemma}{Lemma}
\newtheorem{corollary}{Corollary}
\newtheorem{definition}{Definition}
\newtheorem{remark}{Remark}
\newtheorem{condition}{Condition}
\newcommand\ddfrac[2]{\frac{\displaystyle #1}{\displaystyle #2}}
\DeclareMathOperator*{\argmin}{argmin}
\begin{document}
\title{Preempting to Minimize Age of Incorrect Information under Transmission Delay}

\author{
\IEEEauthorblockN{Yutao Chen and Anthony Ephremides}\\
\IEEEauthorblockA{Department of Electrical and Computer Engineering, University of Maryland}}

\maketitle

\begin{abstract}
We study the problem of optimizing the decisions of a preemptively capable transmitter to minimize the Age of Incorrect Information (AoII) when the communication channel has a random delay. We consider a slotted-time system where a transmitter observes a Markovian source and makes decisions based on the system status. In each time slot, the transmitter decides whether to preempt or skip when the channel is busy. When the channel is idle, the transmitter decides whether to send a new update. A remote receiver estimates the state of the Markovian source based on the update it receives. We consider a generic transmission delay and assume that the transmission delay is independent and identically distributed for each update. This paper aims to optimize the transmitter's decision in each time slot to minimize the AoII with generic time penalty functions. To this end, we first use the Markov decision process to formulate the optimization problem and derive the analytical expressions of the expected AoIIs achieved by two canonical preemptive policies. Then, we prove the existence of the optimal policy and provide a feasible value iteration algorithm to approximate the optimal policy. However, the value iteration algorithm will be computationally expensive if we want considerable confidence in the approximation. Therefore, we analyze the system characteristics under two canonical delay distributions and theoretically obtain the corresponding optimal policies using the policy improvement theorem. Finally, numerical results are presented to illustrate the performance improvements brought about by the preemption capability.
\end{abstract}

\begin{IEEEkeywords}
Age of Incorrect Information (AoII), information freshness, semantic communications, delay, preemption
\end{IEEEkeywords}

\section{Introduction}
As communications technologies evolve, so do the demands on communications networks. For example, we need communications networks to be more efficient and intelligent. At the same time, we question whether traditional performance metrics such as delay can still meet these higher demands. Therefore, researchers have recently proposed Semantic Communications~\cite{b1}, a new design paradigm for networked systems. Semantic communications consider the semantics of the information, defined as the importance of the transmitted information for the purpose of transmission. Semantic measures are at the core of semantic communications. In~\cite{b1}, the authors present several representative semantic measures. The first is freshness, which captures how fresh the information is. In other words, it measures the time elapsed between the generation of the latest information at the destination and its arrival at the destination. Age of Information (AoI), introduced in~\cite{b2}, is a good and widely studied example~\cite{b22,b23,b24}. The second is relevance, which captures the change in the process since the last sampling and is very important in remote estimation. For example, when the process changes slowly, we can reduce the transmission of information to save valuable resources. When the process changes dramatically, we need to increase the transmission of information so that the distant receiver can have better knowledge of the process. Relevance is different from freshness since freshness ignores the specific content of the transmitted information. Hence, as shown in~\cite{b3}, optimizing AoI does not achieve optimal system performance for communication purposes. The third is value, which captures the value of information transmission for communication purposes. Value of Information (VoI) is a good example that quantifies the difference between the benefit of transmitting this information and its cost. One of the essential parts of semantic communications is the design of semantic metrics that quantify semantic measures. We note that AoI and VoI capture only one part of the semantic measure. However, a more refined semantic metric needs to consider more than one semantic measure and integrate multiple semantic measures well into a single semantic metric. The Age of Incorrect Information (AoII) introduced in~\cite{b4} is an example.

AoII combines freshness and relevance of information. As presented in~\cite{b4}, AoII is dominated by two penalty functions. The first one is the time penalty function, which is based on the idea that information only ages if it is incorrect. If information provides accurate information, we consider it fresh no matter when it was generated. Through the time penalty function, AoII captures the time elapsed since the last time the receiver had the correct information. The second is the information penalty function, which is based on the idea that different information mismatches will cause different damage to the system. When the difference between the information on the receiver and the correct information is slight, it is unlikely to cause significant damage to the system. On the other hand, a significant difference will cause large damage to the system. Therefore, the information penalty function captures the mismatch between the information on the receiver side and the correct information. By combining the two penalty functions, AoII not only captures the mismatch between the information on the receiver side and the correct information but also reflects the aging process of the incorrect information.

With the introduction of AoII, researchers have devoted themselves to revealing its characteristics and performance in networked systems. In~\cite{b4}, the authors study the minimization of AoII in the presence of average transmission rate limits. Then, in~\cite{b5}, the result is extended to the case when the time penalty function is general. In~\cite{b6}, the authors study a system setup similar to the one described above but with a source process with multiple states and an AoII that incorporates the quantified information mismatches between the source and receiver. Unlike the previous papers, \cite{b7} studies AoII in the context of scheduling. In this type of problem, a base station sends updates to multiple users and tries to ensure that each user's information is as accurate as possible. In~\cite{b7}, the authors study the problem of minimizing AoII when channel state information is available and the time penalty function is generic. The authors of~\cite{b8} consider a similar system, but the base station has no way of knowing the actual state of the event before deciding to transmit. In the real world, we usually do not know the statistical model of the dynamic process we need to observe, or we do not know the model's parameters. Minimizing AoII, in this case, is usually tricky. Therefore, the authors of~\cite{b20} consider the problem of minimizing AoII without knowing the parameters of a Markovian process. Moreover, a variant of AoII - Age of Incorrect Estimates is introduced and studied in~\cite{b21}. In most papers studying AoII, the update transmission time is constant, usually one time slot. In practice, however, the communication channel usually suffers from a random delay due to the influence of various factors. In such a system setting, the authors of~\cite{b9} compare three performance metrics: AoII, AoI, and real-time error through extensive numerical results. \cite{b10} and~\cite{b11} study the problem of minimizing AoII under random channel delay from a theoretical perspective. However, in both papers, the transmitter must wait for the update to complete the transmission before initiating the subsequent transmission. In this paper, we study the case where the transmitter can preempt the transmitting updates and immediately transmit a new update. In this way, we also have to consider whether the transmitter needs to terminate the transmission to transmit new updates when the channel is busy. 

The main contributions of this paper can be summarized as follows. 1) We study the problem of optimizing the AoII with a generic time penalty function in a slotted-time system under a generic transmission delay. 2) We formulate the problem using the Markov decision process and prove the existence of the optimal policy. 3) We obtain the analytical expressions of the expected AoIIs achieved by two canonical preemptive policies. 4) We propose the value iteration algorithm that approximates the optimal policy. 5) We analyze two canonical transmission delays and theoretically find the corresponding optimal policies. Specifically, we first study the case where the transmission delay follows the Geometric distribution, a typical example of an unbounded transmission time. Then, we investigate the case where the transmission delay follows the Zipf distribution, a typical example of a bounded transmission time. We also extend the results to the generic transmission delay when the transmission time is upper bounded by $2$.

The paper is organized as follows. Section~\ref{sec-SystemOverview} describes the system model, specifies the choice of the penalty functions in AoII, and formulates the optimization problem. Then, we cast the optimization problem into a Markov decision process and specify the state transition probabilities in Section~\ref{sec-MDP}. In Section~\ref{sec-Performance}, we analyze two preemptive policies and obtain the analytical expressions of the expected AoIIs they achieve. Then, in Section~\ref{sec-OptimalPolicy}, we prove the existence of the optimal policy, propose a modified value iteration algorithm to approximate the optimal policy, and theoretically obtain the optimal policy when the transmission delay follows two canonical distributions. The paper concludes with numerical results detailed in Section~\ref{sec-Numerical}.

\section{System Overview}\label{sec-SystemOverview}
\subsection{System Model}\label{sec-SystemModel}
We consider a system in which the transmitter observes a Markovian source by receiving updates from it and controls the transmission of the updates over a communication channel with random delays so that the receiver at the other end of the channel has the best real-time knowledge of the Markovian source. We assume that time is slotted and normalized to a unit time slot. Meanwhile, the end of one time slot is the beginning of the next time slot. An illustration of the system model is given in Fig.~\ref{fig-SystemModel}.
\begin{figure}[!t]
\centering
\includegraphics[width=3in]{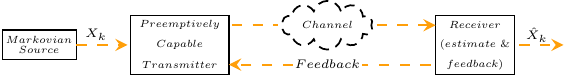}
\caption{An illustration of the system model.}
\label{fig-SystemModel}
\end{figure}
At the beginning of time slot $k$, the transmitter receives an update $X_k$ from the Markovian source and discards the old one. To not overcomplicate the system model, we assume that the Markovian source has two states and is symmetric with state transition probability $p$. Then, the transmitter decides whether to transmit $X_k$ based on the system's current status. When the channel is idle, the transmitter can choose whether to transmit the current update. We consider the case where the transmitter has the ability to terminate an ongoing transmission and immediately initiate a new one. We assume that terminating the current transmission and starting a new one can be done simultaneously, and the preempted update will be discarded. Thus, when the channel is busy, the transmitter can stay idle or terminate the ongoing transmission and immediately transmit the current update. We denote the action of the transmitter by $a_k$. $a_k=0$ when the transmitter chooses to stay idle. Otherwise, $a_k=1$. Note that the specific action represented by $a_k=1$ depends on the status of the communication channel. If not preempted, the transmission will arrive at the receiver after a random amount of time slots. The receiver maintains an estimate of the state of the Markovian source based on the received updates. Whenever the receiver receives a new update, it sends a feedback signal to the transmitter to inform the transmitter that the update has been received.

In the following, we define the delay model. The communication channel is reliable but suffers from random delays, which means that if an update is not preempted, it will be delivered to the receiver losslessly after a random amount of time slots. The transmission time is a random variable denoted by $T$. For simplicity, we assume that $T$ is independent and identically distributed for each update. The transmission time can be fully characterized by the probability mass function (PMF) denoted by $p_t\triangleq Pr(T=t)$, where $t\in\mathbbm{N}^*$. We do not impose any restrictions on the transmission time, which means that the transmission time can be infinite.

Next, we describe the receiver's estimation strategy and the feedback mechanism. Let $\hat{X}_k$ denote the receiver's estimate at time slot $k$. Then, according to~\cite{b12}, the best estimator when $p\leq\frac{1}{2}$ is the last received update. Let $d_k=1$ when an update is delivered to the receiver at time slot $k$ and $d_k=0$ otherwise. Then,
\[
{\hat{X}_k} = 
\begin{dcases}
\hat{X}_{k-1} & d_{k-1}=0,\\
X_{k-T} & d_{k-1}=1,
\end{dcases}
\]
where $X_{k-T}$ is the update delivered at the end of time slot $k-1$. In the case of $p>\frac{1}{2}$, the best estimator depends on the realization of the transmission time. In this paper, we consider only the case of $0<p<\frac{1}{2}$. We exclude the case of $p=0$ because, in this case, the Markovian source never changes state. We also exclude the case of $p=\frac{1}{2}$ because, in this case, the state of the Markovian source is independent of the previous state. The results can be extended to the case of $p>\frac{1}{2}$ by using the corresponding best estimator. Whenever the receiver receives a new update, it sends an $ACK$ packet to the transmitter so that the transmitter is aware of the change in the receiver's estimate and that the channel has become idle. In real-world applications, the size of $ACK$ packets is usually negligible compared to that of status updates. Therefore, we assume that $ACK$ packets are accurately and instantaneously received by the transmitter. This assumption is widely used in the relevant literature~\cite{b4,b13}. The $ACK$ packet alone is sufficient for the transmitter to keep track of the receiver's estimate since we assume that the update will necessarily be delivered unless it is preempted.

\subsection{Age of Incorrect Information}\label{sec-AoII}
The system uses the Age of Incorrect Information (AoII) to measure its performance. AoII is a combination of the information mismatch and the aging process of the mismatched information. Specifically, the information mismatch is quantified by the information penalty function $g(X_k,\hat{X}_k)$, and the aging process of mismatched information is characterized by the time penalty function $f(k)$. Then, in a slotted time system, AoII at time slot $k$ can be written as
\[
\Delta_{AoII}(X_k,\hat{X}_k,k) = \sum_{h = U_k+1}^{k}\bigg(g(X_h,\hat{X}_h)F(h-U_k)\bigg),
\]
where $F(k) \triangleq f(k) - f(k-1)$ and $U_k$ is the last time slot where the receiver's estimate is correct. Mathematically,
\[
U_k \triangleq \max\{h:h\leq k, X_h = \hat{X}_h\}.
\]
To avoid unnecessary complications, we choose $g(X_k,\hat{X}_k) = |X_k-\hat{X}_k|$. Since the Markovian source has two states, $g(X_k,\hat{X}_k)\in\{0,1\}$. Consequently, AoII can be simplified as
\begin{equation}\label{eq-AoIIFirst}
\Delta_{AoII}(X_k,\hat{X}_k,k) = f(k-U_k)\triangleq f(\Delta_k).
\end{equation}
To facilitate the analysis, we make the following assumptions on the time penalty function $f(\Delta)$.
\begin{itemize}
\item $f(\Delta_1)\geq f(\Delta_2)\geq0$ if $\Delta_1\geq\Delta_2$.
\item $f(\Delta)\rightarrow\infty$ when $\Delta\rightarrow\infty$.
\item $\sum_{\Delta=0}^{\infty}\gamma^\Delta f(\Delta) < \infty$ for $0<\gamma<1$.
\end{itemize}
Some choices of the time penalty function $f(\Delta)$ are the following.
\begin{itemize}
\item $f(\Delta) = \alpha\Delta+\beta$ where $\alpha\geq0$ and $\beta\geq0$ are finite constants.
\item $f(\Delta)=\kappa\Delta^2$ where $\kappa\geq0$ is finite constant.
\item $f(\Delta) = \log_a(\Delta+1)$ where $a>1$ is finite constant.
\end{itemize}
A sample path when $f(\Delta_k) = 2\Delta_k$ is given in Fig.~\ref{fig-SamplePath}.
\begin{figure}[!t]
\centering
\includegraphics[width=3in]{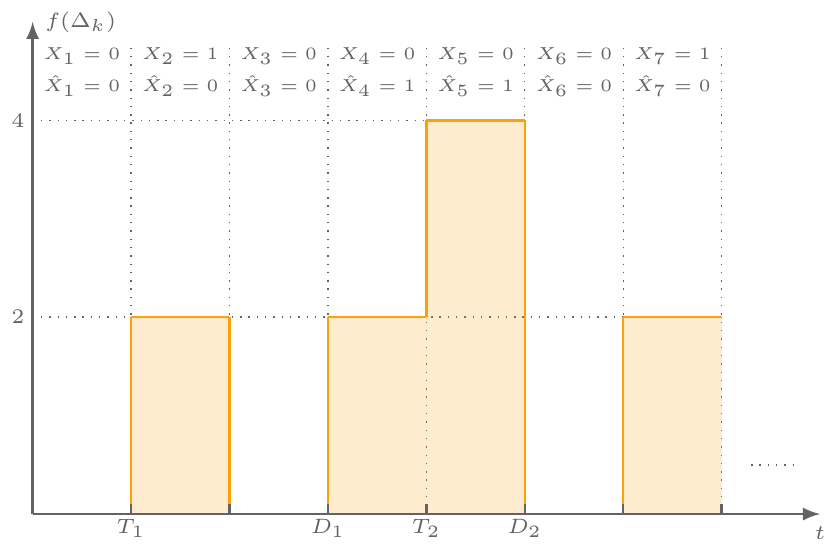}
\caption{A sample path when $f(\Delta_k) = 2\Delta_k$. In the figure, $T_i$ and $D_i$ are the transmissions start and finish time of the $i$-th update, respectively. For example, at $T_1$, the transmission of $X_2$ starts, and the update is delivered at $D_1$. Therefore, the estimate at time slot $4$ (i.e., $\hat{X}_4$) is changed. Note that the transmission decisions in the figure are random.}
\label{fig-SamplePath}
\end{figure}
We notice that the evolution of $\Delta_k$ can fully characterize the evolution of $f(\Delta_k)$. Leveraging the definition of $U_k$, the evolution of $\Delta_k$ can be characterized by the following two cases.
\begin{itemize}
\item When the receiver's estimate is correct at time slot $k$, $U_k = k$ by definition. Hence, $\Delta_k = 0$.
\item When the receiver's estimate is erroneous at time slot $k$, $U_k = U_{k-1}$ by definition. Hence, $\Delta_k = k-U_k = \Delta_{k-1} + 1$.
\end{itemize}
The evolution of $\Delta_k$ can be summarized as follows.
\[
\Delta_{k} = \mathbbm{1}\{X_k\neq \hat{X}_k\}(\Delta_{k-1}+1),
\]
where $\mathbbm{1}\{A\}$ is an indicator function that takes the value of $1$ when event $A$ occurs and $0$ otherwise. We can conclude that $\Delta_k=0$ if and only if $X_{k} = \hat{X}_k$. For later use in analysis, we reformulate the evolution of $\Delta_k$ by incorporating the dynamics of the Markovian source. To this end, we first define $\Gamma\triangleq\mathbbm{1}\{(\Delta_{k-1} = 0\land \hat{X}_{k} = \hat{X}_{k-1})\lor(\Delta_{k-1} > 0\land\hat{X}_{k} \neq \hat{X}_{k-1})\}$.\footnote{In the definition, $\land$ is the logical AND operator and $\lor$ is the logical OR operator.} Then, we know that $\hat{X}_{k} = X_{k-1}$ if $\Gamma=1$ and $\hat{X}_{k} \neq X_{k-1}$ if $\Gamma=0$. Hence, we have
\begin{equation}\label{eq-AoIIDynamic}
{\Delta_{k}} =
\begin{dcases}
\mathbbm{1}\{X_k\neq X_{k-1}\}(\Delta_{k-1}+1) & \Gamma=1,\\
\mathbbm{1}\{X_k= X_{k-1}\}(\Delta_{k-1}+1) & \Gamma=0.
\end{dcases}
\end{equation}
Note that the system's state at any time slot can correspond to either $\Gamma=1$ or $\Gamma=0$.

\subsection{Problem Formulation}
In this paper, we investigate the problem of minimizing the AoII by controlling the transmitter's decision in each time slot. We define a policy as one that specifies the transmitter's decision in each time slot based on the current system status. Then, this paper aims to find a policy that minimizes the AoII of the system. Mathematically, the problem can be formulated as the following minimization problem.
\begin{argmini}|l|
{\psi \in \Psi} {\lim_{K\to\infty} \frac{1}{K}\mathbb{E}_{\psi}\left(\sum_{k=0}^{K-1}f(\Delta_k)\right),}{}{\label{eq-Goal}}
\end{argmini}
where $\mathbb{E}_{\psi}$ is the conditional expectation, given that policy $\psi$ is adopted and $\Psi$ is the set of all admissible policies.
\begin{definition}[Optimal policy]
A policy is optimal if it minimizes the AoII of the system. The action specified by the optimal policy is called the optimal action. We use $\psi^*$ and $a^*$ to denote the optimal policy and action, respectively.
\end{definition}
In the next section, we characterize the optimization problem \eqref{eq-Goal} using the Markov decision process (MDP).

\section{MDP Characterization}\label{sec-MDP}
We use an infinite horizon with average cost MDP $\mathcal{M}$ to characterize the minimization problem \eqref{eq-Goal}. Specifically, $\mathcal{M}$ consists of the following components.
\begin{itemize}
\item The state space $\mathcal{S}$. The state $s=(\Delta,t,i)$ where $\Delta\in\mathbb{N}^0$ is the $\Delta_k$ defined in \eqref{eq-AoIIFirst} without the time stamp. $t\in\mathbb{N}^0$ denotes the time that the transmission has been in progress. When the channel is idle, we define $t=0$. $i\in\{-1,0,1\}$ indicates the channel status. When the channel is idle, $i=-1$. When the channel is busy transmitting an update, $i\in\{0,1\}$, where $i=0$ if the update being transmitted is the same as the receiver's current estimate. Otherwise, $i=1$. To better distinguish between different states, we will use $s$ and $(\Delta,t,i)$ interchangeably to represent the state throughout the rest of the paper. Therefore, $s$ and $(\Delta,t,i)$ will synchronize all changes, such as adding superscripts or subscripts.
\item The action space $\mathcal{A}$. The feasible action is $a\in\{0,1\}$. When $i\neq-1$, $a=1$ if the transmitter decides to terminate the current transmission and immediately start a new one. Otherwise, $a=0$. When $i=-1$, $a=1$ if the transmitter decides to transmit the new update, and $a=0$ otherwise.
\item The state transition probability $\mathcal{P}$. The probability that action $a$ at state $s$ leads to state $s'$ is denoted by $P_{s,s'}(a)$. The value of $P_{s,s'}(a)$ will be discussed in the next subsection.
\item The immediate cost $\mathcal{C}$. The immediate cost for being at state $s$ is $C(s)=f(\Delta)$.
\end{itemize}
Let $V(s)$ be the value function of state $s$. Then, the optimal action at state $s$, denoted by $a^*(s)$, can be determined by the following equation.
\[
a^*(s) = \argmin_{a\in\mathcal{A}}\left\{\sum_{s'\in\mathcal{S}}P_{s,s'}(a)V(s')\right\}\quad s\in\mathcal{S}.
\]
Hence, computing the value function for each state $s\in\mathcal{S}$ is sufficient to obtain the optimal policy. It is well known that $V(s)$ satisfies the Bellman equation.
\[
V(s) + \theta = \min_{a\in\mathcal{A}}\left\lbrace C(s) + \sum_{s'\in\mathcal{S}}P_{s,s'}(a)V(s')\right\rbrace \quad s\in\mathcal{S},
\]
where $\theta$ is the expected AoII achieved by the optimal policy. Hence, the state transition probability $P_{s,s'}(a)$ plays a vital role. In the following, we delve into the expression of $P_{s,s'}(a)$.

We first define and compute an auxiliary quantity $Pr(T=t\mid t-1)$, which is the probability that an update will be delivered in the next time slot, given that the transmission has been in progress for $t-1$ time slots. It is easy to get
\[
Pr(T=t\mid t-1) = \frac{p_{t}}{1-\sum_{i=1}^{t-1}p_i}.
\]
For simplicity, we abbreviate $Pr(T=t\mid t-1)$ as $q_{t}$ for the remainder of this paper. Leveraging $q_t$, we can proceed with deriving the state transition probability $P_{s,s'}(a)$. For the sake of space, the detailed discussion is provided in Appendix~\ref{app-STP} of the supplementary material.

\section{Preemptive Policy Performance}\label{sec-Performance}
In this section, we analyze and evaluate the performance of two preemptive policies by deriving the analytical expressions of the expected AoIIs they achieve. We start with the policy under which the transmitter always preempts the transmitting update and starts a new transmission when the channel is idle.

\subsection{Strong Preemptive Policy}\label{sec-PerformanceSP}
\begin{definition}[Strong preemptive policy]\label{def-SP}
The strong preemptive policy always starts a new transmission when the channel is idle and always preempts the transmitting update.
\end{definition}
\begin{remark}
We consider the strong preemptive policy because it is intuitively desirable when the transmission delay follows a memoryless distribution. For example, the Geometric distribution. One of the essential properties of the Geometric distribution is that $q_t$ is independent of $t$, meaning that no matter how long an update has been in transmission, it has the same probability of being delivered in the next time slot. Therefore, it is desirable for the transmitter to preempt so that the update in the channel is always the freshest.
\end{remark}
The system dynamics under the strong preemptive policy can be fully characterized by a discrete-time Markov chain (DTMC). Without loss of generality, we assume the system starts at state $(0,0,-1)$. Then, the state space of the induced DTMC $\mathcal{S}^{MC}_{sp}$ consists of all the states that are accessible from state $(0,0,-1)$. For a better presentation, we introduce the \textit{virtual state}. By definition, the DTMC will never visit the virtual state. Nevertheless, the existence of these virtual states will make the equations clearer. In the following, we elaborate on the state space $\mathcal{S}^{MC}_{sp}$ and identify the virtual states. We first recall that the strong preemptive policy always preempts the transmitting updates. Hence, each update can only live for one slot in the channel before being preempted or delivered. Consequently, the DTMC will never reach state $s$ with $t>1$. Hence, the system can only be in state $s$ with $t\in\{0,1\}$. With this in mind, $\mathcal{S}^{MC}_{sp}$ consists of the following states.
\begin{itemize}
\item $s=(\Delta,0,-1)$ where $\Delta\geq0$. The DTMC will be in this state every time the channel is idle.
\item $s=(\Delta,1,0)$ where $\Delta\geq0$. The DTMC will be in this state when the channel is busy transmitting an update that is the same as the receiver's estimate. Then, we identify the virtual state. We note that $i=0$ happens only when the transmitter initiates the transmission when AoII is zero. We recall that the transmitting update is either delivered or preempted one time slot after the transmission starts. Combined with the fact that, within one time slot, AoII can either increase by $1$ or decrease to zero, we know that $s=(\Delta,1,0)$ where $\Delta\geq2$ is a virtual state.
\item $s=(\Delta,1,1)$ where $\Delta\geq0$. The DTMC will be in this state when the channel is busy transmitting an update that differs from the receiver's estimate. Then, we identify the virtual state. We notice that $i=1$ occurs when the transmitter initiates the transmission when the AoII is not zero. Combined with the fact that, within one time slot, AoII either increases by $1$ or decreases to zero, we know that $s=(1,1,1)$ is a virtual state.
\end{itemize}
We notice that all the states in $\mathcal{S}^{MC}_{sp}$, except for the virtual states, communicate with state $(0,0,-1)$. Combined with the fact that state $s=(0,0,-1)$ is recurrent, we can conclude that the stationary distribution of the induced DTMC exists. We denote by $\pi_{-1}(\Delta)$ the steady state probability of state $s = (\Delta,0,-1)$. Likewise, we denote the steady state probability of state $s = (\Delta,1,0)$ and state $s = (\Delta,1,1)$ as $\pi_{0}(\Delta)$ and $\pi_{1}(\Delta)$, respectively. For virtual states, we define the corresponding steady state probability as $0$. We also define $\pi(\Delta)\triangleq \pi_{-1}(\Delta)+\pi_{0}(\Delta)+\pi_{1}(\Delta)$. Then, the expected AoII achieved by the strong preemptive policy is given by
\begin{equation}\label{eq-AoIISP}
\bar{\Delta}_{sp} = \sum_{\Delta=0}^{\infty}f(\Delta)\pi(\Delta).
\end{equation}
Hence, it is sufficient to calculate $\pi(\Delta)$ for $\Delta\geq0$.
\begin{lemma}\label{them-StationarySP}
The following gives the expressions of $\pi(\Delta)$ for $\Delta\geq0$.
\[
\pi(0)=\frac{p + q_1 - 2q_1p}{1-(1-q_1)(1-2p)}.
\]
For each $\Delta\geq1$,
\[
\pi(\Delta) = \frac{(1-q_1-p + 2q_1p)^{\Delta-1}(p^2 + q_1p - 2q_1p^2)}{1-(1-q_1)(1-2p)}.
\]
\end{lemma}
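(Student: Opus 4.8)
The plan is to identify the stationary distribution of the DTMC induced by the strong preemptive policy by first \emph{reducing} it to a one-dimensional chain on the AoII value $\Delta$. Under the strong preemptive policy every update lives for exactly one slot in the channel (it is delivered or preempted), so the channel/transmitted-update coordinates $(t,i)$ are redundant for the purpose of this computation: once we condition on the current $\Delta$, the one-step law of $\Delta_{k+1}$ does not depend on $(t_k,i_k)$, because in every slot the transmitter (re)starts the transmission of the freshest update $X_k$, which is delivered in the next slot with probability $q_1$ regardless of the history. Hence it suffices to analyze the Markov chain on $\mathbb{N}^0$ that $(\Delta_k)$ follows, and the claimed $\pi(\Delta)$ is exactly its stationary mass at $\Delta$, equivalently $\pi(\Delta)=\pi_{-1}(\Delta)+\pi_0(\Delta)+\pi_1(\Delta)$ after lumping the three channel-status versions of each $\Delta$.

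First I would derive the transition probabilities of this reduced chain by a four-way case split: whether the in-transit update is delivered (probability $q_1$) or preempted ($1-q_1$), and whether the source transitions during the slot (probability $p$) or not ($1-p$), tracking in each case how $\hat X$ — and therefore $\Delta$ via \eqref{eq-AoIIDynamic} — is updated. Starting from $\Delta_k=0$, a delivered update carries the current estimate's value, so $\hat X_{k+1}=X_k$ and $\Delta_{k+1}=\mathbbm{1}\{X_{k+1}\neq X_k\}$: the chain goes to $0$ w.p. $1-p$ and to $1$ w.p. $p$, irrespective of delivery. Starting from $\Delta_k=n\ge 1$, the key point is that a \emph{delivered} update is already stale unless the source stays put: if delivered and the source does not change, $\Delta$ resets to $0$; if delivered and the source changes, the new estimate is already wrong and $\Delta$ \emph{continues} to $n+1$ rather than resetting; the preempted case is analogous. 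Collecting the cases, from $n\ge1$ the chain moves to $n+1$ with probability $\rho:=(1-q_1)(1-p)+q_1 p = 1-q_1-p+2q_1 p$ and resets to $0$ with probability $1-\rho=p+q_1-2q_1 p$. It is precisely this "delivered-but-stale" branch that produces the $2q_1 p$ rather than the naive $q_1 p$, so this case analysis is the main thing to get right; a secondary care point is the $\Delta\in\{0,1\}$ boundary (or, if one keeps the full three-coordinate chain instead, the bookkeeping of the virtual states identified before the statement).

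With the reduced chain in hand the stationary equations are immediate. State $n\ge2$ is entered only from $n-1$ (w.p.\ $\rho$), so $\pi(n)=\rho\,\pi(n-1)$; state $1$ is entered only from state $0$ (w.p.\ $p$), so $\pi(1)=p\,\pi(0)$; hence $\pi(n)=p\,\rho^{\,n-1}\pi(0)$ for all $n\ge1$. Since $0\le\rho<1$ (using $0<p<\tfrac12$, which gives $q_1+p-2q_1 p>0$), the chain is positive recurrent — consistent with the recurrence of $(0,0,-1)$ noted above — and the normalization $\pi(0)+\sum_{n\ge1}p\,\rho^{\,n-1}\pi(0)=1$ yields $\pi(0)=\dfrac{1-\rho}{1-\rho+p}$. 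Finally I would simplify $1-\rho+p=2p+q_1-2q_1 p=1-(1-q_1)(1-2p)$ and $p(1-\rho)=p(p+q_1-2q_1 p)=p^2+q_1 p-2q_1 p^2$, which turns $\pi(0)=\tfrac{1-\rho}{1-\rho+p}$ and $\pi(n)=p\,\rho^{\,n-1}\pi(0)$ into exactly the displayed formulas. The only genuine obstacle is the one-step transition analysis of $\Delta$, in particular correctly handling the case where a delivered update has become stale within the slot; everything after that is routine.
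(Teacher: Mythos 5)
Your proof is correct and follows essentially the same route as the paper's Appendix B, which likewise writes balance equations for the aggregated probabilities $\pi(\Delta)=\pi_{-1}(\Delta)+\pi_0(\Delta)+\pi_1(\Delta)$, arrives at $\pi(1)=p\,\pi(0)$ and $\pi(\Delta)=(1-q_1-p+2q_1p)\,\pi(\Delta-1)$ for $\Delta\geq 2$, and solves the resulting geometric recursion together with the normalization. Your explicit lumping argument — that under the strong preemptive policy the $(t,i)$ coordinates are irrelevant so the one-dimensional $\Delta$-chain is itself Markov with kernel $\rho=(1-q_1)(1-p)+q_1p$ — is a clean way to organize the same computation and correctly isolates the delivered-but-stale branch responsible for the $2q_1p$ cross term.
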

\begin{proof}
The balance equations the steady state probabilities satisfy can be obtained easily by exploiting the state transition probabilities detailed in Appendix~\ref{app-STP} of the supplementary material. Then, the stationary distribution can be obtained by solving the resulting system of linear equations. The complete proof can be found in Appendix~\ref{pf-StationarySP} of the supplementary material.
\end{proof}
\begin{remark}
We can approximate the infinite sum in \eqref{eq-AoIISP} using a finite sum with a sufficiently large upper bound on $\Delta$.
\end{remark}
We notice that
\begin{align*}
\bar{\Delta}_{sp} = & f(0)\pi(0) +c\sum_{\Delta=1}^{\infty}(1-q_1-p + 2q_1p)^{\Delta}f(\Delta)\\
< & f(0)\pi(0) +c\sum_{\Delta=0}^{\infty}(1-q_1-p + 2q_1p)^{\Delta}f(\Delta)\\
< & \infty,
\end{align*}
where $c \triangleq \frac{p^2 + q_1p - 2q_1p^2}{(1-q_1-p + 2q_1p)[1-(1-q_1)(1-2p)]}\in[0,1]$. Then, the finiteness of $\bar{\Delta}_{sp}$ is guaranteed by the assumption on $f(\Delta)$ introduced in Section~\ref{sec-AoII}. In the following, we provide the closed-form expression of the expected AoII achieved by the strong preemptive policy when $f(\Delta) = \alpha\Delta+\beta$, where $\alpha\geq0$ and $\beta\geq0$ are two finite constants.
\begin{corollary}\label{cor-SpecialPerformanceSP}
When $f(\Delta) = \alpha\Delta+\beta$,
\[
\bar{\Delta}_{sp} = \ddfrac{\alpha p}{(p+q_1-2q_1p)(q_1+2p-2q_1p)} + \beta.
\]
\end{corollary}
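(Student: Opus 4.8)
The plan is to evaluate the infinite sum in \eqref{eq-AoIISP} directly, using the closed-form stationary distribution from Lemma~\ref{them-StationarySP} together with the special structure $f(\Delta)=\alpha\Delta+\beta$. By linearity of the sum,
\[
\bar{\Delta}_{sp}=\alpha\sum_{\Delta=0}^{\infty}\Delta\,\pi(\Delta)+\beta\sum_{\Delta=0}^{\infty}\pi(\Delta),
\]
so the corollary reduces to two claims: (i) $\sum_{\Delta\ge 0}\pi(\Delta)=1$, and (ii) $\sum_{\Delta\ge 0}\Delta\,\pi(\Delta)=\dfrac{p}{(p+q_1-2q_1p)(q_1+2p-2q_1p)}$.

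First I would introduce the shorthand $m\triangleq p+q_1-2q_1p$ and $x\triangleq 1-q_1-p+2q_1p=1-m$, and record the elementary factorizations $p^2+q_1p-2q_1p^2=pm$ and $1-(1-q_1)(1-2p)=q_1+2p-2q_1p=m+p$. With these, Lemma~\ref{them-StationarySP} reads $\pi(0)=m/(m+p)$ and $\pi(\Delta)=\frac{pm}{m+p}\,x^{\Delta-1}$ for $\Delta\ge 1$. Since $0<p<\tfrac12$ and $q_1\in(0,1]$, we have $0<m<1$, hence $0<x<1$, which is exactly what makes the geometric sums below converge (and is consistent with the finiteness argument immediately preceding the corollary).

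For claim (i) one can either invoke that $\{\pi(\Delta)\}$ is a probability distribution, being the stationary distribution of the induced DTMC of Section~\ref{sec-PerformanceSP}, or verify it directly: $\sum_{\Delta\ge 1}x^{\Delta-1}=\frac1{1-x}=\frac1m$, so $\sum_{\Delta\ge 0}\pi(\Delta)=\frac{m}{m+p}+\frac{pm}{m+p}\cdot\frac1m=1$. For claim (ii) I would apply the identity $\sum_{n\ge 1}n\,x^{n-1}=\frac1{(1-x)^2}$, which gives $\sum_{\Delta\ge 0}\Delta\,\pi(\Delta)=\frac{pm}{m+p}\cdot\frac1{(1-x)^2}=\frac{pm}{m+p}\cdot\frac1{m^2}=\frac{p}{m(m+p)}$. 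Substituting back $m=p+q_1-2q_1p$ and $m+p=q_1+2p-2q_1p$ yields the stated fraction, and adding the $\beta$ contribution from claim (i) completes the proof.

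This is essentially a bookkeeping exercise; there is no conceptual obstacle once Lemma~\ref{them-StationarySP} is available. The only points requiring a little care are the two algebraic factorizations that collapse the awkward coefficients of Lemma~\ref{them-StationarySP} into the compact form $\frac{pm}{m+p}x^{\Delta-1}$, and noting that $x<1$ so that termwise summation (justified in any case by the absolute-convergence assumption $\sum_{\Delta}\gamma^{\Delta}f(\Delta)<\infty$ from Section~\ref{sec-AoII}) produces the closed form $\frac1{(1-x)^2}$ rather than a divergent series.
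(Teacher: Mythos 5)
Your proof is correct, and the algebra checks out: the factorizations $p^{2}+q_{1}p-2q_{1}p^{2}=p\,m$, $1-(1-q_{1})(1-2p)=m+p$, and $1-x=m$ do collapse Lemma~\ref{them-StationarySP} into the geometric form $\pi(\Delta)=\frac{pm}{m+p}x^{\Delta-1}$, and the standard identities $\sum_{n\ge 1}x^{n-1}=(1-x)^{-1}$, $\sum_{n\ge 1}n\,x^{n-1}=(1-x)^{-2}$ then give $\sum\Delta\pi(\Delta)=p/[m(m+p)]$, matching the claim. However, your route differs from the paper's. The paper does \emph{not} plug the explicit geometric closed form into a series identity; instead it works with the one-step recursion $\pi(\Delta)=(1-q_{1}-p+2q_{1}p)\pi(\Delta-1)$ directly, defines $\Pi\triangleq\sum_{\Delta\ge 2}\pi(\Delta)$ and $\Sigma\triangleq\sum_{\Delta\ge 2}\Delta\pi(\Delta)$, sums the recursion (and the recursion multiplied by $\Delta-1$) over $\Delta$ to obtain two finite linear equations in $\Pi$ and $\Sigma$, and solves those. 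What each buys: your method is shorter and more elementary once Lemma~\ref{them-StationarySP} already hands you the closed form, and it makes the role of the ratio $x<1$ transparent. The paper's summed-recursion technique is more mechanical but has the advantage that it never needs the explicit closed form of $\pi(\Delta)$ at all, which is precisely why the same template is reused in Theorem~\ref{thm-SpecialPerformanceWP} for the weak preemptive policy, where the stationary distribution is far messier and a clean geometric expression is not readily available. Either way, the convergence of the termwise sums is justified by $0<x<1$ (equivalently by the standing assumption $\sum_{\Delta}\gamma^{\Delta}f(\Delta)<\infty$), as you note.
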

\begin{proof}
To deal with the infinite sum, we introduce an auxiliary quantity $\Sigma\triangleq\sum_{\Delta=2}^{\infty}\Delta\pi(\Delta)$. Then, $\bar{\Delta}_{sp} = \alpha(\pi(1) + \Sigma)+\beta$. To obtain the closed-form expression of $\Sigma$, we introduce another auxiliary quantity $\Pi \triangleq \sum_{\Delta=2}^{\infty}\pi(\Delta)$, whose closed-form expression can be obtained using Lemma~\ref{them-StationarySP}. The complete proof can be found in Appendix~\ref{pf-SpecialPerformanceSP} of the supplementary material.
\end{proof}

\subsection{Weak Preemptive Policy}
\begin{definition}[Weak preemptive policy]
The actions suggested by the weak preemptive policy are the same as those suggested by the strong preemptive policy except for the state $s$ with $\Delta>0$ and $i=1$, where the weak preemptive policy suggests staying idle.
\end{definition}
\begin{remark}
We notice that $i=1$ means that the transmitted update differs from the receiver's estimate. Thus, the update will bring new information to the receiver. Also, $\Delta > 0$ means that the receiver's current estimate is incorrect. Hence, the receiver needs a new update to adjust its estimate. The weak preemptive policy will let the transmission continue when $\Delta>0$ and $i=1$. Therefore, the weak preemptive policy is desirable for these two reasons.
\end{remark}
To ease the analysis, we consider the case where the transmission time is capped. More precisely, $q_{t_{max}} = 1$ where $t_{max}$ is the predetermined upper bound on the transmission time.
\begin{remark}
We can always choose a large enough $t_{max}$ such that the probability that an update takes more than $t_{max}$ time slots to be delivered is negligible.
\end{remark}
As we discussed in Section~\ref{sec-PerformanceSP}, the weak preemptive policy induces a DTMC with state space denoted by $\mathcal{S}_{wp}^{MC}$. Without loss of generality, we assume the system starts at state $(0,0,-1)$. Then, $\mathcal{S}_{wp}^{MC}$ consists of all the states that are accessible from state $(0,0,-1)$. For better presentations, we inherit the concept of virtual state introduced in Section~\ref{sec-PerformanceSP}. In the following, we elaborate on $\mathcal{S}_{wp}^{MC}$ and identify the virtual states. We note that when $q_{t_{max}} = 1$, an update can exist in the channel for at most $t_{max}$ time slots. Therefore, when the system is in state $s$ where $t=t_{max}-1$, the next state of the system must be the state $s'$ with $t'=0$ because the update will necessarily be delivered at the end of the $t_{max}$th time slot after the transmission starts or be preempted by the transmitter. Both cases result in the system going to state $s'$ with $t'=0$. Therefore, the system will never be in state $s$ with $t \geq t_{max}$. Then, $\mathcal{S}^{MC}_{wp}$ consists of the following states.
\begin{itemize}
\item $s=(\Delta,0,-1)$ where $\Delta\geq0$. The DTMC will be in this state every time the channel is idle.
\item $s=(\Delta,t,0)$ where $\Delta\geq0$ and $1\leq t\leq t_{max}-1$. Under the weak preemptive policy, the transmitter will preempt the update when $i=0$. Hence, the system will never reach the state with $t>1$ and $i=0$. Hence, $s=(\Delta,t,0)$ where $\Delta\geq0$ and $t>1$ is virtual state. Meanwhile, $i=0$ occurs only when the transmission starts when the AoII is zero. Hence, we know that $s=(\Delta,t,0)$ where $\Delta\geq2$ and $t=1$ is also virtual state.
\item $s=(\Delta,t,1)$ where $\Delta\geq0$ and $1\leq t\leq t_{max}-1$. $i=1$ occurs only when the transmission starts when the AoII is not zero. We recall that, within a single transition, the AoII either increases by $1$ or decreases to zero. Hence, $s=(1,t,1)$ where $1\leq t\leq t_{max}-1$ is virtual state.
\end{itemize}
We can easily conclude that all the states in the induced DTMC, except for the virtual states, communicate with the state $(0,0,-1)$. Hence, the stationary distribution of the induced DTMC exists. 
Let us denote by $\pi(\Delta,t,i)$ as the steady state probability of state $(\Delta,t,i)$. We define the steady state probability for the virtual state as $0$. We notice that the state transitions and the corresponding probabilities depend only on $\Delta$ when $a=1$. Hence, when calculating the stationary distribution, the states at which the weak preemptive policy suggests $a=1$ can be combined based on $\Delta$. Consequently, we define $\sum_{t,i}\pi(0,t,i)\triangleq\pi_0$ and $\pi(\Delta,0,-1) + \sum_{t=1}^{t_{max}-1}\pi(\Delta,t,0)\triangleq\pi_{\Delta}$. For state $s=(\Delta,t,1)$ where $\Delta>0$, we abbreviate the steady state probability as $\pi_\Delta(t)$. Then, the expected AoII achieved by the weak preemptive policy is given by
\[
\bar{\Delta}_{wp} = f(0)\pi_0 + \sum_{\Delta=1}^{\infty}f(\Delta)\left(\pi_\Delta+\sum_{t=1}^{t_{max}-1}\pi_{\Delta}(t)\right).
\]
We first solve for the stationary distribution. Combining with the system dynamics detailed in Appendix~\ref{app-STP} of the supplementary material, the steady state probabilities satisfy the following balance equations.
\begin{multline*}
\pi_0 = (1-p)\pi_0 + \bigg((1-q_1)p+q_1(1-p)\bigg)\sum_{\Delta=1}^{\infty}\pi_\Delta +\\
\sum_{t=1}^{t_{max}-1}\bigg[\bigg((1-q_{t+1})p+q_{t+1}(1-p)\bigg)\sum_{\Delta=1}^{\infty}\pi_\Delta(t)\bigg].
\end{multline*}
\[
\pi_1 = p\pi_0.
\]
\begin{equation}\label{eq-uniformperformance4}
\pi_\Delta = q_1p\pi_{\Delta-1}+\sum_{t=1}^{t_{max}-1}q_{t+1}p\pi_{\Delta-1}(t)\quad\Delta\geq2.
\end{equation}
\[
\pi_1(t) = 0\quad 1\leq t\leq t_{max}-1.
\]
\begin{equation}\label{eq-uniformperformance5}
\pi_\Delta(1) = (1-q_1)(1-p)\pi_{\Delta-1}\quad\Delta\geq2.
\end{equation}
\begin{multline}\label{eq-uniformperformance6}
\pi_{\Delta}(t) = (1-q_{t})(1-p)\pi_{\Delta-1}(t-1)\\
2\leq t\leq t_{max}-1\ and\ \Delta\geq2.
\end{multline}
\[
\pi_0 + \sum_{\Delta=1}^{\infty}\left(\pi_\Delta + \sum_{t=1}^{t_{max}-1}\pi_\Delta(t)\right) = 1.
\]
We notice that there are infinitely many balance equations. To overcome the infinity, we define $\Pi \triangleq \sum_{\Delta=1}^{\infty}\pi_\Delta$ and $\Pi(t) \triangleq \sum_{\Delta=1}^{\infty}\pi_\Delta(t)$. Leveraging the definitions, the balance equations can be rewritten as the following.
\begin{multline*}
\pi_0 = (1-p)\pi_0 + \bigg((1-q_1)p+q_1(1-p)\bigg)\Pi +\\
\sum_{t=1}^{t_{max}-1}\bigg[\bigg((1-q_{t+1})p+q_{t+1}(1-p)\bigg)\Pi(t)\bigg].
\end{multline*}
\[
\pi_1 = p\pi_0.
\]
\begin{equation}\label{eq-uniformperformance1}
\Pi - \pi_1 = q_1p\Pi + \sum_{t=1}^{t_{max}-1}q_{t+1}p\Pi(t).
\end{equation}
\begin{equation}\label{eq-uniformperformance2}
\Pi(1) = (1-q_1)(1-p)\Pi.
\end{equation}
\begin{equation}\label{eq-uniformperformance3}
\Pi(t) = (1-q_{t})(1-p)\Pi(t-1)\quad 2\leq t\leq t_{max}-1.
\end{equation}
\begin{equation}\label{eq-uniformperformance7}
\pi_0 + \Pi + \sum_{t=1}^{t_{max}-1}\Pi(t) = 1.
\end{equation}
Note that \eqref{eq-uniformperformance1}, \eqref{eq-uniformperformance2}, and \eqref{eq-uniformperformance3} is obtained by summing \eqref{eq-uniformperformance4}, \eqref{eq-uniformperformance5}, and \eqref{eq-uniformperformance6} over $\Delta$ from $2$ to $\infty$, respectively. Then, we solve the above system of linear equations.
\begin{lemma}\label{lem-StationaryWP}
The closed-form expression of $\Pi$ is given by \eqref{eq-EquivalentEq1}
\begin{figure*}[!t]
\normalsize
\begin{equation}\label{eq-EquivalentEq1}
\Pi = \ddfrac{1}{\frac{1}{p}-q_1 - \left[\sum_{t=1}^{t_{max}-1}q_{t+1}\left(\prod_{l=1}^t\mathcal{P}_l\right)\right] + 1 + \sum_{t=1}^{t_{max}-1}\left(\prod_{l=1}^t\mathcal{P}_l\right)}.
\end{equation}
\hrulefill
\vspace*{4pt}
\end{figure*}
and
\[
\Pi(t) = \prod_{l=1}^t\mathcal{P}_l\Pi \quad 1\leq t\leq t_{max}-1,
\]
where
\[
\mathcal{P}_t = (1-q_t)(1-p).
\]
\end{lemma}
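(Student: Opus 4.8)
The plan is to solve the finite system of balance equations \eqref{eq-uniformperformance1}--\eqref{eq-uniformperformance7}, together with $\pi_1 = p\pi_0$, by expressing $\pi_0$, $\pi_1$, and every $\Pi(t)$ as multiples of $\Pi$, and then pinning down $\Pi$ itself via the normalization condition \eqref{eq-uniformperformance7}.

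First I would dispatch the recursions \eqref{eq-uniformperformance2} and \eqref{eq-uniformperformance3}. These constitute a first-order recursion in $t$ with initial value $\Pi(1) = (1-q_1)(1-p)\Pi = \mathcal{P}_1\Pi$, so a one-line induction (equivalently, telescoping the product) gives $\Pi(t) = \prod_{l=1}^{t}(1-q_l)(1-p)\,\Pi = \prod_{l=1}^{t}\mathcal{P}_l\,\Pi$ for $1\le t\le t_{max}-1$, which is precisely the second claimed identity.

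Next I would eliminate $\pi_0$. Substituting $\pi_1 = p\pi_0$ into \eqref{eq-uniformperformance1} and then inserting the expressions for $\Pi(t)$ just obtained yields
\[
\Pi - p\pi_0 = q_1 p\,\Pi + p\sum_{t=1}^{t_{max}-1} q_{t+1}\Big(\prod_{l=1}^{t}\mathcal{P}_l\Big)\Pi,
\]
so that $\pi_0 = \Big(\tfrac{1}{p} - q_1 - \sum_{t=1}^{t_{max}-1} q_{t+1}\prod_{l=1}^{t}\mathcal{P}_l\Big)\Pi$. Plugging this together with the expressions for $\Pi(t)$ into the normalization condition \eqref{eq-uniformperformance7}, every summand becomes a multiple of $\Pi$; collecting the coefficients reproduces exactly the bracketed denominator in \eqref{eq-EquivalentEq1}, and dividing through gives the stated closed form for $\Pi$. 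Since the existence of the stationary distribution has already been established, the denominator is automatically positive, so no separate positivity check is needed.

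I do not expect a genuine obstacle here: the argument is elementary linear algebra. The one balance equation not used above — the equation for $\pi_0$ — is, as always for the balance equations of an irreducible DTMC, linearly dependent on the remaining ones, so it may be omitted (or verified by the same substitutions as a consistency check). The only mild bookkeeping is to keep the telescoping product $\prod_{l=1}^{t}\mathcal{P}_l$ consistent across the three places it appears — in $\Pi(t)$, in $\pi_0$, and in the normalization — and to track carefully the index shift between $q_t$ and $q_{t+1}$ induced by \eqref{eq-uniformperformance1}.
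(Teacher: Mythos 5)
Your proposal is correct and follows essentially the same route as the paper's proof: telescope the recursion \eqref{eq-uniformperformance2}--\eqref{eq-uniformperformance3} to express $\Pi(t)=\prod_{l=1}^{t}\mathcal{P}_l\,\Pi$, use \eqref{eq-uniformperformance1} with $\pi_1=p\pi_0$ to write $\pi_0$ as a multiple of $\Pi$, and then substitute into the normalization \eqref{eq-uniformperformance7} to solve for $\Pi$. Your side remark that the $\pi_0$-balance equation is redundant and that positivity of the denominator is automatic from existence of the stationary distribution is sound and slightly more explicit than the paper.
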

\begin{proof}
The complete proof can be found in Appendix~\ref{pf-StationaryWP} of the supplementary material.
\end{proof}
 In the following, we consider $f(\Delta) = \alpha\Delta+\beta$ to facilitate the analysis. As we will see later, $\Pi$ and $\Pi(t)$ for $1\leq t\leq t_{max}-1$ are sufficient to calculate $\bar{\Delta}_{wp}$ when $f(\Delta) = \alpha\Delta+\beta$. In this case,
\[
\bar{\Delta}_{wp} = \alpha\sum_{\Delta=1}^{\infty}\left(\Delta\left[\pi_\Delta+\sum_{t=1}^{t_{max}-1}\pi_\Delta(t)\right]\right)+\beta.
\]
Similar to what we did in Corollary~\ref{cor-SpecialPerformanceSP}, we define $\Sigma \triangleq \sum_{\Delta=1}^{\infty}\Delta\pi_\Delta$ and $\Sigma(t) \triangleq \sum_{\Delta=1}^{\infty}\Delta\pi_\Delta(t)$ for $1\leq t\leq t_{max}-1$ to avoid the infinite sum. Then,
\begin{equation}\label{eq-AoIIwp}
\bar{\Delta}_{wp} = \alpha\left(\Sigma + \sum_{t=1}^{t_{max}-1}\Sigma(t)\right)+\beta.
\end{equation}
Hence, it is sufficient to obtain the closed-form expressions of $\Sigma$ and $\Sigma(t)$ for $1\leq t\leq t_{max}-1$.
\begin{theorem}\label{thm-SpecialPerformanceWP}
When $f(\Delta) = \alpha\Delta+\beta$,
\[
\bar{\Delta}_{wp} =\alpha\left(\Sigma + \sum_{t=1}^{t_{max}-1}\Sigma(t)\right)+\beta,
\]
where
\[
\Sigma = \ddfrac{\Pi + \sum_{t=1}^{t_{max}-1}\left\{p_{t+1}p\left[\sum_{i=1}^t\left(\prod_{j=i+1}^t\mathcal{P}_j\right)\Pi(i)\right]\right\}}{1-p_1p-\sum_{t=1}^{t_{max}-1}\left[p_{t+1}p\left(\prod_{l=1}^t\mathcal{P}_l\right)\right]},
\]
and for each $\leq t\leq t_{max}-1$,
\[
\Sigma(t) = \left(\prod_{l=1}^{t}\mathcal{P}_l\right)\Sigma + \sum_{i=1}^{t}\left[\left(\prod_{j=i+1}^t
\mathcal{P}_j\right)\Pi(i)\right].
\]
\end{theorem}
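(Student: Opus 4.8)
The plan is to reduce the infinite system of balance equations — whose unknowns are $\{\pi_\Delta\}_{\Delta\ge 1}$ together with $\{\pi_\Delta(t)\}_{\Delta\ge 1}$ for $1\le t\le t_{max}-1$ — to a finite linear system for the first-moment sums $\Sigma=\sum_{\Delta\ge 1}\Delta\,\pi_\Delta$ and $\Sigma(t)=\sum_{\Delta\ge 1}\Delta\,\pi_\Delta(t)$, exactly paralleling the way Lemma~\ref{lem-StationaryWP} reduces the zeroth-moment sums $\Pi$ and $\Pi(t)$. Since $f(\Delta)=\alpha\Delta+\beta$, equation~\eqref{eq-AoIIwp} already rewrites $\bar\Delta_{wp}$ as $\alpha\big(\Sigma+\sum_{t}\Sigma(t)\big)+\beta$, so the entire content of the theorem is the closed-form evaluation of $\Sigma$ and the $\Sigma(t)$'s. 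Before manipulating series I would note that the recursions \eqref{eq-uniformperformance4}--\eqref{eq-uniformperformance6} constitute a finite-dimensional substochastic vector recursion in the index $\Delta$, so $\pi_\Delta$ and $\pi_\Delta(t)$ decay geometrically; this guarantees absolute convergence of every $\Delta$-weighted sum and legitimizes the term-by-term summations below.

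I would first compute the $\Sigma(t)$'s, because \eqref{eq-uniformperformance5}--\eqref{eq-uniformperformance6} peel off one value of $t$ at a time. Multiplying \eqref{eq-uniformperformance6} by $\Delta$, summing over $\Delta\ge 2$, and using $\pi_1(t)=0$ together with the index-shift identity $\sum_{\Delta\ge 2}\Delta\,\pi_{\Delta-1}(t-1)=\Sigma(t-1)+\Pi(t-1)$ yields $\Sigma(t)=\mathcal{P}_t\big(\Sigma(t-1)+\Pi(t-1)\big)$ for $t\ge 2$; the same treatment of \eqref{eq-uniformperformance5} gives $\Sigma(1)=\mathcal{P}_1(\Sigma+\Pi)$. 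Unrolling this one-step recursion and substituting the closed form $\Pi(i)=\big(\prod_{l=1}^{i}\mathcal{P}_l\big)\Pi$ from Lemma~\ref{lem-StationaryWP} collapses every $\Pi$-contribution into a multiple of $\big(\prod_{l=1}^{t}\mathcal{P}_l\big)\Pi$, producing precisely $\Sigma(t)=\big(\prod_{l=1}^{t}\mathcal{P}_l\big)\Sigma+\sum_{i=1}^{t}\big(\prod_{j=i+1}^{t}\mathcal{P}_j\big)\Pi(i)$.

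Next I would pin down $\Sigma$ itself. Multiplying \eqref{eq-uniformperformance4} by $\Delta$ and summing over $\Delta\ge 2$, using the shift identities $\sum_{\Delta\ge 2}\Delta\,\pi_\Delta=\Sigma-\pi_1$, $\sum_{\Delta\ge 2}\Delta\,\pi_{\Delta-1}=\Sigma+\Pi$ and $\sum_{\Delta\ge 2}\Delta\,\pi_{\Delta-1}(t)=\Sigma(t)+\Pi(t)$, gives a single scalar equation linking $\Sigma$, the $\Sigma(t)$'s and $\pi_1$. Into it I would substitute the expression for $\Sigma(t)$ just found, and eliminate $\pi_1$ via $\pi_1=p\pi_0$ and the aggregated balance equation \eqref{eq-uniformperformance1}, which conveniently expresses $\pi_1$ as $\Pi$ times the very factor that will become the denominator. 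Collecting all $\Sigma$-terms on the left and dividing through leaves a ratio whose numerator is a combination of $\Pi$ and the $\Pi(i)$'s and whose denominator is one minus the delivery/preemption weights; rewriting the coefficients through $q_1=p_1$, $q_{t+1}=p_{t+1}\big/\big(1-\sum_{i=1}^{t}p_i\big)$ and the telescoping product $\prod_{l=1}^{t}(1-q_l)=1-\sum_{i=1}^{t}p_i$ then brings it to the stated form of $\Sigma$. Substituting back into the formula for $\Sigma(t)$ and into \eqref{eq-AoIIwp} completes the proof.

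I expect the main obstacle to be organizational rather than conceptual: carrying the nested finite sums over $t$ (with the inner index $i$ and the sliding product limits $\prod_{j=i+1}^{t}$) correctly through the substitution of $\Sigma(t)$ into the $\Sigma$-equation, and then performing the $q\leftrightarrow p$ rewriting so that the coefficients telescope into the compact closed form — a place where off-by-one and sign errors in product limits are easy to introduce. A secondary, more routine point is the justification of the term-by-term summation of the $\Delta$-weighted series, which rests on the geometric tail of the stationary distribution (equivalently, on the finiteness of $\bar\Delta_{wp}$, itself a consequence of the assumptions on $f$).
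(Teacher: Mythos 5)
Your plan matches the paper's proof in substance. The paper likewise reduces the infinite balance system to finite linear relations for the first-moment sums $\Sigma$ and $\Sigma(t)$ by multiplying \eqref{eq-uniformperformance4}--\eqref{eq-uniformperformance6} by a $\Delta$-weight, summing over $\Delta$, and unrolling the resulting one-step recursion before substituting back into \eqref{eq-AoIIwp}. The only technical difference is cosmetic: the paper multiplies by $\Delta-1$ rather than $\Delta$, which makes the left side of each summed equation read off immediately as $\Sigma(t)-\Pi(t)$ (respectively $\Sigma-\Pi$) and thereby avoids the $\pi_1$-term you later eliminate via \eqref{eq-uniformperformance1}. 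Both weightings yield the same recursion $\Sigma(t)=\mathcal{P}_t\Sigma(t-1)+\Pi(t)$ (after using $\Pi(t)=\mathcal{P}_t\Pi(t-1)$) and the same scalar equation $\Sigma-\Pi=q_1 p\,\Sigma+\sum_{t}q_{t+1}p\,\Sigma(t)$, and your remark on the geometric tail legitimizing term-by-term summation is a valid supplement the paper leaves implicit.

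One caution about your final step. You propose to ``perform the $q\leftrightarrow p$ rewriting so that the coefficients telescope into the compact closed form.'' That conversion does not close: the derivation produces numerator coefficients $q_{t+1}p$ and denominator $1-q_1 p-\sum_{t}q_{t+1}p\prod_{l=1}^{t}\mathcal{P}_l$, and the telescoping identity $q_{t+1}\prod_{l=1}^{t}(1-q_l)=p_{t+1}$ acts on the factors $(1-q_l)$, not on $\mathcal{P}_l=(1-q_l)(1-p)$, so it does not transform these into the $p_{t+1}p$ coefficients displayed in the statement. The likely resolution is that the $p_1,p_{t+1}$ appearing in the theorem (and in the paper's own \eqref{eq-uniformperformance11}--\eqref{eq-uniformperformance12}) are typographical slips for $q_1,q_{t+1}$, consistent with the balance equation \eqref{eq-uniformperformance4} from which they are derived. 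Read that way, your derivation is correct in the $q$-form and should simply stop there rather than attempt a $q\leftrightarrow p$ conversion.
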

\begin{proof}
The closed-form expression of $\Sigma$ and $\Sigma(t)$ for $1\leq t\leq t_{max}-1$ can be obtained using $\Pi$ and $\Pi(t)$ for $1\leq t\leq t_{max}-1$. The complete proof can be found in Appendix~\ref{pf-SpecialPerformanceWP} of the supplementary material.
\end{proof}

\section{Optimal Policy}\label{sec-OptimalPolicy}
In this section, we first prove the existence of the optimal policy. Then, we provide a feasible relative value iterative algorithm that approximates the optimal policy. Next, using the policy improvement theorem, we analyze the optimization problem \eqref{eq-Goal} and theoretically find the optimal policy when the transmission delay follows the Geometric distribution and the Zipf distribution, respectively.

\subsection{Existence of the Optimal Policy}
For the $\mathcal{M}$ in Section~\ref{sec-MDP}, we define the expected $\gamma$-discounted cost under policy $\psi$ as
\[
V_{\psi,\gamma}(s) = \mathbb{E}_{\psi}\left[\sum_{k=0}^{\infty}\gamma^tC(s_k)\mid s\right],
\]
where $0 < \gamma < 1$ is a discount factor and $s_k$ is the state of $\mathcal{M}$ at time $k$. Let $V_{\gamma}(s)$ be the value function associated with $\mathcal{M}$ under $\gamma$-discounted cost. Then, we know that $V_{\gamma}(s) = inf_{\psi} V_{\psi,\gamma}(s)$. Moreover, $V_{\gamma}(s)$ satisfies the Bellman equation.
\[
V_{\gamma}(s) = \min_{a\in\mathcal{A}}\left\lbrace C(s)+\gamma\sum_{s'\in\mathcal{S}}P_{s,s'}(a)V_{\gamma}(s')\right\rbrace\quad s\in\mathcal{S}.
\]
The value iteration algorithm is one of the most commonly used algorithms to calculate the value function. Let $V_{\gamma,\nu}(s)$ be the estimated value function at iteration $\nu$. Then, the estimated value function is updated in the following way.
\begin{equation}\label{eq-ValueIterationGammaDiscount}
V_{\gamma,\nu+1}(s) = \min_{a\in\mathcal{A}}\left\lbrace C(s)+\gamma\sum_{s'\in\mathcal{S}}P_{s,s'}(a)V_{\gamma,\nu}(s')\right\rbrace\quad s\in\mathcal{S}.
\end{equation}
Without loss of generality, we initialize $V_{\gamma,0}(s)=0$ for $s\in\mathcal{S}$. Then, we can prove the following lemma.
\begin{lemma}\label{lem-Converge}
When updated following \eqref{eq-ValueIterationGammaDiscount}, $\lim_{\nu\rightarrow\infty}V_{\gamma,\nu}(s)= V_{\gamma}(s)$ for $s\in\mathcal{S}$.
\end{lemma}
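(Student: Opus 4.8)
The plan is to bypass the textbook bounded-cost contraction argument, which does not apply directly here because $\mathcal{S}$ is countably infinite and the immediate cost $C(s)=f(\Delta)$ is unbounded, and instead to sandwich the iterate $V_{\gamma,\nu}(s)$ between $V_{\gamma}(s)$ and $V_{\gamma}(s)$ minus a tail term that vanishes as $\nu\to\infty$. The summability assumption $\sum_{\Delta=0}^{\infty}\gamma^{\Delta}f(\Delta)<\infty$ from Section~\ref{sec-AoII} is precisely what makes that tail controllable.

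First I would record the elementary fact that, under the initialization $V_{\gamma,0}\equiv 0$, the iterate produced by \eqref{eq-ValueIterationGammaDiscount} coincides with the optimal $\nu$-stage $\gamma$-discounted cost,
\[
V_{\gamma,\nu}(s)=\inf_{\psi\in\Psi}\mathbb{E}_{\psi}\!\left[\,\sum_{k=0}^{\nu-1}\gamma^{k}C(s_k)\,\middle|\,s\right],
\]
which follows by a routine backward induction over the finite horizon (no measurability subtleties arise since $\mathcal{A}$ is finite and $\mathcal{S}$ is countable). Because $C(\cdot)\ge 0$, dropping the terms with index $k\ge\nu$ immediately gives $V_{\gamma,\nu}(s)\le V_{\gamma}(s)$ for all $\nu$ and $s$; and applying the monotone Bellman operator repeatedly to $V_{\gamma,0}\equiv 0$ shows by induction that $\nu\mapsto V_{\gamma,\nu}(s)$ is nondecreasing, so the limit exists in $[0,\infty]$.

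The key estimate is a policy-independent bound on the cost accrued in the distant future. By the AoII dynamics, $\Delta_k$ increases by at most one per time slot, so starting from $s=(\Delta_0,t,i)$ we have $\Delta_k\le\Delta_0+k$ along every sample path under every admissible policy, whence $C(s_k)=f(\Delta_k)\le f(\Delta_0+k)$ because $f$ is nondecreasing. Therefore, for every $\psi$,
\[
\mathbb{E}_{\psi}\!\left[\,\sum_{k=\nu}^{\infty}\gamma^{k}C(s_k)\,\middle|\,s\right]\le\sum_{k=\nu}^{\infty}\gamma^{k}f(\Delta_0+k)=\gamma^{-\Delta_0}\sum_{m=\Delta_0+\nu}^{\infty}\gamma^{m}f(m)=:\varepsilon_{\nu}(s),
\]
which is a tail of the convergent series $\sum_{m\ge 0}\gamma^{m}f(m)$ and hence satisfies $\varepsilon_{\nu}(s)\to 0$ as $\nu\to\infty$; the case $\nu=0$ of the same bound gives $V_{\gamma}(s)<\infty$. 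Splitting any policy's infinite-horizon cost at stage $\nu$ and using this bound yields $\mathbb{E}_{\psi}[\sum_{k=0}^{\nu-1}\gamma^{k}C(s_k)\mid s]\ge V_{\gamma}(s)-\varepsilon_{\nu}(s)$ for every $\psi$; taking the infimum over $\psi$ gives $V_{\gamma}(s)-\varepsilon_{\nu}(s)\le V_{\gamma,\nu}(s)\le V_{\gamma}(s)$, and letting $\nu\to\infty$ proves the lemma. (Equivalently, one can pass to the limit in \eqref{eq-ValueIterationGammaDiscount} by monotone convergence — legitimate since $\mathcal{A}$ is finite — to see that $\lim_{\nu}V_{\gamma,\nu}$ solves the Bellman equation, then invoke the same sandwich to identify it with $V_{\gamma}$; a weighted sup-norm contraction with weight growing like $\rho^{-\Delta}$ for a suitable $\rho\in(\gamma,1)$ is yet another route.)

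The only delicate point, and the main obstacle, is the unboundedness of $C$: one must justify the interchanges of limit, expectation, and infinite summation, and this is exactly what the uniform dominating bound $C(s_k)\le f(\Delta_0+k)$ combined with the assumed geometric summability of $f$ supplies. Everything else is bookkeeping.
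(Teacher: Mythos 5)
Your proof is correct, and it takes a genuinely different route from the paper's. The paper's proof is essentially a reduction: it cites Propositions~1 and~3 of~\cite{b16} to assert that convergence of the iterates to $V_\gamma$ holds once $V_\gamma(s)<\infty$, and then verifies finiteness via the pathwise bound $C(s_k)=f(\Delta_k)\le f(\Delta+k)$ together with the assumed summability of $\sum_\Delta \gamma^\Delta f(\Delta)$. You prove the convergence from scratch: you identify $V_{\gamma,\nu}$ with the optimal $\nu$-stage discounted cost, use nonnegativity to get monotonicity and the upper bound $V_{\gamma,\nu}\le V_\gamma$, and then supply the lower bound $V_{\gamma,\nu}\ge V_\gamma-\varepsilon_\nu(s)$ by truncating any policy's infinite-horizon cost and controlling the tail uniformly over policies. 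The crucial ingredient is the same in both proofs --- the uniform pathwise growth bound $\Delta_k\le\Delta_0+k$ and the geometric summability of $f$ --- but you use it to bound the tail remainder $\varepsilon_\nu(s)\to 0$, while the paper uses only the $\nu=0$ instance of this bound (finiteness) and outsources the convergence machinery to~\cite{b16}. The trade-off: your argument is self-contained, elementary, and makes the mechanism transparent (in particular, it makes explicit that the convergence is monotone and gives a quantitative rate through $\varepsilon_\nu$); the paper's is shorter and leans on standard results for unbounded-cost positive MDPs. Both are sound, and both ultimately hinge on the same domination estimate that justifies the exchange of limits with the infinite sum.
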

\begin{proof}
According to~\cite[Propositions 1 and 3]{b16}, it is sufficient to show that $V_{\gamma}(s)$ is finite. To this end, we have
\begin{align*}
V_{\psi,\gamma}(s) = & \mathbb{E}_{\psi}\left[\sum_{k=0}^{\infty}\gamma^kC(s_k)\ |\ s\right]\\
\leq & \sum_{k=0}^{\infty}\gamma^k f(\Delta+k) =\ddfrac{1}{\gamma^{\Delta}}\sum_{k=\Delta}^{\infty}\gamma^{k} f(k) \\
\leq & \frac{1}{\gamma^{\Delta}}\sum_{k=0}^{\infty}\gamma^kf(k)< \infty.
\end{align*}
The finiteness is guaranteed by the assumption on $f(\Delta)$ introduced in Section~\ref{sec-AoII}. Then, by definition, we have $V_{\gamma}(s)\leq V_{\gamma,\psi}(s)<\infty$. Hence, we can conclude that the value iteration reported in \eqref{eq-ValueIterationGammaDiscount} will converge to the value function.
\end{proof}
Leveraging the iterative nature of the value iteration algorithm, we can prove the following structural property of $V_{\gamma}(s)$.
\begin{lemma}\label{lem-Monotone}
$V_{\gamma}(s)$ is non-decreasing in $\Delta>0$.
\end{lemma}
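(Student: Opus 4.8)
The plan is to prove monotonicity of $V_\gamma(\cdot)$ in $\Delta$ by induction on the value iteration index $\nu$, using Lemma~\ref{lem-Converge} to transfer the property to the limit $V_\gamma$. Concretely, I would show that each iterate $V_{\gamma,\nu}(s)$ is non-decreasing in $\Delta>0$ (with $t$ and $i$ held fixed, and also across the relevant pairs of states the policy couples), and then take $\nu\to\infty$. The base case $\nu=0$ is immediate since $V_{\gamma,0}\equiv 0$. For the inductive step, I would fix two states $s=(\Delta,t,i)$ and $\tilde s=(\Delta+1,t,i)$, write out the right-hand side of \eqref{eq-ValueIterationGammaDiscount} for each, and argue that for every admissible action $a$ the quantity $C(s)+\gamma\sum_{s'}P_{s,s'}(a)V_{\gamma,\nu}(s')$ is no larger than the corresponding quantity at $\tilde s$ with the \emph{same} action; taking the minimum over $a$ then preserves the inequality.

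The immediate-cost term behaves correctly because $C(s)=f(\Delta)$ and $f$ is non-decreasing by the first assumption on the time penalty function in Section~\ref{sec-AoII}. The crux is therefore a \emph{stochastic-dominance} argument on the successor states: I would show that, for a fixed action $a$, the distribution of the next $\Delta'$ starting from $(\Delta+1,t,i)$ stochastically dominates that starting from $(\Delta,t,i)$, while the $t$- and $i$-components of the transition are unaffected by $\Delta$. Inspecting the dynamics in \eqref{eq-AoIIDynamic} and the transition probabilities from Appendix~\ref{app-STP}, the successor AoII is either $0$ (with a probability that does not depend on $\Delta$, since it is governed only by the source transition $p$, the delivery event $q_t$, and whether the in-flight update matches) or $\Delta+1$. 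Hence from $(\Delta+1,t,i)$ the non-zero successor is $\Delta+2$ versus $\Delta+1$ from $(\Delta,t,i)$, with identical "reset to $0$" probability; by the induction hypothesis $V_{\gamma,\nu}(\Delta+2,t',i')\ge V_{\gamma,\nu}(\Delta+1,t',i')$, and $V_{\gamma,\nu}(0,t',i')$ appears with equal weight on both sides, so the weighted sums compare in the desired direction.

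The main obstacle I anticipate is bookkeeping around the states where the AoII resets and the action space changes — in particular states with $\Delta=0$ or $\Delta=1$ and the "virtual" configurations — where the naive pairing $(\Delta,t,i)\leftrightarrow(\Delta+1,t,i)$ may land on states with different feasible actions or different $i$-labels (recall $i=0$ can only occur when transmission began at AoII $0$). I would handle this by treating the boundary $\Delta>0$ carefully: the claim is only asserted for $\Delta>0$, so I only need the pairing for $\Delta\ge 1$, and I would verify case-by-case (channel idle $i=-1$; channel busy with $i=1$; the preemption action $a=1$, whose successor distribution depends only on $\Delta$) that the successor states encountered are all legitimate states of $\mathcal S$ with $\Delta'\in\{0,\Delta'_{\max}\}$ as described, so the induction hypothesis applies. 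Once the per-action inequality is established uniformly in $a$, taking the minimum and then $\nu\to\infty$ via Lemma~\ref{lem-Converge} completes the proof.
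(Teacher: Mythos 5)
Your proposal is correct and follows essentially the same route as the paper: induction on the value-iteration index $\nu$, using the structural facts that from any state with $\Delta>0$ the successor $\Delta'$ is either $0$ (with a $\Delta$-independent probability) or $\Delta+1$, that $C(s)=f(\Delta)$ is non-decreasing, and then passing to the limit via Lemma~\ref{lem-Converge}. The ``stochastic dominance'' framing and the extra boundary bookkeeping you flag are just a rephrasing and a more explicit handling of the same per-action comparison the paper performs.
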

\begin{proof}
Given the convergence proved in Lemma~\ref{lem-Converge}, the monotonicity of $V_{\gamma}(s)$ can be proved via mathematical induction. The complete proof can be found in Appendix~\ref{pf-Monotone} of the supplementary material.
\end{proof}
Now, we proceed with showing the existence of the optimal policy. To this end, we first define the stationary policy.
\begin{definition}[Stationary policy]
A stationary policy specifies a single action in each time slot.
\end{definition}
\begin{theorem}\label{thm-optimalexit}
There exists a stationary policy $\psi$ that is optimal for $\mathcal{M}$. Moreover, the minimum expected AoII is independent of the initial state.
\end{theorem}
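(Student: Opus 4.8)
The plan is to invoke the standard theory of average-cost MDPs on a countable state space, verifying the sufficient conditions that guarantee a stationary optimal policy together with a state-independent optimal cost. I would follow the vanishing-discount approach: one studies the family of $\gamma$-discounted problems, already shown in Lemma~\ref{lem-Converge} to be well-posed with finite value functions $V_\gamma$, and takes $\gamma\to 1$. The concrete sufficient conditions I would aim to check are those of Sennott (or equivalently the conditions in \cite{b16} and related references): (SEN1) $V_\gamma(s)$ is finite for every $s$ and $\gamma$; (SEN2) there is a reference state, naturally $s_0=(0,0,-1)$, such that the relative value function $h_\gamma(s)\triangleq V_\gamma(s)-V_\gamma(s_0)$ is bounded above by a function $M(s)$ with $\sum_{s'}P_{s,s'}(a)M(s')<\infty$ for all $s,a$; and (SEN3) $h_\gamma(s)\geq -L$ for some constant $L\geq 0$ uniformly in $\gamma$ and $s$. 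Once these hold, a standard limiting argument produces a bounded-below solution $(h,\theta)$ to the average-cost optimality (Bellman) equation, the minimizing action in that equation defines a stationary optimal policy, and $\theta$ — the optimal average cost — does not depend on the initial state.

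First I would establish the lower bound (SEN3): since $C(s)=f(\Delta)\geq f(0)\geq 0$ and the discounted value is a sum of nonnegative costs, $V_\gamma(s)\geq 0$, and $V_\gamma(s_0)$ is finite, so one needs $V_\gamma(s_0)$ bounded uniformly in $\gamma$ away from $1$; this follows because the strong preemptive policy (Definition~\ref{def-SP}) is a fixed stationary policy whose average cost $\bar\Delta_{sp}$ is finite (shown after Lemma~\ref{them-StationarySP}), so $V_\gamma(s_0)\leq V_{sp,\gamma}(s_0)\leq \bar\Delta_{sp}/(1-\gamma)\cdot(1-\gamma)$-type bounds give $\liminf (1-\gamma)V_\gamma(s_0)\leq \bar\Delta_{sp}<\infty$, and more to the point $h_\gamma(s)=V_\gamma(s)-V_\gamma(s_0)\geq -V_\gamma(s_0)\geq -V_{sp,\gamma}(s_0)$, which one bounds uniformly. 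Then for the upper bound (SEN2) I would again use the strong preemptive policy as a comparison: starting from an arbitrary state $s=(\Delta,t,i)$, running the strong preemptive policy reaches the reference state $s_0$ (or the regeneration structure of its induced DTMC) quickly, and the extra discounted cost incurred before regeneration is controlled by $\sum_{k\geq 0}\gamma^k f(\Delta+k)\leq \gamma^{-\Delta}\sum_{k\geq 0}\gamma^k f(k)$, which for $\gamma$ bounded away from $0$ (say $\gamma\in[\tfrac12,1)$) is dominated by a function $M(\Delta,t,i)$ depending only on $\Delta$ through $C\cdot 2^{\Delta}$-type growth controlled by the summability assumption $\sum_\Delta \gamma^\Delta f(\Delta)<\infty$; one must then check $\sum_{s'}P_{s,s'}(a)M(s')<\infty$, which holds because in one step $\Delta$ either drops to $0$ or increases by exactly $1$, so $M$ at the successor states is at worst $M$ evaluated at $\Delta+1$, still finite.

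The main obstacle I expect is verifying (SEN2) with a genuinely valid majorant $M$: because the transmission delay $T$ may be unbounded (e.g.\ Geometric), the state component $t$ is unbounded and one must be careful that the bound on $h_\gamma$ does not blow up with $t$, and that the expected one-step value of $M$ stays finite even when the "deliver vs.\ preempt" branching interacts with the tail of $T$. The clean way around this is to note that, under any policy that preempts whenever the channel is busy, $t$ effectively never exceeds $1$, so $M$ need only be defined on states reachable under the comparison policy and extended arbitrarily (say by the same $\Delta$-dependent formula) elsewhere; the crucial inequality $h_\gamma(s)\leq \mathbb{E}[\text{cost to return to }s_0\text{ under strong preemptive}]$ then only ever references such states, and the return-cost expectation is finite by the same argument that proved $\bar\Delta_{sp}<\infty$. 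Finally, having produced $(h,\theta)$ solving $h(s)+\theta=\min_a\{C(s)+\sum_{s'}P_{s,s'}(a)h(s')\}$, I would define $\psi$ by the minimizing action at each $s$; standard verification (e.g.\ \cite[][]{b16}) shows this stationary $\psi$ achieves average cost $\theta$ from every initial state and is optimal, which gives both conclusions of the theorem. I would defer the routine estimates to an appendix, as the paper does elsewhere.
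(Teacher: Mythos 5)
Your high-level strategy — vanishing discount, verify Sennott-type conditions for the MDP on a countable state space, use the strong preemptive policy as the comparison/anchor policy — is the same one the paper takes, which it in turn imports wholesale from [Theorem 4][b11] and the conditions of \cite{b16}. So the framing is right, and your remarks on how preemption collapses the $t$-dimension (so the unbounded-delay issue is harmless) are exactly the observation one needs.

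However, your argument for the lower bound (SEN3) as written has a genuine gap. You claim $h_\gamma(s)=V_\gamma(s)-V_\gamma(s_0)\geq -V_{sp,\gamma}(s_0)$ and that the right-hand side is bounded uniformly in $\gamma$. It is not: by the Abelian relation you yourself invoke, $(1-\gamma)V_{sp,\gamma}(s_0)\to\bar\Delta_{sp}$, so $V_{sp,\gamma}(s_0)\sim\bar\Delta_{sp}/(1-\gamma)\to\infty$ as $\gamma\to1$. The bounded quantity is $(1-\gamma)V_{sp,\gamma}(s_0)$, not $V_{sp,\gamma}(s_0)$ itself, and the latter is what appears in your SEN3 estimate. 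The standard repair, and the thing the paper actually singles out as one of its two ``important intermediate results,'' is the monotonicity of the value function: Lemma~\ref{lem-Monotone} shows $V_\gamma$ is non-decreasing in $\Delta>0$, and this (together with a comparison across the $\Delta=0$ and $\Delta=1$ boundary via the Bellman equation) establishes that the reference state $(0,0,-1)$ essentially minimizes $V_\gamma$, giving $h_\gamma(s)\geq -L$ for some $L$ independent of $\gamma$ with no dependence on $V_{sp,\gamma}(s_0)$ at all. Your proposal omits the monotonicity lemma entirely, which is precisely the tool that makes SEN3 go through.

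A related but smaller issue: your first candidate majorant $M(s)\approx\gamma^{-\Delta}\sum_k\gamma^k f(k)$ for SEN2 is also not uniform in $\gamma$, because $\sum_k\gamma^k f(k)$ need not stay bounded as $\gamma\to1$ (the time penalty $f$ is only assumed to satisfy $\sum_\Delta\gamma^\Delta f(\Delta)<\infty$ for each fixed $\gamma<1$, and $f$ may diverge). You correct course in the same paragraph by switching to the undiscounted expected cost-to-return to $s_0$ under strong preemption, which is the right majorant (finite precisely because the induced chain is ergodic with finite mean cost), so this one is a self-repairing slip rather than a hole. But you should lead with the hitting-cost bound rather than the $\gamma$-dependent geometric estimate, since only the former verifies SEN2.
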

\begin{proof}
The proof follows the same steps as in~\cite[Theorem 4]{b11}. We define $h_{\gamma}(s)\triangleq V_{\gamma}(s)- V_{\gamma}(s^{ref})$ as the relative value function, where $s^{ref}$ is the reference state. Note that the reference state is arbitrary but fixed. Then, we show that $\mathcal{M}$ satisfies the two conditions given in~\cite{b16}. To avoid excessive repetition of the proof, we omit the specific reasoning and give only the proofs of two important intermediate results.
\begin{enumerate}
\item $h_{\gamma}(s)$ is non-decreasing in $\Delta$ when $\Delta>0$. The result follows directly from Lemma~\ref{lem-Monotone} because the reference state is fixed.
\item There exists a policy $\psi$ that induces an irreducible ergodic Markov chain, and the expected cost is finite. $\psi$ can be the strong preemptive policy. Then, the result is true as discussed in Section~\ref{sec-PerformanceSP}.
\end{enumerate}
Using these two results, we can verify that $\mathcal{M}$ satisfies the two conditions given in~\cite{b16}. Then, the existence of the optimal policy is guaranteed by~\cite[Theorem]{b16}. Moreover, the minimum expected cost is independent of the initial state.
\end{proof}

\subsection{Value Iteration Algorithm}\label{sec-VIA}
In this section, we present the relative value iteration (RVI) algorithm that approximates the optimal policy for $\mathcal{M}$. Direct application of RVI becomes impractical as the state space $\mathcal{S}$ of $\mathcal{M}$ is infinite. Hence, we construct another $\mathcal{M}^{(m)} = (\mathcal{S}^{(m)},\mathcal{A},\mathcal{P}^{(m)},\mathcal{C})$ by truncating the value of $\Delta$ and $t$. More precisely, we impose
\[
{\mathcal{S}^{(m)}}:
\begin{dcases}
& \Delta \in \{0,1,...,\Delta_{max}\},\\
& i \in \{-1,0,1\},\\
& t \in \{0,1,...,t_{max}\},
\end{dcases}
\]
where $\Delta_{max}$ and $t_{max}$ are the predetermined maximal value of $\Delta$ and $t$, respectively. Then, the size of the state space reduces from infinite to $((\Delta_{max}+1)\times 3\times (t_{max}+1))$. The transition probabilities from $s\in\mathcal{S}^{(m)}$ to $z\in\mathcal{S}\setminus\mathcal{S}^{(m)}$ are redistributed to the states $s'\in\mathcal{S}^{(m)}$ according to \eqref{eq-redistribut}
\begin{figure*}[!t]
\normalsize
\begin{equation}\label{eq-redistribut}
P^{(m)}_{s,s'}(a) = \begin{cases}
          P_{s,s'}(a) & \Delta'<\Delta_{max}\ and\ t'<t_{max}, \\
          P_{s,s'}(a) + \sum_{G_1(z,s')}P_{s,z}(a) & \Delta'=\Delta_{max}\ and\ t'<t_{max}, \\
          P_{s,s'}(a) + \sum_{G_2(z,s')}P_{s,z}(a) & \Delta'<\Delta_{max}\ and\ t'=t_{max}, \\
          P_{s,s'}(a) + \sum_{G_3(z,s')}P_{s,z}(a) & \Delta'=\Delta_{max}\ and\ t'=t_{max}.
     \end{cases}
\end{equation}
\hrulefill
\vspace*{4pt}
\end{figure*}
where $G_1(s,s') = \{s:\Delta>\Delta_{max},t=t',i=i',\}$, $G_2(s,s') = \{s:\Delta=\Delta',t>t_{max},i=i'\}$, and $G_3(s,s') = \{s:\Delta>\Delta_{max},t>t_{max},i=i'\}$. The action space $\mathcal{A}$ and the immediate cost $\mathcal{C}$ are the same as defined in $\mathcal{M}$.

We can rigorously show that the sequence of optimal policies for $\mathcal{M}^{(m)}$ will converge to the optimal policy for $\mathcal{M}$ as $\Delta_{max}\rightarrow\infty$ and $t_{max}\rightarrow\infty$. More specifically, we can show that our system verifies the two assumptions given in~\cite{b14}. Then, by~\cite[Theorem 2.2]{b14}, we know the following results hold.
\begin{itemize}
\item There exists an average cost optimal stationary policy for $\mathcal{M}^{(m)}$.
\item Any limit point of the sequence of optimal policies for $\mathcal{M}^{(m)}$ is optimal for $\mathcal{M}$.
\end{itemize}
Considering the similarity of the system model, the proof will be similar to the proof of~\cite[Theorem 1]{b6}. Therefore, we only prove one of the most important lemmas in the proof and omit the rest of the proof. Let $V^{(m)}_{\gamma}(s)$ be the value function associated with $\mathcal{M}^{(m)}$ under $\gamma$-discounted cost.
\begin{lemma}\label{lem-MonotoneM}
$V^{(m)}_{\gamma}(s)$ is non-decreasing in $\Delta>0$.
\end{lemma}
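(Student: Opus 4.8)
The plan is to reuse the value-iteration induction behind Lemma~\ref{lem-Monotone}, now for the truncated chain $\mathcal{M}^{(m)}$, and to verify that the state-space truncation does not destroy monotonicity. Since $\mathcal{S}^{(m)}$ is finite and the immediate cost $C(s)=f(\Delta)$ is bounded on it (by $f(\Delta_{max})$), the $\gamma$-discounted value function $V^{(m)}_{\gamma}$ is finite; hence, exactly as in Lemma~\ref{lem-Converge}, the value iteration
\[
V^{(m)}_{\gamma,\nu+1}(s)=\min_{a\in\mathcal{A}}\Big\{C(s)+\gamma\sum_{s'\in\mathcal{S}^{(m)}}P^{(m)}_{s,s'}(a)V^{(m)}_{\gamma,\nu}(s')\Big\},\qquad V^{(m)}_{\gamma,0}\equiv 0,
\]
converges pointwise to $V^{(m)}_{\gamma}$. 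It therefore suffices to show by induction on $\nu$ that, for every fixed pair $(t,i)$, the map $\Delta\mapsto V^{(m)}_{\gamma,\nu}(\Delta,t,i)$ is non-decreasing on $\{\Delta>0\}\cap\mathcal{S}^{(m)}$; letting $\nu\to\infty$ then gives the claim, because a pointwise limit of maps non-decreasing in $\Delta$ is non-decreasing in $\Delta$.

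The base case $\nu=0$ is trivial. For the inductive step, the structural fact we need — already the engine of Lemma~\ref{lem-Monotone} — is that, for $\Delta>0$, the one-step kernel out of $(\Delta,t,i)$ is independent of the \emph{value} of $\Delta$: by the dynamics \eqref{eq-AoIIDynamic} and the transition probabilities in Appendix~\ref{app-STP}, under any action $a$ the AoII coordinate either resets to $0$ or increments by one, and the joint probability of ``reset to $0$ and channel moves to $(t',i')$'' and of ``increment and channel moves to $(t',i')$'' are quantities $r_0(t,i,t',i',a)$ and $r_{+}(t,i,t',i',a)$ common to all $\Delta>0$. The truncation \eqref{eq-redistribut} only folds the mass that would reach $\Delta_{max}+1$ back onto $\Delta_{max}$ and the mass that would reach $t_{max}+1$ back onto $t_{max}$; for two states $(\Delta_1,t,i)$ and $(\Delta_2,t,i)$ sharing $t$, the $t$-folding is identical, and the $\Delta$-folding simply replaces the successor AoII $\Delta+1$ by $\min(\Delta+1,\Delta_{max})$. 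Consequently, for $\Delta\in\{\Delta_1,\Delta_2\}$,
\[
\sum_{s'\in\mathcal{S}^{(m)}}P^{(m)}_{(\Delta,t,i),s'}(a)V^{(m)}_{\gamma,\nu}(s')=\sum_{t',i'}\Big[r_0\, V^{(m)}_{\gamma,\nu}(0,t',i')+r_{+}\,V^{(m)}_{\gamma,\nu}\big(\min(\Delta+1,\Delta_{max}),t',i'\big)\Big].
\]

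Now take $0<\Delta_1<\Delta_2$ with both states in $\mathcal{S}^{(m)}$, and let $a^{\star}$ attain the minimum in the recursion at $(\Delta_2,t,i)$. Since $\mathcal{A}=\{0,1\}$ for every state, $a^{\star}$ is feasible at $(\Delta_1,t,i)$, so $V^{(m)}_{\gamma,\nu+1}(\Delta_1,t,i)\le f(\Delta_1)+\gamma\sum_{s'}P^{(m)}_{(\Delta_1,t,i),s'}(a^{\star})V^{(m)}_{\gamma,\nu}(s')$. Comparing this termwise with the expression for $V^{(m)}_{\gamma,\nu+1}(\Delta_2,t,i)$: the costs obey $f(\Delta_1)\le f(\Delta_2)$ by the assumptions on $f$ in Section~\ref{sec-AoII}; the $r_0$ terms coincide; and since $1\le\min(\Delta_1+1,\Delta_{max})\le\min(\Delta_2+1,\Delta_{max})$, the induction hypothesis makes each $r_{+}$ term for $\Delta_1$ no larger than the corresponding one for $\Delta_2$. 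Summing gives $V^{(m)}_{\gamma,\nu+1}(\Delta_1,t,i)\le V^{(m)}_{\gamma,\nu+1}(\Delta_2,t,i)$, closing the induction.

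The only subtle ingredient is the structural fact in the second paragraph: the ``$\Delta>0$'' hypothesis is precisely what lets the two kernels be coupled, whereas a $\Delta=0$ state sits in a different regime (a freshly started transmission carries $i=0$) that does not embed into the $\Delta>0$ pattern — this is why monotonicity is asserted only for $\Delta>0$. Given this, the extra work relative to Lemma~\ref{lem-Monotone} is merely to note that the redistribution \eqref{eq-redistribut} preserves the coupling, which holds because $x\mapsto\min(x,\Delta_{max})$ is non-decreasing; the rest is the same routine induction, and the full details parallel Appendix~\ref{pf-Monotone}.
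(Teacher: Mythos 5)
Your proof is correct and takes essentially the same approach as the paper's, which simply asserts that the induction of Lemma~\ref{lem-Monotone} goes through because the redistribution in \eqref{eq-redistribut} preserves the needed structure. You usefully make explicit the one point the paper glosses over: the truncation replaces the increment target $\Delta+1$ by $\min(\Delta+1,\Delta_{max})$, and because $x\mapsto\min(x,\Delta_{max})$ is non-decreasing the termwise comparison in the inductive step still closes.
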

\begin{proof}
The proof is very similar to the proof of Lemma~\ref{lem-Monotone} because the way of redistributing the state transition probabilities shown in \eqref{eq-redistribut} does not change the structural properties of the state transition probability presented in the proof of Lemma~\ref{lem-Monotone}. Therefore, we omit the proof of this lemma.
\end{proof}
Then, we can apply RVI to the truncated MDP $\mathcal{M}^{(m)}$ and treat the resulting optimal policy as an approximation of the optimal policy for $\mathcal{M}$. The pseudocode of RVI is given in Algorithm~\ref{alg-RVIA}.
\begin{algorithm}[!t]
    \begin{algorithmic}[1]
    \Procedure{RVI}{$\mathcal{M}^{(m)}$,$\epsilon$}
        \State $\nu\gets0$; $V_{\nu}(s)\gets0$ for $s\in\mathcal{S}^{(m)}$
        \State Choose $s^{ref}\in\mathcal{S}^{(m)}$ arbitrarily
        \Repeat
            \For{$s \in \mathcal{S}^{(m)}$}
                \For{$a \in \mathcal{A}$}
                    \State $Q_{a}(s) \gets C(s) + \sum_{s'} P^{(m)}_{s,s'}(a) V_{\nu}(s')$
                \EndFor
            	\State $Q(s) \gets \min_a \{Q_{a}(s)\}$
            \EndFor
            \For{$s \in \mathcal{S}^{(m)}$}
                \State $V_{\nu+1}(s) \gets Q(s) - Q(s^{ref})$
            \EndFor
            \State $\nu \gets \nu + 1$
        \Until{$\max_s\{\left|V_{\nu}(s)-V_{\nu-1}(s)\right|\}\leq\epsilon$}
        \State \Return $\hat{\psi}^* \gets \argmin_a \{Q_{a}(s)\}$
    \EndProcedure
    \end{algorithmic}
\caption{Relative Value Iteration}
\label{alg-RVIA}
\end{algorithm}
A similar approximation is also used in~\cite{b9}, according to which small $\Delta_{max}$ and $t_{max}$ can give an accurate estimate of the optimal policy for $\mathcal{M}$. However, the choices of $\Delta_{max}$ and $t_{max}$ are still problematic. If large values are chosen, the state space of $\mathcal{M}^{(m)}$ grows rapidly. Meanwhile, the RVI may result in a non-optimal policy if the chosen value is small. Hence, in the following subsections, we investigate two specific examples of the transmission delay, namely the transmission delay that follows the Geometric distribution and the Zipf distribution. For these two common delay models, we theoretically find their optimal policies. For this purpose, we first introduce the policy iteration algorithm and the policy improvement theorem.

\subsection{Policy Iteration Algorithm}
The pseudocode of policy iteration algorithm is given in Algorithm \ref{alg-PIA}.
\begin{algorithm}[!t]
    \begin{algorithmic}[1]
    \Procedure{PI}{$\mathcal{M}$}
        \State Choose $\psi'(s)\in\mathcal{A}$ for $s\in\mathcal{S}$ arbitrarily
        \Repeat
        \State $\psi\gets\psi'$
        \State $V^{\psi}(s)\leftarrow$ \textsc{PolicyEvaluation}$(\mathcal{M},\psi)$
        \State $\psi'\leftarrow$ \textsc{PolicyImprovement}$(\mathcal{M},V^{\psi}(s))$
       	\Until{$\psi'=\psi$}
        \State\Return $\psi^*\gets \psi$
    \EndProcedure
    \end{algorithmic}
\caption{Policy Iteration}
\label{alg-PIA}
\end{algorithm}
We elaborate on the \textsc{PolicyEvaluation} function and the \textsc{PolicyImprovement} function.
\begin{itemize}
\item The \textsc{PolicyEvaluation} function takes the MDP $\mathcal{M}$ and the policy $\psi$ as input and produce the value function $V^{\psi}(s)$ and the expected AoII $\theta^{\psi}$ resulting from the adoption of $\psi$. To be more specific, $V^{\psi}(s)$ and $\theta^{\psi}$ are the solution to the following system of linear equations.
\begin{equation}\label{eq-PolicyIteration1}
V^{\psi}(s) + \theta^{\psi} = C(s) + \sum_{s'\in\mathcal{S}}P_{s,s'}^{\psi}V^{\psi}(s')\quad s\in\mathcal{S},
\end{equation}
where $P_{s,s'}^{\psi}$ is the probability that the system will transit from state $s$ to state $s'$ under policy $\psi$. Note that \eqref{eq-PolicyIteration1} forms a underdetermined system. Hence, we can select a reference state $s^{ref}$ arbitrarily and set $V^{\psi}(s^{ref}) = 0$. In this way, we can obtain a unique solution.
\item The \textsc{PolicyImprovement} function takes the MDP $\mathcal{M}$ and the value function $V^{\psi}(s)$ as input and produce the optimal policy $\psi'$ under $V^{\psi}(s)$. Let $\psi'(s)$ be the action suggested by the new policy $\psi'$ at state $s$. Then, $\psi'(s)$ is given by the following equation.
\[
\psi'(s) = \argmin_{a\in\mathcal{A}}\left\{C(s) + \sum_{s'\in\mathcal{S}}P_{s,s'}(a)V^{\psi}(s')\right\}.
\]
\end{itemize}
The policy iteration algorithm iterates between the two functions until convergence. The convergence criterion is defined at line $7$ of Algorithm \ref{alg-PIA}. Although the policy iteration algorithm appears to be more computationally demanding than the value iteration algorithm and is not as commonly used as the value iteration algorithm, it has the advantage that we can prove the policy improvement theorem, which will be the basis of our theoretical analysis in the next two subsections.
\begin{theorem}[Policy improvement theorem]\label{thm-policyimprovement}
Suppose that we have obtained the value function resulting from the operation of a policy $A$ and that the policy improvement function has produced a new policy $A'$. When policy $A$ and policy $A'$ are identical, we say the policy improvement function converges and policy $A$ is optimal.
\end{theorem}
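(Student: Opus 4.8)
The plan is to show that the hypothesis $A'=A$ forces the pair $(V^A,\theta^A)$ returned by \textsc{PolicyEvaluation} to solve the average-cost Bellman optimality equation of Section~\ref{sec-MDP}, and then to run the standard verification argument for average-cost MDPs to conclude that $A$ attains the infimum in \eqref{eq-Goal} and that $\theta^A$ coincides with the optimal expected AoII $\theta$ whose existence is guaranteed by Theorem~\ref{thm-optimalexit}. The "we say the policy improvement function converges" part is purely a naming convention; the content to be proved is the optimality of $A$.

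First I would unwind the definitions. By construction, \textsc{PolicyEvaluation} returns $(V^A,\theta^A)$ solving \eqref{eq-PolicyIteration1}, i.e.\ $V^A(s)+\theta^A = C(s)+\sum_{s'}P^A_{s,s'}V^A(s')$ for every $s$, with $V^A(s^{ref})=0$; and \textsc{PolicyImprovement} sets $A'(s)=\argmin_{a\in\mathcal{A}}\{C(s)+\sum_{s'}P_{s,s'}(a)V^A(s')\}$. The assumption $A'=A$ therefore says that for every state $s$ the action $A(s)$ already achieves this minimum, so
\[
V^A(s)+\theta^A \;=\; \min_{a\in\mathcal{A}}\Big\{C(s)+\sum_{s'\in\mathcal{S}}P_{s,s'}(a)\,V^A(s')\Big\},\qquad s\in\mathcal{S}.
\]
Hence $(V^A,\theta^A)$ is a solution of the average-cost optimality equation. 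Moreover, reading \eqref{eq-PolicyIteration1} as the Poisson equation of the Markov chain induced by $A$, the actual long-run average cost of $A$ equals $\theta^A$.

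Next I would carry out the verification step: fix an arbitrary admissible policy $\psi$ and an arbitrary initial state $s_0$. From the displayed equation, $V^A(s_k)+\theta^A \le C(s_k)+\sum_{s'}P_{s_k,s'}(\psi)\,V^A(s')$ along any trajectory; taking $\mathbb{E}_\psi$ and telescoping over $k=0,\dots,K-1$ gives $K\theta^A \le \mathbb{E}_\psi[\sum_{k=0}^{K-1}C(s_k)] + V^A(s_0) - \mathbb{E}_\psi[V^A(s_K)]$. Dividing by $K$ and letting $K\to\infty$, the boundary terms disappear in the Cesàro limit, so $\theta^A \le \liminf_{K\to\infty}\frac1K\mathbb{E}_\psi[\sum_{k=0}^{K-1}f(\Delta_k)]$. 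Since $\psi$ is arbitrary, $\theta^A$ is a lower bound on the optimal value of \eqref{eq-Goal}; since $A$ itself achieves average cost exactly $\theta^A$, policy $A$ is optimal and $\theta^A=\theta$.

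The delicate point is controlling the boundary term $\frac1K\mathbb{E}_\psi[V^A(s_K)]$: one must ensure $V^A$ does not grow too quickly relative to the state reached under an arbitrary policy. I would handle this by reusing the growth bound on $V_\gamma$ from the proof of Lemma~\ref{lem-Converge} together with the monotonicity in $\Delta$ from Lemma~\ref{lem-Monotone} and the summability assumption $\sum_{\Delta}\gamma^\Delta f(\Delta)<\infty$ on the time penalty function in Section~\ref{sec-AoII}, or, more economically, by invoking directly the average-cost verification conditions of~\cite{b16} that were already checked in the proof of Theorem~\ref{thm-optimalexit}. A secondary, purely bookkeeping, remark is that changing the reference state $s^{ref}$ only shifts $V^A$ by an additive constant, leaving both $\theta^A$ and every $\argmin$ unchanged, so the statement is independent of that choice.
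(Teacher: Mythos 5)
Your proof is correct but takes a genuinely different route from the paper's. You run the standard verification (telescoping) lemma for average-cost MDPs: since $A'=A$ forces $(V^A,\theta^A)$ to solve the average-cost optimality equation, telescoping $V^A(s_k)+\theta^A\le C(s_k)+\mathbb{E}_\psi[V^A(s_{k+1})\mid s_k]$ along a trajectory of any admissible $\psi$ gives $\theta^A\le\liminf_{K\to\infty}\frac{1}{K}\sum_{k=0}^{K-1}\mathbb{E}_\psi[C(s_k)]$, provided the boundary term $\mathbb{E}_\psi[V^A(s_K)]/K$ is controlled --- the delicate point you flag (note that what one actually needs is a \emph{lower} bound on $V^A$, e.g.\ $V^A\ge0$ after a suitable choice of reference state combined with monotonicity in $\Delta$, not a growth upper bound). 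The paper instead argues by contradiction, following Howard~\cite{b18}: suppose some stationary competitor $B$ satisfies $\theta^B<\theta^A$; because the improvement step left $A$ unchanged, $\delta(s)\triangleq C(s,A)-C(s,B)+\sum_{s'}(P^A_{s,s'}-P^B_{s,s'})V^A(s')\le0$ for every $s$; subtracting the Poisson equations for $A$ and $B$ and averaging against $\pi^B$, the stationary distribution of the chain induced by $B$, yields $\theta^A-\theta^B=\sum_s\delta(s)\pi^B(s)\le0$, a contradiction. The paper's route trades your telescoping boundary term for a stationary-averaging interchange of sums (implicitly assuming integrability of $V^A-V^B$ under $\pi^B$) and compares $A$ only against stationary policies --- which suffices here since Theorem~\ref{thm-optimalexit} already restricts the search to those. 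Your route compares against every admissible policy simultaneously, matching the formulation in \eqref{eq-Goal} more directly, but requires the extra boundedness argument (or an appeal to the Sennott-type conditions already checked in the proof of Theorem~\ref{thm-optimalexit}) to dispose of the boundary term. Both are valid.
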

\begin{proof}
The proof is based on~\cite[pp.42-43]{b18}. We first assume that the policy improvement function converges to a non-optimal policy $A$. Then, we prove that there is a contradiction under this assumption. Thus, the assumption that $A$ is non-optimal must be false, and its opposite must be true. The complete proof can be found in Appendix~\ref{pf-policyimprovement} of the supplementary material.
\end{proof}
With the policy iteration algorithm and Theorem~\ref{thm-policyimprovement} in mind, we can proceed with finding the optimal policy through theoretical analysis.

\subsection{Geometric Delay}
In this subsection, we consider the case where the transmission delay is Geometrically distributed with success probability $0<p_s<1$. More precisely, 
\[
p_t = p_s(1-p_s)^{t-1}\quad t\geq 1.
\]
\begin{remark}\label{rm-ps}
We omit the case of $p_s=0$ as, in this case, the update will never be delivered. We also do not discuss the case of $p_s=1$ because the transmission time is deterministic and normalized in this case. The corresponding optimal policy has been discussed in many papers~\cite{b4,b5,b6}.
\end{remark}
Under Geometric distribution, $q_t = p_s$ for $t\geq1$. Then, leveraging the policy improvement theorem, we can prove the following theorem.
\begin{theorem}\label{thm-optimalgeometric}
The strong preemptive policy is optimal if the transmission delay follows the Geometric distribution.
\end{theorem}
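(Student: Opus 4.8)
The plan is to apply the policy improvement theorem (Theorem~\ref{thm-policyimprovement}): start from the strong preemptive policy, compute its value function via the policy evaluation equations \eqref{eq-PolicyIteration1}, and then verify that the policy improvement step leaves the policy unchanged. Since the strong preemptive policy induces the DTMC analyzed in Section~\ref{sec-PerformanceSP}, its expected cost $\theta = \bar{\Delta}_{sp}$ is already known from Lemma~\ref{them-StationarySP}, and the value function $V^{sp}(s)$ need only be determined on the recurrent states $(\Delta,0,-1)$, $(0,1,0)$, $(1,1,0)$, and $(\Delta,1,1)$ for $\Delta\geq 2$ (the virtual states never being visited under this policy, although we must still assign them consistent values for the improvement step). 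Exploiting the memorylessness $q_t = p_s$ for all $t$, I expect $V^{sp}$ to be expressible in closed form — most likely affine or of a simple geometric-plus-$f$ form in $\Delta$ — by solving the finite-looking recursion that \eqref{eq-PolicyIteration1} collapses to once the state is parametrized by $\Delta$ alone.

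First I would write down the Bellman-type equations \eqref{eq-PolicyIteration1} for $V^{sp}$ at each state class, using the transition probabilities from Appendix~\ref{app-STP}: from $(\Delta,0,-1)$ with $a=1$ the chain moves (depending on whether the source flips and whether the fresh update is delivered next slot) among $\{(0,0,-1),(\Delta+1,1,1),\dots\}$, and similarly from the busy states where preemption forces $t$ to reset. Then I would solve this system for $V^{sp}(\Delta,0,-1)$ etc. in terms of $p$, $p_s$, and the penalty values $f(\Delta)$; the monotonicity granted by Lemma~\ref{lem-Monotone} (and its analogue for the evaluated policy) gives a useful sanity check and is likely needed to control the tail of any infinite sum that appears. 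Having $V^{sp}$ in hand, the core of the proof is the inequality
\[
\sum_{s'}P_{s,s'}(1)V^{sp}(s') \;\leq\; \sum_{s'}P_{s,s'}(0)V^{sp}(s')
\]
for every busy state $s=(\Delta,t,i)$ (and the corresponding "transmit" inequality for idle states $(\Delta,0,-1)$), so that $a=1$ remains the arg-min everywhere and the improvement step returns the strong preemptive policy itself.

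The main obstacle will be establishing that inequality for the busy states with $i=1$ and $\Delta>0$, i.e.\ showing that preempting an update that carries genuinely new information is still no worse than letting it ride. Intuitively this is exactly where the weak preemptive policy disagrees, so the argument must use the memorylessness of the Geometric delay in an essential way: because $q_{t+1}=q_t=p_s$, continuing the current transmission versus preempting and restarting give the \emph{same} delivery probability next slot, and the only difference is that the preempting action resets the content of the in-flight update to the current (possibly more accurate) source value. I would compare the two one-step lookahead sums term by term, reduce the difference to a sign condition involving $V^{sp}$-increments across adjacent $\Delta$ values and the source flip probability $p<\tfrac12$, and invoke monotonicity of $V^{sp}$ in $\Delta$ together with the explicit closed form to close it. If a direct term-by-term comparison is unwieldy, a cleaner route is to couple the two continuation processes (preempt vs.\ skip) on the same randomness for the source and the delivery indicator and show the preempting trajectory is pathwise dominated in $\Delta$; monotonicity of $f$ then yields the cost inequality. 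The idle-state inequality (transmit vs.\ stay idle) should follow more easily from the same closed form, since sending a fresh update can only help.
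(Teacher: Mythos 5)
Your plan coincides with the paper's: invoke the policy improvement theorem, evaluate the strong preemptive policy via \eqref{eq-PolicyIteration1}, and show that one improvement step returns the same policy. One calibration is worth making. You flag $s=(\Delta,t,1)$ with $\Delta>0$ as the ``main obstacle'' and sketch a coupling argument as a fallback; in fact, once you make explicit the structural lemma you only gesture at (``once the state is parametrized by $\Delta$ alone''), namely that under the strong preemptive policy $V^{\psi}(\Delta,0,-1)=V^{\psi}(\Delta,t,0)=V^{\psi}(\Delta,t,1)=:V^{\psi}(\Delta)$ — this is Lemma~\ref{lem-geometricvfproperty} in the paper, and it follows immediately by inspecting the policy-evaluation equations because the strong preemptive policy's transitions never depend on $t$ or $i$ — that case collapses to an identity, $\delta V^{\psi}(\Delta,t,1)=0$, by term-by-term cancellation: with a two-state source and $\Delta>0$ the fresh sample and the in-flight one carry the same bit, the delivery probability is $q_{t+1}=q_1=p_s$ by memorylessness, and the $t$-coordinate of the landing state is immaterial to $V^{\psi}$. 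So no coupling and no closed form for $V^{\psi}$ are needed; the paper uses only the $(t,i)$-insensitivity together with monotonicity of $V^{\psi}$ in $\Delta$ (proved by the same induction scheme as Lemma~\ref{lem-Monotone}). The sign conditions that genuinely use $p<\tfrac12$ arise instead at $s=(\Delta,0,-1)$ and $s=(\Delta,t,0)$ with $\Delta>0$, and at $s=(0,t,1)$, where the difference reduces to $p_s(1-2p)V^{\psi}(\Delta+1)\geq 0$.
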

\begin{proof}
According to Theorem~\ref{thm-policyimprovement}, it is sufficient to prove that the policy iteration algorithm converges to the strong preemptive policy. Specifically, we first calculate the value function resulting from the strong preemption policy. Then we use the resulting value function to derive a new policy and verify that the old and new policies are the same. The complete proof can be found in Appendix~\ref{pf-optimalgeometric} of the supplementary material.
\end{proof}
\begin{remark}
The optimality of the strong preemptive policy is intuitive since $q_t$ is independent of $t$, which means that the probability of an update being delivered is independent of how long it has been in transmission. Thus, the strong preemption policy ensures that updates in the channel will always be the latest ones without sacrificing the likelihood of their delivery.
\end{remark}

\subsection{Zipf Delay}
In this subsection, we consider the case where the transmission delay follows the Zipf distribution with the constant $a$. More precisely,
\[
p_t = \frac{t^{-a}}{\sum_{i=1}^{t_{max}}i^{-a}}\quad 1\leq t\leq t_{max},
\]
where $t_{max}>1$ is a predetermined constant. 
\begin{remark}
When $t_{max}=1$, the transmission time is deterministic and normalized. Hence, we omit the discussion on this case for the same reason detailed in Remark~\ref{rm-ps}.
\end{remark}
A transmission delay that follows the Zipf distribution is also considered in the literature on information freshness~\cite{b9,b19}. Under the Zipf distribution,
\[
q_t = \frac{t^{-a}}{\sum_{i=t}^{t_{max}}i^{-a}}\quad 1\leq t\leq t_{max}.
\]
We define that $q_t\triangleq 0$ when $t>t_{max}$. Hence, the transmission time of an update is upper bounded by $t_{max}$. Consequently, the state $s=(\Delta,t,i)$ in the corresponding $\mathcal{M}$ satisfies $0\leq t\leq t_{max}-1$. To simplify the analysis, we consider the case of $f(\Delta)=\alpha\Delta+\beta$. Before the optimal policy, we first introduce the threshold preemptive policy and evaluate its performance.
\begin{definition}[Threshold preemptive policy]
The threshold preemptive policy always starts a new transmission when the channel is idle and does not preempt updates only at state $s=(\Delta,t_{max}-1,1)$ where $\Delta\geq1$.
\end{definition}
The following theorem gives the expected AoII achieved by the threshold preemptive policy when $f(\Delta)=\alpha\Delta+\beta$.
\begin{theorem}\label{prop-weakpreemptiveperformance}
When $f(\Delta)=\alpha\Delta+\beta$, the expected AoII achieved by the threshold preemptive policy $\bar{\Delta}_{tp}$ is give by
\[
\bar{\Delta}_{tp} = \frac{\alpha p}{(p+q_1-2q_1p)(q_1+2p-2q_1p)} + \beta.
\]
\end{theorem}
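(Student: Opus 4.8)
The plan is to mimic the analyses of the strong and weak preemptive policies: characterize the discrete-time Markov chain (DTMC) induced by the threshold preemptive policy, solve for its stationary distribution, evaluate $\bar{\Delta}_{tp}=\sum_{\Delta\geq0}f(\Delta)\pi(\Delta)$ with $f(\Delta)=\alpha\Delta+\beta$, and simplify the result to the stated closed form.

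\textbf{Step 1 (induced DTMC and virtual states).} Assume the chain starts at $(0,0,-1)$ and let $\mathcal{S}^{MC}_{tp}$ be the set of reachable states. Since $q_{t_{max}}=1$, every transmission terminates within $t_{max}$ slots, so the $t$-component is bounded; combined with the preemption rule of the threshold preemptive policy, only a few in-channel families of states remain reachable, namely $(\Delta,0,-1)$ together with the $i=0$ and $i=1$ families for the admissible values of $t$. As in Section~\ref{sec-PerformanceSP}, the dynamics~\eqref{eq-AoIIDynamic} force $\Delta$ to either increase by one or drop to zero in a single slot, and $i=0$ (resp.\ $i=1$) can only occur if the transmission began while $\Delta=0$ (resp.\ $\Delta>0$); this pins down the virtual states, to which I assign stationary probability zero. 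Irreducibility over the non-virtual states together with recurrence of $(0,0,-1)$ then gives existence of the stationary distribution $\pi(\cdot)$.

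\textbf{Step 2 (balance equations and stationary distribution).} Following the weak preemptive analysis, the transitions out of the states where the policy chooses $a=1$ depend only on $\Delta$, so I would aggregate those states by their $\Delta$-value (producing $\pi_0$ and $\pi_\Delta$ in the notation of Section~\ref{sec-PerformanceSP}) and keep the finitely many non-preempting $i=1$ states separate. Writing the balance equations yields a countably infinite linear system of the same shape as~\eqref{eq-uniformperformance4}--\eqref{eq-uniformperformance6}; introducing the tail aggregates $\Pi\triangleq\sum_{\Delta\geq1}\pi_\Delta$ and its $i=1$ analogues $\Pi(t)$, exactly as in Lemma~\ref{lem-StationaryWP}, reduces it to a finite system that I solve in closed form. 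Specializing to $f(\Delta)=\alpha\Delta+\beta$, I then introduce the first-moment aggregates $\Sigma\triangleq\sum_{\Delta\geq1}\Delta\pi_\Delta$ and $\Sigma(t)$ and obtain a further finite linear system for them, just as in the proof of Theorem~\ref{thm-SpecialPerformanceWP}, so that $\bar{\Delta}_{tp}=\alpha\big(\Sigma+\sum_t\Sigma(t)\big)+\beta$.

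\textbf{Step 3 (simplification).} I would then substitute the Zipf-specific identity $q_{t_{max}}=1$ (equivalently $\sum_{t<t_{max}}p_t+p_{t_{max}}=1$), which collapses the products $\prod_l\mathcal{P}_l$ and the telescoping sums arising in Steps~1 and~2, and reduce the whole expression to $\frac{\alpha p}{(p+q_1-2q_1p)(q_1+2p-2q_1p)}+\beta$. I expect this final algebraic collapse to be the main obstacle: it is elementary but requires careful bookkeeping of the conditional delivery probabilities $q_t$ and of the geometric-in-$\Delta$ structure of the stationary masses. I would also note that the value obtained is identical to $\bar{\Delta}_{sp}$ of Corollary~\ref{cor-SpecialPerformanceSP}; this coincidence suggests an alternative, calculation-free route via a sample-path coupling with the strong preemptive policy, in which the only slot where the threshold policy declines to preempt is the last feasible one, where $q_{t_{max}}=1$ makes delivery certain, so that the deferred certain delivery can be matched against a fresh preemptive transmission and the two AoII processes shown to share the same stationary marginal. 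If that coupling goes through it would in fact establish the identity for every admissible $f$, not merely the affine one, and the present theorem would follow as an immediate corollary.
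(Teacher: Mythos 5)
The paper's proof of this theorem is a one-liner, and it is worth seeing why: the threshold preemptive policy and the strong preemptive policy disagree \emph{only} at the states $(\Delta,t_{max}-1,1)$, $\Delta\geq1$, and for $t_{max}\geq3$ those states are never visited by the strong preemptive chain (which lives on $t\in\{0,1\}$) nor by the threshold chain (which, choosing $a=1$ at every reachable $t\leq1$ state when $t_{max}-1\geq2$, also never leaves $t\in\{0,1\}$). Hence the two induced DTMCs are literally the same chain on the same reachable state space, so the stationary distribution -- and therefore the expected AoII for \emph{any} admissible $f$ -- is identical, and the formula is just Corollary~\ref{cor-SpecialPerformanceSP}. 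Your Steps~1--2 would eventually arrive at the same answer, but you have not noticed this collapse; in particular your claim that the balance equations have ``the same shape as~\eqref{eq-uniformperformance4}--\eqref{eq-uniformperformance6}'' is misleading, because under the threshold policy with $t_{max}\geq3$ the masses $\Pi(t)$ for $t\geq2$ vanish identically, so the system reduces to that of Lemma~\ref{them-StationarySP}, not Lemma~\ref{lem-StationaryWP}; the ``careful bookkeeping of $q_t$'' you anticipate in Step~3 is unnecessary once this is seen. Your Step~3 coupling sketch is in the right neighborhood, but it over-engineers the argument: there is no deferred delivery to match against a preemptive retransmission, because the disagreement state is simply unreachable -- nothing needs to be coupled.

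One caveat, which applies to the paper's proof as well as to your Step~3: the ``differ only in virtual states'' argument genuinely requires $t_{max}\geq3$. For $t_{max}=2$ the disagreement states $(\Delta,1,1)$, $\Delta\geq2$, \emph{are} reachable under both policies, and there the two policies prescribe different actions with different transition kernels ($\Delta\to0$ with probability $1-p$ under no-preemption versus $q_1+p-2q_1p$ under preemption), so the chains are not the same and a direct balance-equation computation gives a different expected AoII from $\bar{\Delta}_{sp}$. So if you do carry out Steps~1--2 for $t_{max}=2$, the ``final algebraic collapse'' to the stated formula will not happen, and the coupling of Step~3 fails for the same reason. The theorem, and both proofs, should be read with the implicit hypothesis $t_{max}\geq3$.
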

\begin{proof}
Although the threshold preemptive policy and the strong preemptive policy are not precisely the same, they yield the same expected AoII. This is because the actions suggested by the two policies differ only in the virtual states, which does not affect the long-term average performance.
\end{proof}
We first introduce the following condition for direct use in the subsequent theoretical analysis.
\begin{condition}\label{con}
The conditions are the following.
\begin{itemize}
\item $q_1\geq q_t$ for $1\leq t\leq t_{max}-2$.
\item When $t_{max}\geq3$, $\mathcal{Q}_1\geq0$, $\mathcal{Q}_2\geq0$, and $\mathcal{Q}_3\geq0$, where $\mathcal{Q}_1$, $\mathcal{Q}_2$, and $\mathcal{Q}_3$ are given by \eqref{eq-EquivalentEq2}.
\begin{figure*}[!t]
\normalsize
\begin{equation}\label{eq-EquivalentEq2}
\begin{split}
\mathcal{Q}_1 & \triangleq (q_{t_{max}-1}-q_{t_{max}-1}p-p)+(1-q_{t_{max}-1})p(q_1+2p-2q_1p)^2.\\
\mathcal{Q}_2 & \triangleq\frac{(1-2p)\{(q_1-1)+(1-q_{t_{max}-1})[p+q_1(1-p)]\}}{q_1+p-2q_1p}.\\
\mathcal{Q}_3 & \triangleq \frac{(1-q_1)(2p-1)-p(1-q_{t_{max}-1})}{(2p+q_1-2q_1p)(p+q_1-2q_1p)} +  \frac{(1-q_{t_{max}-1})(1-p)p}{q_1+p-2q_1p}+(1-q_{t_{max}-1})(1-p) + \mathcal{Q}_2. 
\end{split}
\end{equation}
\hrulefill
\vspace*{4pt}
\end{figure*}
\end{itemize}
\end{condition}
Then, we can prove the following theorem.
\begin{theorem}\label{thm-optimalzipf}
When $f(\Delta) = \alpha\Delta+\beta$ and under Condition~\ref{con}, the threshold preemptive policy is optimal if the transmission delay follows the Zipf distribution.
\end{theorem}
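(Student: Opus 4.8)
The plan is to apply the policy improvement theorem (Theorem~\ref{thm-policyimprovement}) exactly as it was used for the Geometric delay in Theorem~\ref{thm-optimalgeometric}: it suffices to show that one pass of Algorithm~\ref{alg-PIA} started from the threshold preemptive policy returns the same policy, i.e., that \textsc{PolicyImprovement} applied to the value function of the threshold preemptive policy reproduces the threshold preemptive policy. Theorem~\ref{thm-policyimprovement} then certifies optimality. So the argument has two parts: a \textsc{PolicyEvaluation} computation and a state-by-state \textsc{PolicyImprovement} check.

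For the evaluation, I would solve the Poisson equation~\eqref{eq-PolicyIteration1} for the threshold preemptive policy with reference state $s^{ref}=(0,0,-1)$. Two facts make this tractable: the average cost is already known to be $\theta^{\psi}=\bar{\Delta}_{tp}$ by Theorem~\ref{prop-weakpreemptiveperformance}; and the threshold policy preempts in every state except $(\Delta,t_{max}-1,1)$ with $\Delta\ge1$, so in a preempting state the successor distribution depends on the current state only through $\Delta$, which collapses the Poisson recursion, on the ``preempting'' part of the state space, to a one-dimensional recursion in $\Delta$ coupled only to the finitely many states lying along the single transmission that is allowed to complete. Since $f(\Delta)=\alpha\Delta+\beta$ is affine and, under the Zipf delay, $q_{t_{max}}=1$ so transmission times are bounded by $t_{max}$, I expect $V^{\psi}(\Delta,t,i)$ to be affine in $\Delta$ on each $(t,i)$-slice, with slopes and intercepts fixed by a small linear system; the summability assumption on $f$ and a monotonicity argument in the spirit of Lemma~\ref{lem-Monotone} guarantee the series converge and that $V^{\psi}$ is non-decreasing in $\Delta$.

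For the improvement check, I would compare $Q_0(s)=C(s)+\sum_{s'}P_{s,s'}(0)V^{\psi}(s')$ with $Q_1(s)=C(s)+\sum_{s'}P_{s,s'}(1)V^{\psi}(s')$ across four families of states: (i) idle states, where transmitting ($a=1$) must win; (ii) busy states with $i=0$, where preempting must win; (iii) busy states with $i=1$ and $t\le t_{max}-2$, where preempting must win; and (iv) the threshold states $(\Delta,t_{max}-1,1)$ with $\Delta\ge1$, where letting the transmission complete ($a=0$) must win. Whenever monotonicity of $V^{\psi}$ in $\Delta$ already fixes the sign of $Q_1-Q_0$ the comparison is immediate; the remaining comparisons, after substituting the affine form of $V^{\psi}$ and the Zipf values $q_t=t^{-a}/\sum_{i=t}^{t_{max}}i^{-a}$, reduce to the scalar inequalities $q_1\ge q_t$ and $\mathcal{Q}_1,\mathcal{Q}_2,\mathcal{Q}_3\ge0$, which is exactly why the theorem is stated under Condition~\ref{con}; when $t_{max}=2$ family (iii) is empty and the $\mathcal{Q}_i$ inequalities are vacuous. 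Since the improved policy agrees with the threshold preemptive policy at every state, Theorem~\ref{thm-policyimprovement} gives optimality.

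The step I expect to be the genuine obstacle is the evaluation — producing a truly closed-form $V^{\psi}$ over a state space unbounded in $\Delta$ — and then controlling the case-(iii) and case-(iv) algebra, the threshold state being the delicate one because the two candidate actions lead to structurally different continuations (a fresh, short-lived retransmission versus the guaranteed completion of the current update). The affine ansatz for $V^{\psi}$ and the a priori knowledge of $\theta^{\psi}$ are what keep the bookkeeping manageable, and Condition~\ref{con} is precisely the residue of inequalities the computation cannot dispense with.
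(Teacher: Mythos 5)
Your proposal matches the paper's proof in structure and substance: evaluate the value function of the threshold preemptive policy via the Poisson equation, establish the collapse across $(t,i)$-slices together with the affine-in-$\Delta$ form (precisely the content of the paper's Lemma~\ref{lem-zipfvfproperty}), and then perform a state-by-state sign check on $\delta V^{\psi}=V^{\psi,0}-V^{\psi,1}$, with Condition~\ref{con} absorbing the residual scalar inequalities. One correction of emphasis: the threshold state $(\Delta,t_{max}-1,1)$ is in fact the \emph{easy} case (its sign follows immediately from $p<\tfrac12$), whereas the genuinely delicate comparisons are at the pre-threshold states $(0,t_{max}-2,1)$ and $(\Delta,t_{max}-2,1)$ --- from those the $a=0$ successor lands on the threshold state, whose value deviates from the collapsed $V^{\psi}(\Delta+1)$, and this is exactly where the $\mathcal{Q}_1,\mathcal{Q}_2,\mathcal{Q}_3$ inequalities of Condition~\ref{con} are invoked.
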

\begin{proof}
We follow the same methodology presented in the proof of Theorem~\ref{thm-optimalgeometric}. The complete proof can be found in Appendix~\ref{proof-optimalzipf} of the supplementary material.
\end{proof}
\begin{remark}
For the system that fails to satisfy Condition~\ref{con}, we can use the relative value iteration algorithm introduced in Section~\ref{sec-VIA} to approximate the corresponding optimal policy. 
\end{remark}
\begin{corollary}
The following results can be derived from Theorem~\ref{thm-optimalzipf}.
\begin{enumerate}
\item When $f(\Delta) = \alpha\Delta+\beta$ and under Condition~\ref{con}, the threshold preemptive policy is optimal.
\item For a generic transmission delay with an upper bound of $2$ time slots, the threshold preemptive policy is optimal.
\end{enumerate}
\end{corollary}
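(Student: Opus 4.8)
The plan is to read both items as consequences of essentially a single observation: the proof of Theorem~\ref{thm-optimalzipf} uses the Zipf law only through the quantities collected in Condition~\ref{con}, so it already establishes item~1, and item~2 is the degenerate instance $t_{max}=2$ in which Condition~\ref{con} holds automatically.

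First, for item~1, I would revisit the proof of Theorem~\ref{thm-optimalzipf} and verify that the only way the transmission law enters — beyond the structural fact that it is supported on $\{1,\dots,t_{max}\}$ with $q_{t_{max}}=1$ — is through $q_1,\dots,q_{t_{max}-1}$ as they appear in Condition~\ref{con}. Concretely, that proof (via the policy improvement theorem, Theorem~\ref{thm-policyimprovement}, following the template of Theorem~\ref{thm-optimalgeometric}) computes the value function of the threshold preemptive policy from its stationary distribution (Lemma~\ref{lem-StationaryWP}) and its expected AoII (Theorem~\ref{prop-weakpreemptiveperformance}), and then checks that one step of policy improvement returns the same action at every state; after simplification this collapses to the finitely many inequalities $q_1\ge q_t$ for $1\le t\le t_{max}-2$ together with $\mathcal{Q}_1,\mathcal{Q}_2,\mathcal{Q}_3\ge 0$ from \eqref{eq-EquivalentEq2}, all written purely in terms of $p$ and the $q_t$. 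None of these manipulations uses the explicit form $p_t=t^{-a}/\sum_i i^{-a}$. Hence the identical chain of reasoning proves: for any delay distribution supported on $\{1,\dots,t_{max}\}$ satisfying Condition~\ref{con}, and for $f(\Delta)=\alpha\Delta+\beta$, the threshold preemptive policy is optimal, which is exactly item~1.

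Second, for item~2, I would specialize to $t_{max}=2$. The first bullet of Condition~\ref{con} quantifies over $1\le t\le t_{max}-2=0$, an empty index set, so it holds vacuously; the second bullet is guarded by ``when $t_{max}\ge 3$'', which does not trigger. Thus Condition~\ref{con} is automatically satisfied, and item~1 applies, giving optimality of the threshold preemptive policy for every delay supported in $\{1,2\}$. I would also record that in this regime the threshold preemptive policy is nondegenerate — it preempts everywhere except at $s=(\Delta,1,1)$ with $\Delta\ge1$ — and that the claim inherits the hypothesis $f(\Delta)=\alpha\Delta+\beta$ from Theorem~\ref{thm-optimalzipf}; if a fully generic $f$ is intended, I would instead redo the policy-improvement inequalities symbolically, which is tractable because with $t_{max}=2$ the MDP has only a handful of non-virtual states and a single nontrivial decision.

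The main obstacle is the bookkeeping audit in the first step: confirming that the appendix proof of Theorem~\ref{thm-optimalzipf} genuinely factors through Condition~\ref{con} with no covert use of the Zipf form — for instance in the positivity of the denominators appearing in $\Pi$, $\Sigma$, and $\Sigma(t)$, in the convergence of the series defining them, or in any monotonicity comparison among the $q_t$ that a general bounded distribution need not enjoy. Once that audit is complete, both items follow with essentially no additional computation.
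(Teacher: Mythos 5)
Your treatment of item~1 matches the paper: the appendix proof of Theorem~\ref{thm-optimalzipf} is indeed written entirely in terms of $p$, the conditional delivery probabilities $q_1,\dots,q_{t_{max}-1}$ (together with $q_t\ge 0$ and $q_{t_{max}}=1$), and Condition~\ref{con}; the Zipf form $p_t\propto t^{-a}$ is never invoked, so the same chain of policy-improvement inequalities proves optimality of the threshold preemptive policy for any delay supported on $\{1,\dots,t_{max}\}$ satisfying Condition~\ref{con} and $f(\Delta)=\alpha\Delta+\beta$.

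For item~2 there is a genuine gap. Notice that the corollary's second item deliberately omits the hypothesis $f(\Delta)=\alpha\Delta+\beta$ (item~1 states it explicitly; item~2 does not), so the intended claim is for a \emph{generic} time penalty $f$. Your main route — Condition~\ref{con} is vacuous at $t_{max}=2$, hence item~1 applies — only delivers the result under $f$ linear, and you flag the generic-$f$ case but defer it to a brute-force recomputation. The paper closes this directly with a structural observation you did not make: the only place linearity of $f$ enters the proof of Theorem~\ref{thm-optimalzipf} is via the fourth property of Lemma~\ref{lem-zipfvfproperty} (that $V^\psi(\Delta+1)-V^\psi(\Delta)\equiv\sigma$), and that property is invoked only for the states $(0,t_{max}-2,1)$ and $(\Delta,t_{max}-2,1)$. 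When $t_{max}=2$ those become $(\cdot,0,1)$, which are not valid states (recall $t=0$ forces $i=-1$), so the fourth property is never used — only the first three hold and only the first three are needed. Consequently item~2 holds for generic $f$ with no extra computation, and for the same reason Condition~\ref{con} never needs to be checked (the $\mathcal{Q}_i$ appear only in those same nonexistent cases). Identifying which cases in the state-by-state audit actually disappear at $t_{max}=2$ is the missing step in your argument.
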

\begin{proof}
We note that in the proof of Theorem~\ref{thm-optimalzipf}, we only use Condition~\ref{con} and the fact that $q_t\geq0$. Therefore, the first result can be directly derived from the proof of Theorem~\ref{thm-optimalzipf}. For the second result, since the transmission time is upper bounded by $2$, the proof follows the same steps as detailed in the proof of Theorem~\ref{thm-optimalzipf} with $t_{max}=2$. The difference is that only the first three structural properties in Lemma~\ref{lem-zipfvfproperty} hold. Nevertheless, we can still complete the proof because the case that needs to use the fourth structural property in Lemma~\ref{lem-zipfvfproperty} does not exist in the case of $t_{max}=2$. For the same reason, we also do not need to verify Condition~\ref{con}. Consequently, we omit the detailed proof.
\end{proof}

\section{Numerical Results}\label{sec-Numerical}
In this section, we present numerical results regarding the verification of Condition~\ref{con} as well as a performance analysis of the optimal policy and the performance improvement compared to the non-preemptive policy.

\subsection{Condition~\ref{con} Verification}
In this subsection, we numerically verify Condition~\ref{con} under various system parameters. More specifically, the system parameters are chosen as follows.
\begin{itemize}
\item $0.05\leq p\leq 0.45$ with an increment of $0.05$.
\item $0\leq a\leq 5$ with an increment of $0.25$.
\item $3\leq t_{max}\leq 11$ with an increment of $1$.
\end{itemize}
We choose $f(\Delta) = \Delta$ for better illustration. The results are summarized in Table~\ref{tab-signs}, where the cross means that Condition~\ref{con} is not satisfied, the check mark means that Condition~\ref{con} is satisfied, and the circle means the result depends on the specific parameters.
\begin{table}[!t]
\caption{Condition~\ref{con} Check}
\label{tab-signs}
\begin{tabularx}{\columnwidth}{@{} l *{7}{C} c @{}}
\toprule
$a$ & $0$ & $0.25$ & $0,5$ & $0.75$ 
& $1$ & $1.25$ & $1.5$ \\ 
\midrule
Result  & \Cross  & \Cross  & \Cross  & \Cross  & \Cross  & \Cross   & \Cross  \\ 
\toprule
$a$ & $1.75$ & $2$ & $2.25$ & $2.5$ & $2.75$ & $3$ & $3.25$ \\ 
\midrule
Result & \Cross & \Cross & \Circle & \Checkmark  & \Checkmark  & \Checkmark  & \Checkmark  \\ 
\toprule
$a$ & $3.5$ & $3.75$ & $4$ & $4.25$ 
& $4.5$ & $4.75$ & $5$ \\ 
\midrule
Result  & \Checkmark  & \Checkmark  & \Checkmark  & \Checkmark  & \Checkmark  & \Checkmark   & \Checkmark \\ 
\bottomrule
\end{tabularx}
\end{table}
When $a=2.25$, the results are visualized in Fig.~\ref{fig-ZipfConCheck}.
\begin{figure}[!t]
\centering
\includegraphics[width=3in]{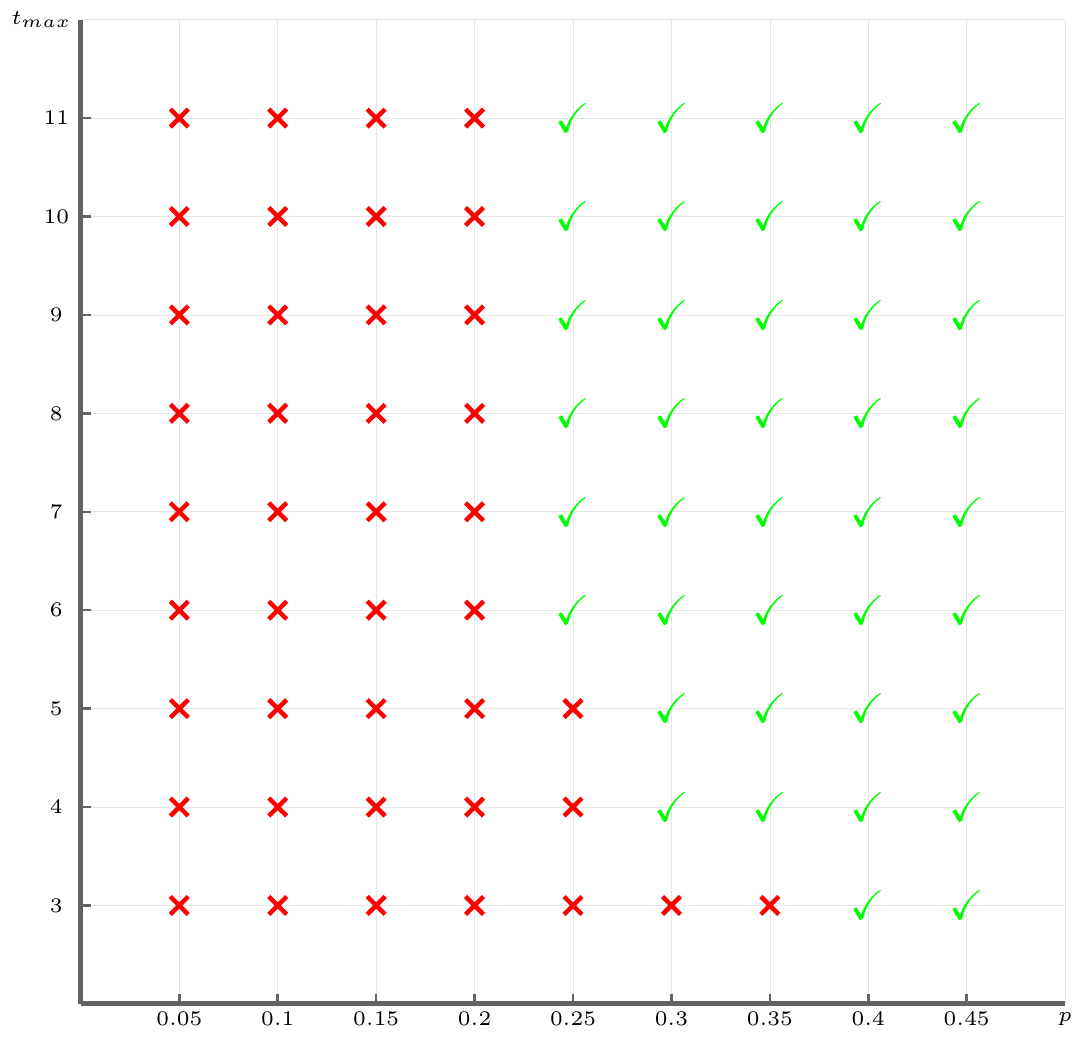}
\caption{A visual representation of the results of numerical check of Condition~\ref{con} when the transmission delay follows the Zipf distribution with $a = 2.25$ under different $t_{max}$ and $p$. In the figure, the check mark indicates that Condition~\ref{con} is verified, and the cross indicates that Condition~\ref{con} is not verified.}
\label{fig-ZipfConCheck}
\end{figure}
We emphasize here that the optimal policy depends not only on the type of probability distribution of the delay but also on the probability distribution parameters.

\subsection{Performance of the Optimal Policy}
In this subsection, we compare the performance of the optimal policy with the non-preemptive policy to highlight the performance improvements brought about by the preemption capability. To this end, we first define a specific type of non-preemptive policy.
\begin{definition}[Threshold policy]
The threshold policy starts a new transmission when the channel is idle, and the AoII is not zero. When the channel is busy, the threshold policy never preempts the transmitting update.
\end{definition}
We choose $f(\Delta)=\Delta$ for better illustrations. Note that the threshold policy is a special case of the threshold policy defined in~\cite[Definition 2]{b11}. Then, we can compute the performances of the optimal policy and the threshold policy using Corollary~\ref{cor-SpecialPerformanceSP} and~\cite[Theorem 3]{b11}, respectively. To accommodate assumption 1 in~\cite{b11}, we set the upper bound on the transmission time to $40$ when calculating the performance of the threshold policy. We also choose the system parameters that have been verified in~\cite{b11} to satisfy~\cite[Condition 1]{b11}. Consequently, the threshold policy is optimal when the transmitter has no preemption capability. Then, we plot the corresponding performances for the two typical transmission delay models studied in this paper.

When the transmission delay follows the Geometric distribution, the numerical results are given in Fig.~\ref{fig-PerformanceGeo}.
\begin{figure}[!t]
\centering
\begin{subfigure}{3in}
\centering
\includegraphics[width=3in]{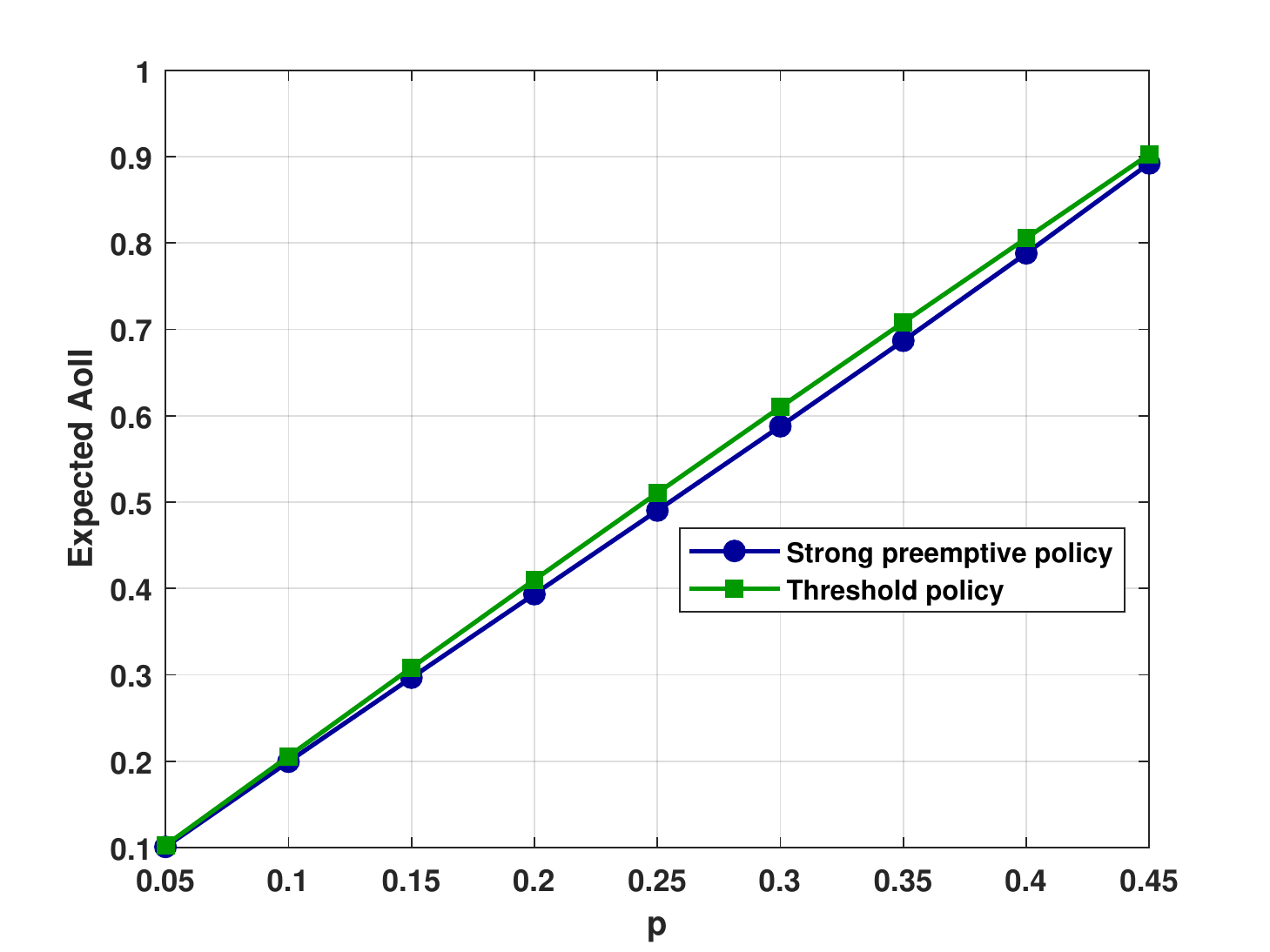}
\caption{When $p_s=0.7$.}
\label{fig-PerformanceGeop}
\end{subfigure}\hfill
\begin{subfigure}{3in}
\centering
\includegraphics[width=3in]{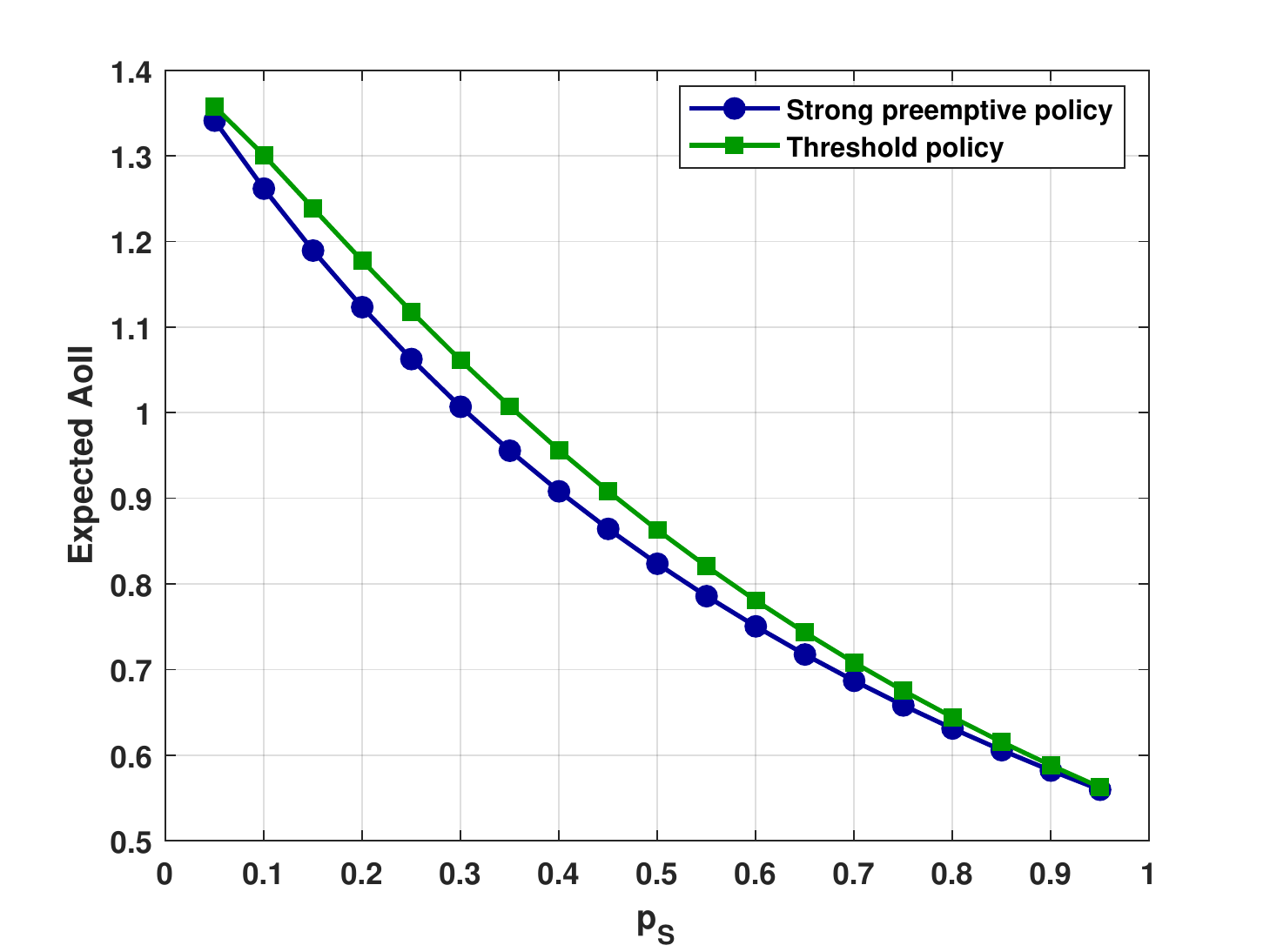}
\caption{When $p=0.35$.}
\label{fig-PerformanceGeops}
\end{subfigure}\hfill
\caption{The performance comparison when the transmission delay follows the Geometric distribution. In this case, there are two system parameters. One is the Markovina source dynamics $p$, and the other is the success probability $p_s$ in the Geometric distribution. Therefore, we fix one of the parameters and plot the corresponding results when the other parameter varies.}
\label{fig-PerformanceGeo}
\end{figure}
The plots show that the performance gain from the transmitter's preemption capability is not significant. One possible reason is that the source process modeling and the time penalty function choice in this paper are simple. Meanwhile, the expected AoII achieved by the optimal policy increases as $p$ increases. This is because when $p$ is large, the Markovian source jumps between states more frequently, making it more difficult for the receiver to maintain a correct estimate about the state of the Markovian source. On the contrary, the expected AoII resulting from the optimal policy decreases as $p_s$ increases. The reason for this is as follows. When $p_s$ is large, the expected transmission time of an update is small. As a result, the receiver receives more updates per unit of time, which allows a more accurate estimation of the state of the Markovian source.

When the transmission delay follows the Zipf distribution, the numerical results are given in Fig.~\ref{fig-PerformanceZipf}.
\begin{figure}[!t]
\centering
\begin{subfigure}{3in}
\centering
\includegraphics[width=3in]{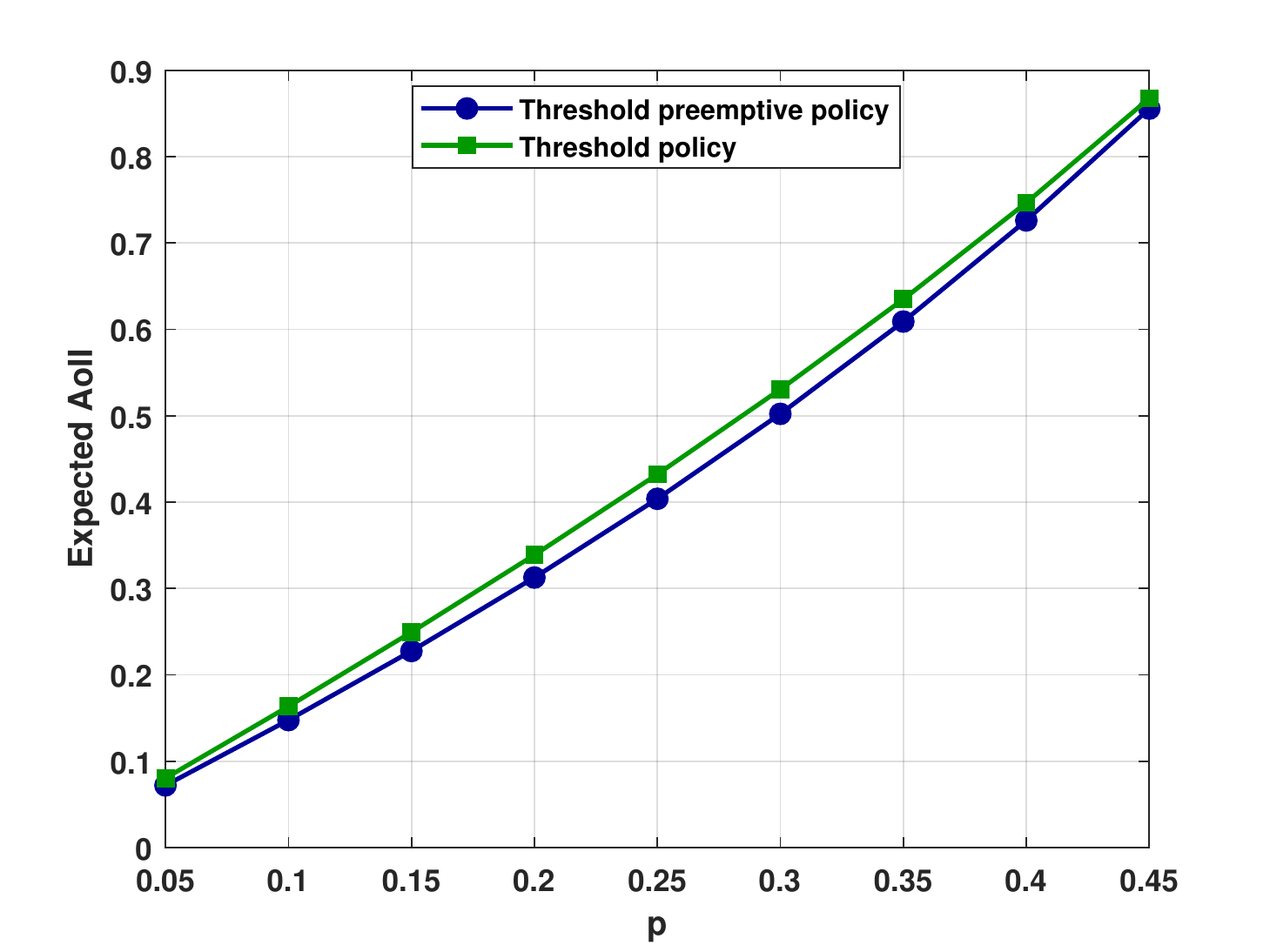}
\caption{When $a=3$ and $t_{max}=5$.}
\label{fig-PerformanceZipfp}
\end{subfigure}\hfill
\begin{subfigure}{3in}
\centering
\includegraphics[width=3in]{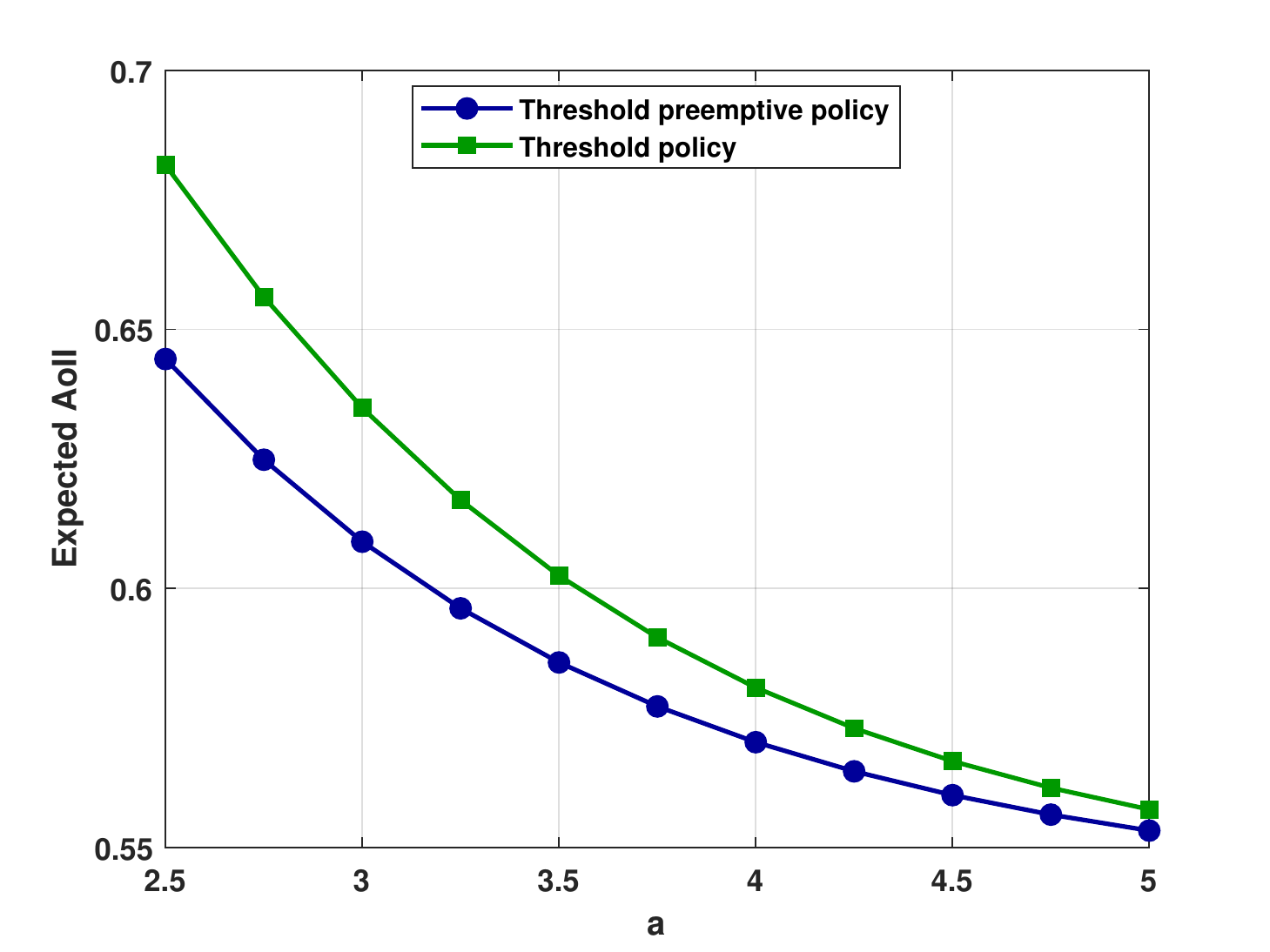}
\caption{When $p=0.35$ and $t_{max}=5$.}
\label{fig-PerformanceZipfa}
\end{subfigure}\hfill
\begin{subfigure}{3in}
\centering
\includegraphics[width=3in]{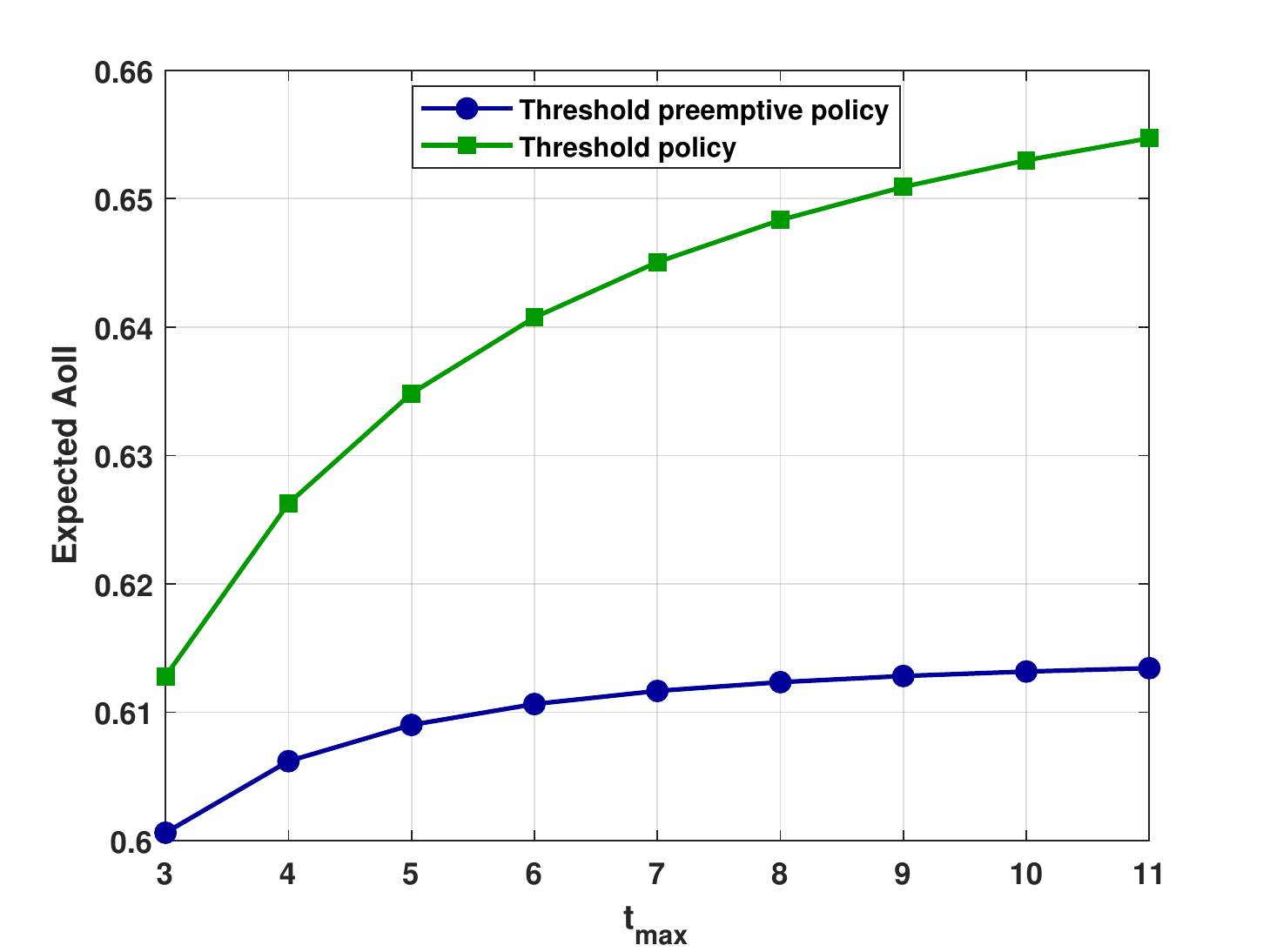}
\caption{When $p=0.35$ and $a=3$.}
\label{fig-PerformanceZipftmax}
\end{subfigure}\hfill
\caption{The performance comparison when the transmission delay follows the Zipf distribution. In this case, there are three system parameters: the Markovian source dynamics $p$, the constant $a$ in the Zipf distribution, and the upper bound on the transmission time $t_{max}$. We fix two of these parameters in the calculations, then vary the remaining one and plot the corresponding results.}
\label{fig-PerformanceZipf}
\end{figure}
Again, the performance gain from the preemption capability is not significant. The expected AoII achieved by the optimal policy grows with $p$ for the same reason as in the case of the Geometric distribution. As $a$ decreases and $t_{max}$ increases, the expected transmission time of an update increases, which leads to a decrease in the number of updates received by the receiver per unit of time. Therefore, the expected AoII increases.

\section{Conclusion}
In this paper, we optimize the performance of a transmitter-receiver pair in a system using the AoII with a generic time penalty function as the performance metric. In the system, the transmitter decides when to transmit status updates about a Markovian source to a distant receiver over a channel with a random delay to achieve the minimum expected AoII. The transmitter we consider can preempt the transmitting update to transmit a new update when the channel is busy, and the receiver will predict the state of the Markovian source based on the received update. First, we cast the optimization problem into an infinite horizon with average cost Markov decision process and provide the analytical expressions of the expected AoIIs achieved by two types of preemptive policy. Then, we prove the existence of the optimal policy and introduce the relative value iteration algorithm to find the optimal policy. To implement the relative value iteration algorithm, we truncate the Markov decision process so that its state space becomes finite. However, the optimal policy resulting from the relative value iteration algorithm is only an approximation. Therefore, we perform a theoretical analysis of the system when the delay distribution follows the Geometric and Zipf distributions, respectively. To this end, we introduce the policy iteration algorithm. Then, leveraging the policy improvement theorem, we theoretically find the corresponding optimal policies. For the system considered in this paper, it is always optimal to transmit new updates when the channel is idle. When the channel is transmitting an old update, whether the transmitter preempts is closely related to whether the transmitting update can bring new information to the receiver and whether the transmitting update carries correct information about the Markovian source. Finally, we present the numerical results on the validation of Condition~\ref{con}, the performance comparison between the optimal policy and the non-preemptive policy, and the effect of system parameters on the performance.

\bibliographystyle{IEEEtran}
\bibliography{mybib}

\newpage
\setcounter{page}{1}

\twocolumn[
\begin{@twocolumnfalse}
\begin{center}
\Huge Supplementary Material for the Paper "Preempting to Minimize Age of Incorrect Information under Transmission Delay"
\end{center}
\end{@twocolumnfalse}
\vspace{3em}]

\appendices
\section{Details of State Transition Probability}\label{app-STP}
For a clearer presentation, we write $P_{s,s'}(a)$ as $Pr[s'\mid s,a]$. Then, we distinguish between different states.
\begin{itemize}
\item $s = (0,0,-1)$. In this case, the channel is idle. We start with the case where the transmitter initiates a new transmission (i.e., $a=1$). We know that the update is delivered with probability $q_1$. In this case, $t'=0$ and $i'=-1$ by definition. Moreover, the receiver's estimate will not change. Hence, according to \eqref{eq-AoIIDynamic}, we have
\[
Pr[(0,0,-1)\mid (0,0,-1),a=1] = q_1(1-p).
\]
\[
Pr[(1,0,-1)\mid (0,0,-1),a=1] = q_1p.
\]
The update will still be in transmission with probability $1-q_1$. In this case, $t'=t+1$ as the transmission continues. $i'=0$ since the transmitted update is the same as the receiver's estimate. Meanwhile, the receiver's estimate will not change. Hence, according to \eqref{eq-AoIIDynamic}, we have
\[
Pr[(0,1,0)\mid (0,0,-1),a=1] = (1-q_1)(1-p).
\]
\[
Pr[(1,1,0)\mid (0,0,-1),a=1] = (1-q_1)p.
\]
Then, we consider the case where the transmitter chooses to stay idle (i.e., $a=0$). In this case, $t'=t$ and $i'=i$. Meanwhile, the receiver's estimate will remain the same as no update is delivered. Hence, according to \eqref{eq-AoIIDynamic}, we have
\[
Pr[(0,0,-1)\mid (0,0,-1),a=0] = 1-p.
\]
\[
Pr[(1,0,-1)\mid (0,0,-1),a=0] = p.
\]
\item $s = (0,t,0)$ where $t\geq 1$. In this case, the channel is busy. When the transmitter chooses to terminate the current transmission and initiate a new one (i.e., $a=1$), the update is delivered with probability $q_1$. In this case, $t'=0$ and $i'=-1$ by definition. Meanwhile, the receiver's estimate will not change. Hence, according to \eqref{eq-AoIIDynamic}, we have
\[
Pr[(0,0,-1)\mid (0,t,0),a=1] = q_1(1-p).
\]
\[
Pr[(1,0,-1)\mid (0,t,0),a=1] = q_1p.
\]
The update will still be in transmission with probability $1-q_1$. In this case, $t'=1$ because a new transmission starts. $i'=0$ because the transmitted update is the same as the receiver's estimate. Also, the receiver's estimate will not change. Hence, according to \eqref{eq-AoIIDynamic}, we have
\[
Pr[(0,1,0)\mid (0,t,0),a=1] = (1-q_1)(1-p).
\]
\[
Pr[(1,1,0)\mid (0,t,0),a=1] = (1-q_1)p.
\]
When the transmitter chooses $a=0$, the transmitted update will be delivered with probability $q_{t+1}$. In this case, $t'=0$ and $i'=-1$ by definition. Meanwhile, the receiver's estimate will not change as $i=0$ indicates that the newly arrived update brings no new information to the receiver. Hence, according to \eqref{eq-AoIIDynamic}, we have
\[
Pr[(0,0,-1)\mid (0,t,0),a=0] = q_{t+1}(1-p).
\]
\[
Pr[(1,0,-1)\mid (0,t,0),a=0] = q_{t+1}p.
\]
The transmitted update will still be in transmission with probability $1-q_{t+1}$. In this case, $t'=t+1$ as the transmission continues. $i'=i$, and the receiver's estimate will stay the same. Hence, according to \eqref{eq-AoIIDynamic}, we have
\[
Pr[(0,t+1,0)\mid (0,t,0),a=0] =(1-q_{t+1})(1-p).
\]
\[
Pr[(1,t+1,0)\mid (0,t,0),a=0] = (1-q_{t+1})p.
\]
\item $s = (0,t,1)$ where $t\geq 1$. The analysis is similar to the case of $s=(0,t,0)$ except that when the update is not preempted and is delivered, the receiver's estimate will flip. Hence, we present the results directly.
\[
Pr[(0,0,-1)\mid (0,t,1),a=1] = q_1(1-p).
\]
\[
Pr[(1,0,-1)\mid (0,t,1),a=1] = q_1p.
\]
\[
Pr[(0,1,0)\mid (0,t,1),a=1] = (1-q_1)(1-p).
\]
\[
Pr[(1,1,0)\mid (0,t,1),a=1] = (1-q_1)p.
\]
\[
Pr[(0,0,-1)\mid (0,t,1),a=0] = q_{t+1}p.
\]
\[
Pr[(1,0,-1)\mid (0,t,1),a=0] = q_{t+1}(1-p).
\]
\[
Pr[(0,t+1,1)\mid (0,t,1),a=0] = (1-q_{t+1})(1-p).
\]
\[
Pr[(1,t+1,1)\mid (0,t,1),a=0] = (1-q_{t+1})p.
\]
\item $s = (\Delta,t,i)$ where $\Delta>0$. In this case, the analysis is similar to the case of $s=(0,t,i)$ except for the following.
\begin{itemize}
\item $i'=1$ with probability $1-q_1$ when the transmitter chooses $a=1$.
\item When the receiver's estimate changes, $\Gamma=0$. Otherwise, $\Gamma=1$. Then, the dynamics of $\Delta'$ can be determined using \eqref{eq-AoIIDynamic}.
\end{itemize}
Hence, we omit the discussion and present the results directly.
\[
Pr[(0,0,-1)\mid (\Delta,0,-1),a=1] = q_1(1-p).
\]
\[
Pr[(\Delta+1,0,-1)\mid (\Delta,0,-1),a=1] = q_1p.
\]
\[
Pr[(0,1,1)\mid (\Delta,0,-1),a=1] = (1-q_1)p.
\]
\[
Pr[(\Delta+1,1,1)\mid (\Delta,0,-1),a=1] = (1-q_1)(1-p).
\]
\[
Pr[(\Delta+1,0,-1)\mid (\Delta,0,-1),a=0] = 1-p.
\]
\[
Pr[(0,0,-1)\mid (\Delta,0,-1),a=0] = p.
\]
For each $t\geq 1$,
\[
Pr[(0,0,-1)\mid (\Delta,t,0),a=1] = q_1(1-p).
\]
\[
Pr[(\Delta+1,0,-1)\mid (\Delta,t,0),a=1] = q_1p.
\]
\[
Pr[(0,1,1)\mid (\Delta,t,0),a=1] = (1-q_1)p.
\]
\[
Pr[(\Delta+1,1,1)\mid (\Delta,t,0),a=1] = (1-q_1)(1-p).
\]
\[
Pr[(0,0,-1)\mid (\Delta,t,0),a=0] = q_{t+1}p.
\]
\[
Pr[(\Delta+1,0,-1)\mid (\Delta,t,0),a=0] = q_{t+1}(1-p).
\]
\[
Pr[(0,t+1,0)\mid (\Delta,t,0),a=0] = (1-q_{t+1})p.
\]
\[
Pr[(\Delta+1,t+1,0)\mid (\Delta,t,0),a=0] = (1-q_{t+1})(1-p).
\]
\[
Pr[(0,0,-1)\mid (\Delta,t,1),a=1] = q_1(1-p).
\]
\[
Pr[(\Delta+1,0,-1)\mid (\Delta,t,1),a=1] = q_1p.
\]
\[
Pr[(0,1,1)\mid (\Delta,t,1),a=1] = (1-q_1)p.
\]
\[
Pr[(\Delta+1,1,1)\mid (\Delta,t,1),a=1] = (1-q_1)(1-p).
\]
\[
Pr[(0,0,-1)\mid (\Delta,t,1),a=0] = q_{t+1}(1-p).
\]
\[
Pr[(\Delta+1,0,-1)\mid (\Delta,t,1),a=0] = q_{t+1}p.
\]
\[
Pr[(0,t+1,1)\mid (\Delta,t,1),a=0] = (1-q_{t+1})p.
\]
\[
Pr[(\Delta+1,t+1,1)\mid (\Delta,t,1),a=0] = (1-q_{t+1})(1-p).
\]
\end{itemize}
Combining the cases together, we fully characterized the state transition probability $P_{s,s'}(a)$.

\section{Proof of Lemma~\ref{them-StationarySP}}\label{pf-StationarySP}
Combining with the system dynamics, the steady state probabilities satisfy the following balance equations.
\begin{equation}\label{eq-preemptiveperformance1}
\pi_{-1}(0) = q_1(1-p)\sum_{i=0}^{\infty}\pi(i).
\end{equation}
\[
\pi_{-1}(\Delta) = q_1p\pi(\Delta-1)\quad\Delta\geq1.
\]
\[
\pi_{0}(0) = (1-q_1)(1-p)\pi(0).
\]
\[
\pi_{0}(1) = (1-q_1)p\pi(0).
\]
\[
\pi_0(\Delta) = 0\quad\Delta\geq2.
\]
\[
\pi_{1}(0) = (1-q_1)p\sum_{i=1}^{\infty}\pi(i).
\]
\[
\pi_{1}(1) = 0.
\]
\[
\pi_{1}(\Delta) = (1-q_1)(1-p)\pi(\Delta-1)\quad\Delta\geq2.
\]
\begin{equation}\label{eq-preemptiveperformance4}
\sum_{i=0}^{\infty}\pi(i) = 1.
\end{equation}
Combining \eqref{eq-preemptiveperformance1} and \eqref{eq-preemptiveperformance4} yields
\[
\pi_{-1}(0) = q_1(1-p).
\]
According to the definition of $\pi(0)$, we have
\[
\pi(0) = q_1(1-p) + (1-q_1)(1-p)\pi(0) + (1-q_1)p(1-\pi(0)).
\]
Then, we obtain
\[
\pi(0)=\frac{p + q_1 - 2q_1p}{1-(1-q_1)(1-2p)}.
\]
Likewise, we can obtain
\[
\begin{split}
\pi(1) = & q_1p\pi(0) + (1-q_1)p\pi(0) = p\pi(0)\\
= & \frac{p^2 + q_1p - 2q_1p^2}{1-(1-q_1)(1-2p)}.
\end{split}
\]
\begin{equation}\label{eq-preemptiveperformance5}
\begin{split}
\pi(\Delta) = & (q_1p + (1-q_1)(1-p))\pi(\Delta-1)\\
= & (1-q_1-p + 2q_1p)\pi(\Delta-1)\quad \Delta\geq2.
\end{split}
\end{equation}
After some algebraic manipulation, for each $\Delta\geq1$, we have
\[
\begin{split}
\pi(\Delta) = & (1-q_1-p + 2q_1p)^{\Delta-1}\pi(1)\\
= & \frac{(1-q_1-p + 2q_1p)^{\Delta-1}(p^2 + q_1p - 2q_1p^2)}{1-(1-q_1)(1-2p)}.
\end{split}
\]

\section{Proof of Corollary~\ref{cor-SpecialPerformanceSP}}\label{pf-SpecialPerformanceSP}
We derive the closed-form expression based on Lemma~\ref{them-StationarySP} and its proof. We define $\Pi \triangleq \sum_{\Delta=2}^{\infty}\pi(\Delta)$. Then, we sum \eqref{eq-preemptiveperformance5} from $2$ to $\infty$ and apply the definition of $\Pi$, which yield
\[
\Pi = (1-q_1-p+2q_1p)(\Pi+\pi(1)).
\]
After some algebraic manipulations, we have
\[
\Pi = \frac{1-q_1-p+2q_1p}{q_1+p - 2q_1p}\pi(1).
\]
We also define $\Sigma \triangleq \sum_{\Delta=2}^{\infty}\Delta\pi(\Delta)$. Then, the expected AoII achieved by the strong preemptive policy is
\[
\bar{\Delta}_{sp}= \alpha\left(\pi(1) + \Sigma\right)+\beta.
\]
To obtain $\Sigma$, we multiply both size of \eqref{eq-preemptiveperformance5} by $\Delta-1$.
\[
(\Delta-1)\pi(\Delta) = (1-q_1-p + 2q_1p)(\Delta-1)\pi(\Delta-1)\quad \Delta\geq2.
\]
Then, we sum the above equation from $2$ to $\infty$ and apply the definitions of $\Pi$ and $\Sigma$, which yield
\[
\Sigma - \Pi = (1-q_1-p + 2q_1p)(\Sigma + \pi(1)).
\]
Then, we obtain
\[
\Sigma = \frac{(1-q_1-p + 2q_1p)\pi(1) + \Pi}{q_1+p - 2q_1p}.
\]
Plugging in the expressions of $\pi(1)$ and $\Pi$, we obtain
\[
\Sigma = \ddfrac{p-p(q_1+p-2q_1p)^2}{(p+q_1-2q_1p)(q_1+2p-2q_1p)}.
\]
Combining together, we have
\[
\bar{\Delta}_{sp} = \ddfrac{\alpha p}{(p+q_1-2q_1p)(q_1+2p-2q_1p)}+\beta.
\]

\section{Proof of Lemma~\ref{lem-StationaryWP}}\label{pf-StationaryWP}
We combine \eqref{eq-uniformperformance2} and \eqref{eq-uniformperformance3}, which yields
\[
\Pi(t) = \prod_{l=1}^t(1-q_l)(1-p)\Pi \triangleq \prod_{l=1}^t\mathcal{P}_l\Pi\quad 1\leq t\leq t_{max}-1.
\]
Then, plugging in the result into \eqref{eq-uniformperformance1} gives us
\[
\Pi - p\pi_0 = q_1p\Pi + \left(\sum_{t=1}^{t_{max}-1}q_{t+1}p \prod_{l=1}^t\mathcal{P}_l\right)\Pi.
\]
Then, we can obtain
\[
\pi_0 = \ddfrac{1-q_1p - p\left[\sum_{t=1}^{t_{max}-1}q_{t+1} \left(\prod_{l=1}^t\mathcal{P}_l\right)\right]}{p}\Pi.
\]
Since we have expressed $\pi(0)$ and $\Pi(t)$ for $1\leq t\leq t_{max}-1$ using $\Pi$, combining with \eqref{eq-uniformperformance7} yields \eqref{eq-EquivalentEq6}.
\begin{figure*}[!t]
\normalsize
\begin{equation}\label{eq-EquivalentEq6}
\Pi = \ddfrac{1}{\frac{1}{p}-q_1 - \left[\sum_{t=1}^{t_{max}-1}q_{t+1}\left(\prod_{l=1}^t\mathcal{P}_l\right)\right] + 1 + \sum_{t=1}^{t_{max}-1}\left(\prod_{l=1}^t\mathcal{P}_l\right)}.
\end{equation}
\hrulefill
\vspace*{4pt}
\end{figure*}

\section{Proof of Theorem~\ref{thm-SpecialPerformanceWP}}\label{pf-SpecialPerformanceWP}
We first multiply both sides of \eqref{eq-uniformperformance6} by $\Delta-1$.
\begin{multline*}
(\Delta-1)\pi_\Delta(t) = (1-q_t)(1-p)(\Delta-1)\pi_{\Delta-1}(t-1)\\
2\leq t\leq t_{max}-1\ and\ \Delta\geq2.
\end{multline*}
Then, we sum the above equation over $\Delta$ from $2$ to $\infty$.
\begin{multline*}
\sum_{\Delta=2}^{\infty}(\Delta-1)\pi_\Delta(t)= (1-q_t)(1-p)\sum_{\Delta=1}^{\infty}\Delta\pi_{\Delta}(t-1)\\
2\leq t\leq t_{max}-1.
\end{multline*}
Plugging in the definitions, we obtain
\begin{multline}\label{eq-uniformperformance8}
\Sigma(t)-\Pi(t)= (1-q_t)(1-p)\Sigma(t-1)=\mathcal{P}_t\Sigma(t-1)\\
2\leq t\leq t_{max}-1.
\end{multline}
By applying the same steps to \eqref{eq-uniformperformance5}, we can obtain the following.
\begin{equation}\label{eq-uniformperformance9}
\Sigma(1) - \Pi(1) = (1-q_1)(1-p)\Sigma = \mathcal{P}_1\Sigma.
\end{equation}
Combining \eqref{eq-uniformperformance8} and \eqref{eq-uniformperformance9} gives us the following.
\begin{multline}\label{eq-uniformperformance10}
\Sigma(t) = \left(\prod_{l=1}^{t}\mathcal{P}_l\right)\Sigma + \sum_{i=1}^{t}\left[\left(\prod_{j=i+1}^t
\mathcal{P}_j\right)\Pi(i)\right]\\
1\leq t\leq t_{max}-1.
\end{multline}
Then, we apply again the same steps to \eqref{eq-uniformperformance4}.
\begin{equation}\label{eq-uniformperformance11}
\Sigma - \Pi = p_1p\Sigma + \sum_{t=1}^{t_{max}-1}p_{t+1}p\Sigma(t).
\end{equation}
When combining \eqref{eq-uniformperformance10} and \eqref{eq-uniformperformance11}, we can obtain
\begin{multline*}
\left\lbrace1-p_1p-\sum_{t=1}^{t_{max}-1}\left[p_{t+1}p\left(\prod_{l=1}^t\mathcal{P}_l\right)\right]\right\rbrace\Sigma =\\
 \Pi + \sum_{t=1}^{t_{max}-1}\left\lbrace p_{t+1}p\left[\sum_{i=1}^t\left(\prod_{j=i+1}^t\mathcal{P}_j\right)\Pi(i)\right]\right\rbrace.
\end{multline*}
Rearranging the terms yields
\begin{equation}\label{eq-uniformperformance12}
\Sigma = \ddfrac{\Pi + \sum_{t=1}^{t_{max}-1}\left\{p_{t+1}p\left[\sum_{i=1}^t\left(\prod_{j=i+1}^t\mathcal{P}_j\right)\Pi(i)\right]\right\}}{1-p_1p-\sum_{t=1}^{t_{max}-1}\left[p_{t+1}p\left(\prod_{l=1}^t\mathcal{P}_l\right)\right]}.
\end{equation}
Finally, plugging \eqref{eq-uniformperformance10} and \eqref{eq-uniformperformance12} into \eqref{eq-AoIIwp} gives us the closed-form expression of $\bar{\Delta}_{wp}$.

\section{Proof of Lemma~\ref{lem-Monotone}}\label{pf-Monotone}
Given that the value function can be computed iteratively, we use mathematical induction to prove the desired result. First, the base case $\nu=0$ is true by initialization. Then, we assume that the monotonicity holds at iteration $\nu$ and check whether the monotonicity still holds at iteration $\nu+1$. To this end, we first revisit how the estimated value function is updated by incorporating the structural properties of the state transition probability. From Appendix~\ref{app-STP}, we know that, within a single transition, $\Delta$ either increases by one or decreases to zero. More precisely,
\[
Pr[(\Delta',t',i')\mid (\Delta,t,i),a] = 0\quad\Delta'\notin\{0,\Delta+1\}.
\]
Applying the structural property to \eqref{eq-ValueIterationGammaDiscount} yields \eqref{eq-EquivalentEq3}.
\begin{figure*}[!t]
\normalsize
\begin{multline}\label{eq-EquivalentEq3}
V_{\gamma,\nu+1}(s) = \min_{a\in\mathcal{A}}\bigg\{C(s)+\gamma\sum_{t',i'}\bigg(Pr[(\Delta+1,t',i')\mid(\Delta,t,i),a]V_{\gamma,\nu}(\Delta+1,t',i') + \\
\hspace{10em} Pr[(0,t',i')\mid(\Delta,t,i),a]V_{\gamma,\nu}(0,t',i')\bigg)\bigg\}\quad s\in\mathcal{S}.
\end{multline}
\hrulefill
\vspace*{4pt}
\end{figure*}
Moreover, the state transition probability is independent of $\Delta$ when $\Delta>0$. Specifically, for any $\Delta_1>0$, $\Delta_2>0$, $t$, and $i$,
\[
Pr[(0,t'i')\mid (\Delta_1,t,i),a] = Pr[(0,t'i')\mid (\Delta_2,t,i),a].
\]
\begin{multline*}
Pr[(\Delta_1+1,t',i')\mid (\Delta_1,t,i),a] = \\
Pr[(\Delta_2+1,t',i')\mid (\Delta_2,t,i),a].
\end{multline*}
Let $V_{\gamma,\nu+1}^a(s)$ be the value function resulting from the adoption of action $a$, $s_1=(\Delta_1,t,i)$, and $s_2=(\Delta_2,t,i)$ where $\Delta_1\geq\Delta_2>0$. We notice that the immediate cost $C(s)$ depends only on $\Delta$ and is non-decreasing in $\Delta$. Combined with the assumption on the estimated value function at iteration $\nu$, we can conclude that
\[
V_{\gamma,\nu+1}^a(s_1) \geq V_{\gamma,\nu+1}^a(s_2)\quad a\in\{0,1\}.
\]
Since $V_{\gamma,\nu+1}(s) = \min_{a\in\{0,1\}}\{V_{\gamma,\nu+1}^a(s)\}$, we can conclude that the monotonicity still holds at iteration $\nu+1$. Then, by mathematical induction, the lemma is true.

\section{Proof of Theorem~\ref{thm-policyimprovement}}\label{pf-policyimprovement}
The proof is based on~\cite[pp.42-43]{b18}. We consider a generic MDP $\mathcal{M}^G$. We first clarify the notations we will use in the proof.
\begin{itemize}
\item The state space and the state is denoted by $\mathcal{S}^G$ and $s$, respectively. The action suggested by policy $A$ at state $s$ is denoted by $A(s)$.
\item The probability that operating policy $A$ at state $s$ leads to state $s'$ is denoted by $P_{s,s'}^A$. Likewise, the probability that action $a$ at state $s$ leads to state $s'$ is denoted by $P_{s,s'}(a)$.
\item The immediate cost for operating policy $A$ at state $s$ is denoted by $C(s,A)$. Similarly, the immediate cost for operating action $a$ at state $s$ is denoted by $C(s,a)$
\item The value function of state $s$ resulting from the adoption of policy $A$ is denoted by $V^A(s)$. The expected cost achieved by policy $A$ is denoted by $\theta^A$.
\end{itemize}
With the notations in mind, we prove the optimality by contradiction. We assume that the policy improvement function has converged to policy $A$ and there exists a policy $B$ such that $\theta^A>\theta^B$.

We recall that the policy improvement function procedures a new policy $\psi'$ based on the old policy $\psi$ using the following equation.
\[
\psi'(s) = \argmin_{a\in\mathcal{A}}\left\{C(s,a) + \sum_{s'\in\mathcal{S}^G}P_{s,s'}(a)V^{\psi}(s')\right\}.
\]
Since the policy improvement function has converged to policy $A$, we have the following inequality holds for any policy $B$.
\[
C(s,A) + \sum_{s'\in\mathcal{S}^G}P_{s,s'}^AV^A(s') \leq C(s,B) + \sum_{s'\in\mathcal{S}^G}P_{s,s'}^BV^A(s').
\]
We define
\[
\delta(s)\triangleq C(s,A) - C(s,B) + \sum_{s'\in\mathcal{S}^G}(P_{s,s'}^A - P_{s,s'}^B)V^A(s')\leq0.
\]
Meanwhile, $V^A(s)$ and $V^B(s)$ satisfy there own Bellman equations. More precisely,
\[
V^A(s) + \theta^A = C(s,A) + \sum_{s'\in\mathcal{S}^G}P_{s,s'}^AV^A(s').
\]
\[
V^B(s) + \theta^B = C(s,B) + \sum_{s'\in\mathcal{S}^G}P_{s,s'}^BV^B(s').
\]
Subtracting the above two equations and bringing in $\delta(s)$ yield
\begin{multline*}
V^A(s) -V^B(s) + \theta^A -\theta^B = \\
\delta(s) + \sum_{s'\in\mathcal{S}^G}P_{s,s'}^B(V^A(s')-V^B(s')).
\end{multline*}
Let $V^{\delta}(s) \triangleq V^A(s) -V^B(s)$ and $\theta^{\delta}\triangleq \theta^A -\theta^B$. Plugging in the definitions yields
\[
V^{\delta}(s) + \theta^{\delta} = \delta(s) + \sum_{s'\in\mathcal{S}^G}P_{s,s'}^BV^{\delta}(s').
\]
As is mentioned in Section~\ref{sec-Performance}, each policy induces a DTMC and the expected cost $\theta^{\delta} = \sum_{s\in\mathcal{S}^G}\delta(s)\pi^B(s)$ where $\pi^B(s)$ is the stationary distribution of the DTMC induced by policy $B$. Since the stationary distribution is non-negative and $\delta(s)\leq0$ for all $s\in\mathcal{M}^G$, we can conclude that $\theta^{\delta}\leq0$. In other words, $\theta^A\leq\theta^B$, which contradicts the assumption that $\theta^A>\theta^B$. Therefore, the contradiction occurs, and the converged policy $A$ is optimal.

\section{Proof of Theorem~\ref{thm-optimalgeometric}}\label{pf-optimalgeometric}
We first calculate the value function $V^{\psi}(s)$ resulting from adopting the strong preemptive policy $\psi$. Combining the strong preemptive policy definition and \eqref{eq-PolicyIteration1}, we know that the value function satisfies the following system of linear equations.
\begin{align*}
V^{\psi}(0,0,-1) = f(0)- \theta^{\psi} + & (1-p_s)(1-p)V^{\psi}(0,1,0) + \\
& (1-p_s)pV^{\psi}(1,1,0) + \\
& p_s(1-p)V^{\psi}(0,0,-1) + \\
& p_spV^{\psi}(1,0,-1).
\end{align*}
\begin{align*}
V^{\psi}(0,t,0) = f(0)- \theta^{\psi} + & (1-p_s)(1-p)V^{\psi}(0,1,0) + \\
& (1-p_s)pV^{\psi}(1,1,0) + \\
& p_s(1-p)V^{\psi}(0,0,-1)+\\
& p_spV^{\psi}(1,0,-1)\quad t\geq1.
\end{align*}
\begin{align*}
V^{\psi}(0,t,1) = f(0)- \theta^{\psi} + & (1-p_s)(1-p)V^{\psi}(0,1,0) + \\
& (1-p_s)pV^{\psi}(1,1,0) + \\
& p_s(1-p)V^{\psi}(0,0,-1)+ \\
& p_spV^{\psi}(1,0,-1)\quad t\geq1.
\end{align*}
For each $\Delta\geq1$,
\begin{align*}
V^{\psi}(\Delta,0,-1) = & f(\Delta) - \theta^{\psi} + \\
& (1-p_s)(1-p)V^{\psi}(\Delta+1,1,1) + \\
& (1-p_s)pV^{\psi}(0,1,1) + \\
& p_s(1-p)V^{\psi}(0,0,-1)+ \\
& p_spV^{\psi}(\Delta+1,0,-1).
\end{align*}
\begin{align*}
V^{\psi}(\Delta,t,0) = & f(\Delta) - \theta^{\psi} +\\
& (1-p_s)(1-p)V^{\psi}(\Delta+1,1,1) + \\
& (1-p_s)pV^{\psi}(0,1,1) + \\
& p_s(1-p)V^{\psi}(0,0,-1)+\\
& p_spV^{\psi}(\Delta+1,0,-1)\quad t\geq1.
\end{align*}
\begin{align*}
V^{\psi}(\Delta,t,1) = & f(\Delta) - \theta^{\psi} + \\
& (1-p_s)(1-p)V^{\psi}(\Delta+1,1,1) + \\
& (1-p_s)pV^{\psi}(0,1,1) + \\
& p_s(1-p)V^{\psi}(0,0,-1)+\\
& p_spV^{\psi}(\Delta+1,0,-1)\quad t\geq1.
\end{align*}
We notice that the size of the system of linear equations is infinite, so it is difficult to obtain the solution by solving directly. However, some structural properties of $V^{\psi}(s)$ are sufficient for us to complete the proof. These structural properties are summarized in the following lemma.
\begin{lemma}\label{lem-geometricvfproperty}
$V^{\psi}(s)$ possesses the following structural properties.
\begin{enumerate}
\item $V^{\psi}(\Delta,0,-1) = V^{\psi}(\Delta,t,0) = V^{\psi}(\Delta,t,1) \triangleq V^{\psi}(\Delta)$ for $\Delta\geq0$ and $t\geq 1$.
\item $V^{\psi}(\Delta)$ is non-decreasing in $\Delta$.
\end{enumerate}
\end{lemma}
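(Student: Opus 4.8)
The plan is to establish the two structural properties by exploiting that, under the strong preemptive policy with Geometric delay, the action played at any state depends only on $\Delta$ and the transmission is restarted every slot.

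\emph{Property 1.} I would start from the policy-evaluation equations \eqref{eq-PolicyIteration1} written out above (equivalently, from the transition probabilities of Appendix~\ref{app-STP} with $q_t\equiv p_s$). For a fixed $\Delta$, the three states $(\Delta,0,-1)$, $(\Delta,t,0)$ with $t\ge1$, and $(\Delta,t,1)$ with $t\ge1$ all have the \emph{same} right-hand side: the policy plays $a=1$ in each case, the one-slot cost is $f(\Delta)$, and both the list of successor states and their probabilities depend on $\Delta$ only, not on $t$ or $i$. Since the normalized system \eqref{eq-PolicyIteration1} has a unique solution, equality of the right-hand sides forces $V^{\psi}(\Delta,0,-1)=V^{\psi}(\Delta,t,0)=V^{\psi}(\Delta,t,1)$; call the common value $V^{\psi}(\Delta)$.

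\emph{Property 2.} Substituting Property~1 back into \eqref{eq-PolicyIteration1} collapses the infinite system to $pV^{\psi}(0)=f(0)-\theta^{\psi}+pV^{\psi}(1)$ together with $V^{\psi}(\Delta)=f(\Delta)-\theta^{\psi}+\rho\,V^{\psi}(\Delta+1)+(1-\rho)V^{\psi}(0)$ for $\Delta\ge1$, where $\rho\triangleq(1-p_s)(1-p)+p_sp$; note $0<\rho\le1-p<1$ because $0<p<\frac{1}{2}$ makes $\rho$ a convex combination of $p$ and $1-p$. Iterating the second relation $n$ times from $\Delta$ gives $V^{\psi}(\Delta)=\sum_{j=0}^{n-1}\rho^{j}\bigl(f(\Delta+j)-\theta^{\psi}+(1-\rho)V^{\psi}(0)\bigr)+\rho^{n}V^{\psi}(\Delta+n)$, and letting $n\to\infty$ (using the assumption $\sum_{\Delta}\gamma^{\Delta}f(\Delta)<\infty$ with $\gamma=\rho$) yields the closed form $V^{\psi}(\Delta)-V^{\psi}(0)=\sum_{j\ge0}\rho^{j}f(\Delta+j)-\theta^{\psi}/(1-\rho)$. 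Since $f$ is non-decreasing, $V^{\psi}(\Delta+1)-V^{\psi}(\Delta)=\sum_{j\ge0}\rho^{j}F(\Delta+1+j)\ge0$, which is Property~2.

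\emph{Where the work is.} The only step that is not bookkeeping is justifying that the remainder $\rho^{n}V^{\psi}(\Delta+n)$ vanishes as $n\to\infty$, i.e., controlling the growth of $V^{\psi}$. I would close this gap in one of two equivalent ways. (i) Take the displayed closed form as an ansatz: it is finite by the summability of $f$, one checks that it satisfies all the reduced equations for a suitable constant, and by uniqueness of the solution of \eqref{eq-PolicyIteration1} it must equal $V^{\psi}$. (ii) Use the hitting-time representation of the relative value function for the positive recurrent chain induced by $\psi$: normalizing $V^{\psi}$ to vanish on the $\Delta=0$ states, $V^{\psi}(\Delta)=\mathbb{E}_{\psi}[\sum_{k=0}^{\tau_0-1}(f(\Delta+k)-\theta^{\psi})]$, where $\tau_0$ is the first slot with AoII $0$; under $\psi$, from any AoII-$\Delta$ state the AoII increments by one with probability $\rho$ and resets to $0$ otherwise, independently of $t$ and $i$, so $\mathbb{P}(\tau_0>k)=\rho^{k}$ and the expectation is exactly $\sum_{k\ge0}\rho^{k}f(\Delta+k)-\theta^{\psi}/(1-\rho)$. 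Either way, the closed form is obtained without a separate growth estimate, and Properties~1 and~2 both follow; a third route, if preferred, is to prove monotonicity for the $\gamma$-discounted evaluation $V^{\psi}_{\gamma}$ by value iteration as in Lemma~\ref{lem-Monotone} and then pass $\gamma\to1$.
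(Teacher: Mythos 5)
Your proof of Property 1 matches the paper's: both argue that, for fixed $\Delta$, the three states $(\Delta,0,-1)$, $(\Delta,t,0)$, $(\Delta,t,1)$ have identical right-hand sides in the policy evaluation equation (same action, same cost, same successor distribution depending only on $\Delta$), so uniqueness of the solution forces equal values. For Property 2, however, your route is genuinely different. The paper proves monotonicity by mathematical induction on the iterates of the policy evaluation recursion, initializing $V^{\psi}_0\equiv 0$ and propagating the inductive hypothesis, with a separate slightly delicate calculation at the $\Delta=0$ boundary that uses $0<p<\tfrac{1}{2}$. You instead exploit the collapse, for $\Delta\ge 1$, to the scalar affine recursion $V^{\psi}(\Delta)=f(\Delta)-\theta^{\psi}+(1-\rho)V^{\psi}(0)+\rho V^{\psi}(\Delta+1)$ with $\rho=(1-p_s)(1-p)+p_sp<1$, unroll it, and pass to the limit to obtain the explicit tail sum $V^{\psi}(\Delta)-V^{\psi}(0)=\sum_{j\ge 0}\rho^{j}f(\Delta+j)-\theta^{\psi}/(1-\rho)$; monotonicity then falls out term by term. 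Your version buys an explicit formula for the increments (useful if one wanted quantitative information, akin to what the paper's Zipf argument requires), at the price of having to justify that $\rho^{n}V^{\psi}(\Delta+n)\to 0$, which you correctly address via either the ansatz-plus-uniqueness argument or the hitting-time representation. The paper's induction avoids any growth estimate, since it never leaves the finite-horizon iterates, but yields less structural information.

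One small gap you should close explicitly: the closed form holds only for $\Delta\ge 1$, so the telescoping identity $V^{\psi}(\Delta+1)-V^{\psi}(\Delta)=\sum_{j\ge 0}\rho^{j}F(\Delta+1+j)\ge 0$ gives monotonicity only for $\Delta\ge 1$. The remaining step $V^{\psi}(1)\ge V^{\psi}(0)$ must come from the $\Delta=0$ balance you already wrote down, $pV^{\psi}(0)=f(0)-\theta^{\psi}+pV^{\psi}(1)$, i.e., $V^{\psi}(1)-V^{\psi}(0)=(\theta^{\psi}-f(0))/p$, which is nonnegative because $\theta^{\psi}$ is a stationary average of $f(\Delta)\ge f(0)$. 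Stating this is not cosmetic: the downstream proof of Theorem~\ref{thm-optimalgeometric} relies on $V^{\psi}(1)\ge V^{\psi}(0)=0$ (e.g., to sign $\delta V^{\psi}(0,t,1)=p_s(1-2p)V^{\psi}(1)$). This is the same boundary case the paper's inductive proof treats separately.
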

\begin{proof}
The first property can be verified easily by comparing the linear equations they satisfy. Hence, we will focus on the second property. The system of linear equations can be solved iteratively~\cite{b15}. More precisely,
\[
V_{\nu+1}^{\psi}(s) = f(\Delta) -\theta^\psi+ \sum_{s'\in\mathcal{S}}P_{s,s'}^{\psi}V^{\psi}_{\nu}(s')\quad s\in\mathcal{S},
\]
where $V_{\nu}^{\psi}(s)$ is the estimated value function at iteration $\nu$ and $P_{s,s'}^{\psi}$ is the probability that operating policy $\psi$ at state $s$ leads to state $s'$. We know that $\lim_{\nu\rightarrow\infty}V^{\psi}_{\nu}(s) = V^{\psi}(s)$. Then, leveraging the iterative nature, we can use mathematical induction to prove the desired results. We initialize $V^{\psi}_0(s)=0$ for $s\in\mathcal{S}$. Then, the base case $\nu=0$ is true by initialization. We assume the monotonicity is true at iteration $\nu$. Then, we check if it holds at iteration $\nu+1$. Using the first property, we have the following holds for state $s$ with $\Delta\geq1$.
\begin{multline*}
V^{\psi}_{\nu+1}(\Delta+1)-V^{\psi}_{\nu+1}(\Delta) = f(\Delta+1)-f(\Delta) + \\
(1-p_s)(1-p)[V^{\psi}_{\nu}(\Delta+2)-V^{\psi}_{\nu}(\Delta+1)] + \\
p_sp[V^{\psi}_{\nu}(\Delta+2)-V^{\psi}_{\nu}(\Delta+1)].
\end{multline*}
Applying the assumption for iteration $\nu$ and the monotonicity of the time penalty function, we can easily conclude that $V^{\psi}_{\nu+1}(\Delta+1)\geq V^{\psi}_{\nu+1}(\Delta)$ when $\Delta\geq1$. Then, we consider the case of $\Delta=0$.
\[
\begin{split}
V&^{\psi}_{\nu+1}(1)-V^{\psi}_{\nu+1}(0) = f(1)-f(0) + \\
& (1-p_s)(1-p)[V_{\nu}^{\psi}(2)-V_{\nu}^{\psi}(0)] + \\
& (1-p_s)p[V_{\nu}^{\psi}(0)-V_{\nu}^{\psi}(1)] +p_sp[V_{\nu}^{\psi}(2)-V_{\nu}^{\psi}(1)]\\
\geq & (1-p_s)(1-p)[V_{\nu}^{\psi}(1)-V_{\nu}^{\psi}(0)] + p_sp[V_{\nu}^{\psi}(2)-V_{\nu}^{\psi}(1)] + \\
& (1-p_s)p[V_{\nu}^{\psi}(0)-V_{\nu}^{\psi}(1)]\\
= & (1-p_s)(1-2p)[V_{\nu}^{\psi}(1)-V_{\nu}^{\psi}(0)] + p_sp[V_{\nu}^{\psi}(2)-V_{\nu}^{\psi}(1)].
\end{split}
\]
We recall that $0<p<\frac{1}{2}$. Hence, $V^{\psi}_{\nu+1}(1)\geq V^{\psi}_{\nu+1}(0)$. Combining together, the property holds at iteration $\nu+1$. Then, by mathematical induction, we can conclude that the second property is true.
\end{proof}
Equipped with Lemma~\ref{lem-geometricvfproperty}, we can proceed to obtain the new policy induced by the value function $V^{\psi}(s)$. To this end, we define $V^{\psi,a}(s)$ as the expected cost resulting from taking action $a$ at state $s$, which can be calculated using the following equation.
\[
\begin{split}
V^{\psi,a}(s) = & f(\Delta) - \theta^{\psi} + \sum_{s'\in\mathcal{S}}P_{s,s'}(a)V^{\psi}(s').
\end{split}
\]
Consequently, to determine the new action, we only need to determine the sign of $\delta V^{\psi}(s)\triangleq V^{\psi,0}(s) - V^{\psi,1}(s)$. When $\delta V^{\psi}(s)<0$, the action suggested by the new policy is $a=0$. Otherwise, $a=1$ is suggested. Without loss of generality, we let $V^{\psi}(0)=0$. Then, we distinguish between different states.
\begin{itemize}
\item $s = (0,0,-1)$.
\begin{align*}
\delta V^{\psi}(0,0,-1) = & pV^{\psi}(1) - (1-p_s)pV^{\psi}(1) -\\
& p_spV^{\psi}(1)\\
= & 0.
\end{align*}
\item $s = (\Delta,0,-1)$ where $\Delta\geq1$.
\begin{align*}
\delta V^{\psi}(\Delta,0,-1) = & (1-p)V^{\psi}(\Delta+1) - \\
& (1-p_s)(1-p)V^{\psi}(\Delta+1) - \\
& p_spV^{\psi}(\Delta+1) \\
= & p_s(1-2p)V^{\psi}(\Delta+1)\\
\geq & 0.
\end{align*}
\item $s = (0,t,0)$ where $t\geq 1$.
\begin{align*}
\delta V^{\psi}(0,t,0) = & (1-p_s)pV^{\psi}(1) + p_spV^{\psi}(1) - \\
& (1-p_s)pV^{\psi}(1)- p_spV^{\psi}(1)\\
= & 0.
\end{align*}
\item $s = (0,t,1)$ where $t\geq 1$.
\begin{align*}
\delta V^{\psi}(0,t,1)  = & (1-p_s)pV^{\psi}(1) + p_s(1-p)V^{\psi}(1) - \\
& (1-p_s)pV^{\psi}(1) - p_spV^{\psi}(1)\\
= & p_s(1-2p)V^{\psi}(1)\\
\geq & 0.
\end{align*}
\item $s = (\Delta,t,0)$ where $\Delta\geq1$ and $t\geq 1$.
\begin{align*}
\delta V^{\psi}(\Delta,t,0) = & (1-p_s)(1-p)V^{\psi}(\Delta+1) +\\
& p_s(1-p)V^{\psi}(\Delta+1) - \\
& (1-p_s)(1-p)V^{\psi}(\Delta+1) - \\
& p_spV^{\psi}(\Delta+1)\\
= & p_s(1-2p)V^{\psi}(\Delta+1)\\
\geq & 0.
\end{align*}
\item $s = (\Delta,t,1)$ where $\Delta\geq1$ and $t\geq 1$.
\begin{align*}
\delta V^{\psi}(\Delta,t,1) = & (1-p_s)(1-p)V^{\psi}(\Delta+1)+\\
& p_spV^{\psi}(\Delta+1)-\\
& (1-p_s)(1-p)V^{\psi}(\Delta+1)  - \\
& p_spV^{\psi}(\Delta+1)\\
= & 0.
\end{align*}
\end{itemize} 
Combing together, we know that $\delta V^{\psi}(s)\geq0$ for all $s\in\mathcal{S}$, meaning that the new policy always suggests the cation $a=1$. Hence, the new policy is still the strong preemptive policy. Then, by Theorem~\ref{thm-policyimprovement}, we can conclude that the strong preemptive policy is optimal.

\section{Proof of Theorem~\ref{thm-optimalzipf}}\label{proof-optimalzipf}
We follow the same methodology presented in the proof of Theorem~\ref{thm-optimalgeometric}. First, we calculate the value function $V^{\psi}(s)$ resulting from the adoption of the threshold preemptive policy $\psi$. The value function satisfies the following system of linear equations.
\begin{align*}
V^{\psi}(0,0,-1) = f(0)- \theta^{\psi} + & (1-q_1)(1-p)V^{\psi}(0,1,0) + \\
& (1-q_1)pV^{\psi}(1,1,0) + \\
& q_1(1-p)V^{\psi}(0,0,-1)+\\
& q_1pV^{\psi}(1,0,-1).
\end{align*}
\begin{align*}
V^{\psi}(0,t,0) = & f(0)- \theta^{\psi} +\\
& (1-q_1)(1-p)V^{\psi}(0,1,0) + \\
& (1-q_1)pV^{\psi}(1,1,0) + \\
& q_1(1-p)V^{\psi}(0,0,-1)+\\
& q_1pV^{\psi}(1,0,-1)\quad 1\leq t\leq t_{max}-1.
\end{align*}
\begin{align*}
V^{\psi}(0,t,1) = & f(0)- \theta^{\psi} + \\
& (1-q_1)(1-p)V^{\psi}(0,1,0) + \\
& (1-q_1)pV^{\psi}(1,1,0) + \\
& q_1(1-p)V^{\psi}(0,0,-1)+\\
& q_1pV^{\psi}(1,0,-1)\quad 1\leq t\leq t_{max}-1.
\end{align*}
For each $\Delta\geq1$,
\begin{align*}
V^{\psi}(\Delta,0,-1) = & f(\Delta) - \theta^{\psi} + \\
& (1-q_1)(1-p)V^{\psi}(\Delta+1,1,1) + \\
& (1-q_1)pV^{\psi}(0,1,1) + \\
& q_1(1-p)V^{\psi}(0,0,-1)+\\
& q_1pV^{\psi}(\Delta+1,0,-1).
\end{align*}
\begin{align*}
V^{\psi}(\Delta,t,0) = & f(\Delta) - \theta^{\psi} + \\
& (1-q_1)(1-p)V^{\psi}(\Delta+1,1,1) + \\
& (1-q_1)pV^{\psi}(0,1,1) + \\
& q_1(1-p)V^{\psi}(0,0,-1)+\\
& q_1pV^{\psi}(\Delta+1,0,-1)\quad 1\leq t\leq t_{max}-1.
\end{align*}
\begin{align*}
V^{\psi}(\Delta,t,1) = & f(\Delta) - \theta^{\psi} + \\
& (1-q_1)(1-p)V^{\psi}(\Delta+1,1,1) + \\
& (1-q_1)pV^{\psi}(0,1,1) + \\
& q_1(1-p)V^{\psi}(0,0,-1)+\\
& q_1pV^{\psi}(\Delta+1,0,-1)\quad 1\leq t\leq t_{max}-2.
\end{align*}
\begin{multline}\label{eq-zipfoptimalproof4}
V^{\psi}(\Delta,t_{max}-1,1) = f(\Delta) - \theta^{\psi} + \\
(1-p)V^{\psi}(0,0,-1)+pV^{\psi}(\Delta+1,0,-1).
\end{multline}
Instead of solving the above system of linear equations, some structural properties of the solution will be sufficient for the following analysis.
\begin{lemma}\label{lem-zipfvfproperty}
$V^{\psi}(s)$ possesses the following properties.
\begin{enumerate}
\item $V^{\psi}(\Delta,0,-1) = V^{\psi}(\Delta,t,0)\triangleq V^{\psi}(\Delta)$ for $\Delta\geq0$ and $1\leq t\leq t_{max}-1$.
\item $V^{\psi}(0,t,1) = V^{\psi}(0)$ for $1\leq t\leq t_{max}-1$. $V^{\psi}(\Delta,t,1) = V^{\psi}(\Delta)$ for $\Delta>0$ and $1\leq t\leq t_{max}-2$. 
\item $V^{\psi}(\Delta)$ is non-decreasing in $\Delta$.
\item $V^{\psi}(1) - V^{\psi}(0) = \frac{\theta^{\psi}-f(0)}{p}$ and $V^{\psi}(\Delta+1) - V^{\psi}(\Delta) \triangleq \sigma$ is independent of $\Delta\geq1$, where
\[
\sigma = \frac{\alpha}{q_1+p-2q_1p}.
\]
\end{enumerate}
\end{lemma}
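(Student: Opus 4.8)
The plan is to prove the four properties in the order 1--2, then the increment identities 4, then monotonicity 3, exploiting that properties 1 and 2 collapse the infinite Poisson system \eqref{eq-PolicyIteration1} for the threshold preemptive policy into a scalar recursion in $\Delta$. Throughout I take $t_{max}\ge3$; the degenerate case $t_{max}=2$, for which only the first three properties are claimed, is the one treated separately in the corollary.

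Properties 1 and 2 follow by direct comparison of the balance equations displayed above. If two states carry the same immediate cost and the same one-step transition distribution under $\psi$, their values coincide, since each $V^{\psi}(s)$ equals $C(s)-\theta^{\psi}$ plus the \emph{same} weighted combination of other values. Inspecting the equations, the states $(0,0,-1)$, $(0,t,0)$ and $(0,t,1)$ with $1\le t\le t_{max}-1$ all have the identical right-hand side and hence a common value $V^{\psi}(0)$; and for each $\Delta\ge1$ the states $(\Delta,0,-1)$, $(\Delta,t,0)$ with $1\le t\le t_{max}-1$, and $(\Delta,t,1)$ with $1\le t\le t_{max}-2$ share a common value $V^{\psi}(\Delta)$. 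The lone remaining state with $\Delta\ge1$ is $(\Delta,t_{max}-1,1)$, whose equation \eqref{eq-zipfoptimalproof4} differs and which is therefore kept separate.

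Next I substitute properties 1--2 back into the balance equations; in particular $V^{\psi}(\Delta+1,1,1)=V^{\psi}(\Delta+1)$ and $V^{\psi}(0,1,1)=V^{\psi}(0)$. Writing $A\triangleq(1-q_1)(1-p)+q_1p=1-q_1-p+2q_1p$, so that $1-A=q_1+p-2q_1p$, this yields $V^{\psi}(0)+\theta^{\psi}=f(0)+(1-p)V^{\psi}(0)+pV^{\psi}(1)$ and, for every $\Delta\ge1$, $V^{\psi}(\Delta)+\theta^{\psi}=f(\Delta)+AV^{\psi}(\Delta+1)+(1-A)V^{\psi}(0)$. The first relation rearranges to $V^{\psi}(1)-V^{\psi}(0)=(\theta^{\psi}-f(0))/p$, the first claim in property 4. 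For the second, put $\sigma_{\Delta}\triangleq V^{\psi}(\Delta+1)-V^{\psi}(\Delta)$; subtracting the $\Delta$-th relation from the $(\Delta+1)$-th and using $f(\Delta+1)-f(\Delta)=\alpha$ gives $\sigma_{\Delta}=\alpha+A\sigma_{\Delta+1}$. Since $0<A<1$, the only solution of this recursion that stays bounded is the fixed point $\sigma_{\Delta}\equiv\sigma$ with $\sigma=\alpha/(1-A)=\alpha/(q_1+p-2q_1p)$. Property 3 is then immediate: $\sigma\ge0$ because $\alpha\ge0$ and $q_1+p-2q_1p=q_1(1-2p)+p>0$, while $V^{\psi}(1)-V^{\psi}(0)=(\theta^{\psi}-f(0))/p\ge0$ because $\theta^{\psi}=\bar{\Delta}_{tp}\ge\beta=f(0)$ by Theorem~\ref{prop-weakpreemptiveperformance}; hence $V^{\psi}(\Delta)$ is non-decreasing in $\Delta$. (Monotonicity can equally be obtained by induction on the value-iteration map $V^{\psi}_{\nu+1}(s)=C(s)-\theta^{\psi}+\sum_{s'}P^{\psi}_{s,s'}V^{\psi}_{\nu}(s')$, exactly as in the proof of Lemma~\ref{lem-geometricvfproperty}.)

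The one genuinely delicate step is the claim that the unbounded homogeneous mode of the $\sigma_{\Delta}$ recursion is absent, i.e.\ that $V^{\psi}(\Delta)$ grows at most linearly in $\Delta$ (matching the affine cost $f$); without this the under-determined system \eqref{eq-PolicyIteration1} would admit spurious solutions growing like $A^{-\Delta}$. I would close this either via the probabilistic representation of the relative value function as the expected cost accrued until the induced Markov chain returns to the reference state --- whose return time has geometric tails, forcing linear growth when $f$ is affine --- or by a guess-and-verify argument: posit $V^{\psi}(\Delta)=\sigma\Delta+b$ for $\Delta\ge1$ plus explicit values for $V^{\psi}(0)$ and the boundary states $(\Delta,t_{max}-1,1)$, fix the finitely many constants from $\theta^{\psi}=\bar{\Delta}_{tp}$, verify every balance equation, and invoke uniqueness (up to an additive constant) of the Poisson solution. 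The only other bookkeeping is to keep the boundary state $(\Delta,t_{max}-1,1)$ and the case $t_{max}=2$ out of the aggregation, as noted above.
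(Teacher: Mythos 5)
Your proof is correct, and for properties 3 and 4 it takes a genuinely different route from the paper. Properties 1 and 2 you handle exactly as the paper does, by matching the right-hand sides of the Poisson (balance) equations. The divergence is in how you establish the constant-increment identity. The paper fixes $\theta^{\psi}$, runs the policy-evaluation iteration $V^{\psi}_{\nu+1}(s)=C(s)-\theta^{\psi}+\sum_{s'}P^{\psi}_{s,s'}V^{\psi}_{\nu}(s')$ from $V^{\psi}_{0}\equiv 0$, and shows by induction that $V^{\psi}_{\nu}(\Delta+1)-V^{\psi}_{\nu}(\Delta)$ is $\Delta$-independent at every iteration, then passes to the limit; separately, it proves monotonicity (property 3) by the same value-iteration induction used in Lemma~\ref{lem-geometricvfproperty}. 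You instead work directly with the limiting Poisson equation: after the $t_{max}\ge3$ aggregation you subtract consecutive equations to get $\sigma_{\Delta}=\alpha+A\sigma_{\Delta+1}$ with $A=1-q_1-p+2q_1p\in(0,1)$, pick out the bounded fixed-point $\sigma=\alpha/(1-A)$, and then read off property 3 for free from $\sigma\ge0$ together with $V^{\psi}(1)-V^{\psi}(0)=(\theta^{\psi}-f(0))/p\ge0$. This is cleaner in the sense that property 3 becomes a one-line corollary of property 4, but it does introduce a genuine proof obligation that the paper's route avoids: one must rule out the exponentially growing homogeneous solution $CA^{-\Delta}$, i.e.\ show $V^{\psi}(\Delta)$ has bounded increments. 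The paper's iteration from $V^{\psi}_{0}\equiv0$ never produces that mode, so the issue never arises there (at the cost of invoking convergence of the policy-evaluation iteration, cited externally). You correctly flag the gap and both of the closures you sketch are sound — the probabilistic one (geometric tail of the return time to the reference state, hence linear growth of the bias under affine cost) and the guess-and-verify one (posit the affine form including the boundary states $(\Delta,t_{max}-1,1)$ via~\eqref{eq-zipfoptimalproof4}, verify every equation, and invoke uniqueness of the Poisson solution up to an additive constant). Your handling of the $t_{max}=2$ degeneracy is also right: property 4 uses $V^{\psi}(\Delta+1,1,1)=V^{\psi}(\Delta+1)$, which requires $t_{max}\ge3$, and the corollary following Theorem~\ref{thm-optimalzipf} confirms that only the first three properties are claimed when $t_{max}=2$. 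One small caveat worth noting: your derivation of property 3 from property 4 presupposes $f(\Delta)=\alpha\Delta+\beta$, whereas the paper's induction for property 3 works for any non-decreasing $f$; inside the proof of Theorem~\ref{thm-optimalzipf} the affine assumption is already in force, so this costs nothing here, but the paper's route is strictly more general for that one property.
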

\begin{proof}
The first two properties are obvious, as we can verify directly by comparing the corresponding linear equations. For the third property, the proof is based on the mathematical induction as presented in the proof of Lemma \ref{lem-geometricvfproperty}. Hence, we omit the proof here for the sake of space. In the following, we focus on the last property. Applying the first two properties to the system of linear equations yields
\begin{equation}\label{eq-zipfoptimalproof1}
V^{\psi}(0) = f(0)-\theta^{\psi} + (1-p)V^{\psi}(0) + pV^{\psi}(1).
\end{equation}
\begin{align*}
V^{\psi}(\Delta) = & f(\Delta) - \theta^{\psi} + (1-q_1)(1-p)V^{\psi}(\Delta+1) + \\
& (1-q_1)pV^{\psi}(0) +  q_1(1-p)V^{\psi}(0)+ \\
& q_1pV^{\psi}(\Delta+1)\quad\Delta\geq1.
\end{align*}
From \eqref{eq-zipfoptimalproof1}, we can easily conclude that
\[
V^{\psi}(1) - V^{\psi}(0) = \frac{\theta^{\psi}-f(0)}{p}.
\]
In the following, we prove that $V^{\psi}(\Delta+1) - V^{\psi}(\Delta)$ is independent of $\Delta\geq1$. We recall that $V^{\psi}(\Delta)$ can be calculated using the iterative method. Let $V^{\psi}_{\nu}(\Delta)$ be the estimated value function at iteration $\nu$, which is updated in the following way.
\[
V_{\nu+1}^{\psi}(0) = f(0)-\theta^{\psi} + (1-p)V_{\nu}^{\psi}(0) + pV_{\nu}^{\psi}(1).
\]
\begin{align*}
V_{\nu+1}^{\psi}(\Delta) = & f(\Delta) - \theta^{\psi} + (1-q_1)(1-p)V_{\nu}^{\psi}(\Delta+1) + \\
& (1-q_1)pV_{\nu}^{\psi}(0) + q_1(1-p)V_{\nu}^{\psi}(0)+\\
& q_1pV_{\nu}^{\psi}(\Delta+1)\quad\Delta\geq1.
\end{align*}
Then, we know that $\lim_{\nu\rightarrow\infty}V_{\nu}^{\psi}(\Delta) = V^{\psi}(\Delta)$. Consequently, we can use mathematical induction to prove the desired results. To this end, we initialize $V_{0}^{\psi}(\Delta)=0$ for $\Delta\geq0$. Then, the base case $\nu=0$ is true by initialization. We assume the property holds at iteration $\nu$ and examine whether it still holds at iteration $\nu+1$. We recall that $f(\Delta) = \alpha\Delta+\beta$. Hence, we have
\begin{multline*}
V^{\psi}_{\nu+1}(\Delta+1) - V^{\psi}_{\nu+1}(\Delta) = \\
\alpha + (1-q_1)(1-p)[V_{\nu}^{\psi}(\Delta+2) - V_{\nu}^{\psi}(\Delta+1)] + \\
q_1p[V_{\nu}^{\psi}(\Delta+2) - V_{\nu}^{\psi}(\Delta+1)]\quad\Delta\geq1.
\end{multline*}
According to our assumption, $V_{\nu}^{\psi}(\Delta+1) - V_{\nu}^{\psi}(\Delta)$ is independent of $\Delta\geq1$. Hence, we can conclude that $V^{\psi}_{\nu+1}(\Delta+1) - V^{\psi}_{\nu+1}(\Delta)$ is also independent of $\Delta\geq1$. Then, by mathematical induction, we can conclude that $V^{\psi}(\Delta+1) - V^{\psi}(\Delta)$ is independent of $\Delta\geq1$. To calculate the constant $\sigma$, we have 
\begin{align*}
V^{\psi}(\Delta+1) &- V^{\psi}(\Delta) = \\
& \alpha + (1-q_1)(1-p)[V^{\psi}(\Delta+2) - V^{\psi}(\Delta+1)] + \\
& q_1p[V^{\psi}(\Delta+2) - V^{\psi}(\Delta+1)]\\
= & \alpha + (1-q_1)(1-p)\sigma + q_1p\sigma.
\end{align*}
Finally, we obtain
\[
\sigma = \frac{\alpha}{q_1+p-2q_1p}.
\]
\end{proof}
With Lemma~\ref{lem-zipfvfproperty} in mind, we can proceed with obtaining the policy induced by $V^{\psi}(s)$. Same as we did in the proof of Theorem~\ref{thm-optimalgeometric}, we define $V^{\psi,a}(s)$ as the expected cost resulting from taking action $a$ at state $s$. To determine the induced policy, we only need to determine the sign of $\delta V^{\psi}(s)\triangleq V^{\psi,0}(s) - V^{\psi,1}(s)$ for $s\in\mathcal{S}$. When $\delta V^{\psi}(s)<0$, the action suggested by the induced policy is $a=0$. Otherwise, $a=1$ is suggested. Without loss of generality, we let $V^{\psi}(0)=0$. Then, we distinguish between the following states.
\begin{itemize}
\item $s = (0,0,-1)$.
\begin{align*}
\delta V^{\psi}(0,0,-1) = & pV^{\psi}(1) - (1-q_1)pV^{\psi}(1) - q_1pV^{\psi}(1) \\
= & 0.
\end{align*}
\item $s = (\Delta,0,-1)$ where $\Delta\geq1$. When $t_{max}>2$, we have
\begin{align*}
\delta V^{\psi}(\Delta,0,-1) = & (1-p)V^{\psi}(\Delta+1) - \\
& (1-q_1)(1-p)V^{\psi}(\Delta+1)  - \\
& q_1pV^{\psi}(\Delta+1)\\
= & q_1(1-2p)V^{\psi}(\Delta+1)\\
\geq & 0.
\end{align*}
When $t_{max}=2$, we have
\begin{align*}
\delta V^{\psi}(\Delta,0,-1) = & (1-p)V^{\psi}(\Delta+1) - \\
& (1-q_1)(1-p)V^{\psi}(\Delta+1,1,1)  - \\
& q_1pV^{\psi}(\Delta+1).
\end{align*}
Since $V^{\psi}(\Delta+1,1,1)$ satisfies \eqref{eq-zipfoptimalproof4}, we have
\begin{equation}\label{eq-zipfoptimalproof5}
\begin{split}
V^{\psi}(\Delta+1,&1,1) - V^{\psi}(\Delta+1) =\\
& pV^{\psi}(\Delta+1)- (1-q_1)(1-p)V^{\psi}(\Delta+1)\\
& -q_1pV^{\psi}(\Delta+1)\\
= & (1-q_1)(2p-1)V^{\psi}(\Delta+1)\\
\leq & 0.
\end{split}
\end{equation}
Then, we know that $V^{\psi}(\Delta+1,1,1) \leq V^{\psi}(\Delta+1)$. Consequently,
\begin{align*}
\delta V^{\psi}(\Delta,0,-1) \geq & (1-p)V^{\psi}(\Delta+1) - \\
& (1-q_1)(1-p)V^{\psi}(\Delta+1) - \\
& q_1pV^{\psi}(\Delta+1)\\
= & q_1(1-2p)V^{\psi}(\Delta+1)\\
\geq & 0.
\end{align*}
\item $s = (0,t,0)$ where $1\leq t\leq t_{max}-1$.
\begin{align*}
\delta V^{\psi}(0,t,0) = & (1-q_{t+1})pV^{\psi}(1) + q_{t+1}pV^{\psi}(1) - \\
& (1-q_1)pV^{\psi}(1) - q_1pV^{\psi}(1)\\
= & 0.
\end{align*}
\item $s = (0,t,1)$ where $1\leq t\leq t_{max}-3$ and $t_{max}\geq4$.
\begin{align*}
\delta V^{\psi}(0,t,1) = & (1-q_{t+1})pV^{\psi}(1) + q_{t+1}(1-p)V^{\psi}(1) - \\
& (1-q_1)pV^{\psi}(1) - q_1pV^{\psi}(1)\\
= & q_{t+1}(1-2p)V^{\psi}(1)\\
\geq & 0.
\end{align*}
\item $s = (0,t_{max}-2,1)$ where $t_{max}\geq3$.
\begin{multline*}
\delta V^{\psi}(0,t_{max}-2,1) = \\
(1-q_{t_{max}-1})pV^{\psi}(1,t_{max}-1,1) + \\
q_{t_{max}-1}(1-p)V^{\psi}(1) - pV^{\psi}(1).
\end{multline*}
We notice that $V^{\psi}(1,t_{max}-1,1)$ satisfies \eqref{eq-zipfoptimalproof4}. Hence, replacing $V^{\psi}(1,t_{max}-1,1)$ with the corresponding expression yields
\begin{multline*}
\delta V^{\psi}(0,t_{max}-2,1) = \\
(1-q_{t_{max}-1})p[f(1)-\theta^{\psi} + pV^{\psi}(2)] + \\
(q_{t_{max}-1}-q_{t_{max}-1}p-p)V^{\psi}(1).
\end{multline*}
According to Lemma~\ref{lem-zipfvfproperty}, $V^{\psi}(2) = V^{\psi}(1) + \sigma$. Hence, we have
\[
\begin{split}
\delta V^{\psi}(0,&t_{max}-2,1)\\
 = & (1-q_{t_{max}-1})p[f(1)-\theta^{\psi} +p(V^{\psi}(1)+\sigma)] +\\
&  (q_{t_{max}-1}-q_{t_{max}-1}p-p)V^{\psi}(1)\\
= & [(p-1)(p-q_{t_{max}-1})-q_{t_{max}-1}p^2]V^{\psi}(1) + \\
& (1-q_{t_{max}-1})p(f(1)-\theta^{\psi} + p\sigma).
\end{split}
\]
We recall that the expected AoII $\theta^{\psi}$ is given by Theorem~\ref{prop-weakpreemptiveperformance}. Plugging the expressions for $\sigma$, $\theta^{\psi}$, and using the property that $V^{\psi}(1) = \frac{\theta^{\psi}-f(0)}{p}$ yield \eqref{eq-zipfoptimalproof2}.
\begin{figure*}[!t]
\normalsize
\begin{equation}\label{eq-zipfoptimalproof2}
\delta V^{\psi}(0,t_{max}-2,1) = \alpha\frac{(q_{t_{max}-1}-q_{t_{max}-1}p-p)+(1-q_{t_{max}-1})p(q_1+2p-2q_1p)^2}{(q_1+2p-2q_1p)(q_1+p-2q_1p)}.
\end{equation}
\hrulefill
\vspace*{4pt}
\end{figure*}
We notice that $\alpha>0$ and the denominator of \eqref{eq-zipfoptimalproof2} is positive. Hence, examining the sign of the numerator of \eqref{eq-zipfoptimalproof2} is sufficient to determine the sign of $\delta V^{\psi}(0,t_{max}-2,1)$. To this end, we define 
\begin{multline*}
\mathcal{Q}_1 \triangleq (q_{t_{max}-1}-q_{t_{max}-1}p-p)+\\
(1-q_{t_{max}-1})p(q_1+2p-2q_1p)^2.
\end{multline*}
Then, by Condition~\ref{con}, we know that $\mathcal{Q}_1\geq0$. Consequently, $\delta V^{\psi}(0,t_{max}-2,1)\geq0$.
\item $s = (0,t_{max}-1,1)$.
\begin{align*}
\delta V^{\psi}(0,t_{max}-1,1) = & (1-p)V^{\psi}(1) - \\
& (1-q_1)pV^{\psi}(1) - q_1pV^{\psi}(1)\\
= & (1-2p)V^{\psi}(1)\\
\geq & 0.
\end{align*}
\item $s = (\Delta,t,0)$ where $\Delta\geq1$ and $1\leq t
\leq t_{max}-1$. When $t_{max}>2$, we have
\begin{align*}
\delta V^{\psi}(\Delta,t,0) = & (1-q_{t+1})(1-p)V^{\psi}(\Delta+1) +\\
& q_{t+1}(1-p)V^{\psi}(\Delta+1) - \\
& (1-q_1)(1-p)V^{\psi}(\Delta+1) - \\
& q_1pV^{\psi}(\Delta+1)\\
= & (1-p)V^{\psi}(\Delta+1) - \\
& (1-q_1)(1-p)V^{\psi}(\Delta+1) - \\
& q_1pV^{\psi}(\Delta+1)\\
= & q_1(1-2p)V^{\psi}(\Delta+1)\\
\geq & 0.
\end{align*}
When $t_{max}=2$, we recall that $V^{\psi}(\Delta+1,1,1)\leq V^{\psi}(\Delta+1)$. Hence, we have
\begin{align*}
\delta V^{\psi}(\Delta,t,0) = & (1-q_{t+1})(1-p)V^{\psi}(\Delta+1) +\\
& q_{t+1}(1-p)V^{\psi}(\Delta+1) - \\
& (1-q_1)(1-p)V^{\psi}(\Delta+1,1,1) - \\
& q_1pV^{\psi}(\Delta+1)\\
\geq & (1-p)V^{\psi}(\Delta+1) - \\
& (1-q_1)(1-p)V^{\psi}(\Delta+1) -\\
& q_1pV^{\psi}(\Delta+1)\\
\geq & 0.
\end{align*}
\item $s = (\Delta,t,1)$ where $\Delta\geq1$, $1\leq t\leq t_{max}-3$, and $t_{max}\geq4$.
\begin{align*}
\delta V^{\psi}(\Delta,t,1)= & (1-q_{t+1})(1-p)V^{\psi}(\Delta+1) +\\
& q_{t+1}pV^{\psi}(\Delta+1) -\\
& (1-q_1)(1-p)V^{\psi}(\Delta+1) - \\
& q_1pV^{\psi}(\Delta+1)\\
= & (q_1-q_{t+1})(1-2p)V^{\psi}(\Delta+1).
\end{align*}
Since we assume that Condition~\ref{con} holds, we know that $\delta V^{\psi}(\Delta,t,1)\geq0$.
\item $s = (\Delta,t_{max}-2,1)$ where $\Delta\geq1$ and $t_{max}\geq3$.
\begin{multline*}
\delta V^{\psi}(\Delta,t_{max}-2,1) = \\
(1-q_{t_{max}-1})(1-p)V^{\psi}(\Delta+1,t_{max-1}-1,1) + \\
q_{t_{max}-1}pV^{\psi}(\Delta+1) - (1-q_1)(1-p)V^{\psi}(\Delta+1) - \\
q_1pV^{\psi}(\Delta+1).
\end{multline*}
We recall that $V^{\psi}(\Delta+1,t_{max}-1,1)$ satisfies \eqref{eq-zipfoptimalproof4}. Hence, replacing $V^{\psi}(\Delta+1,t_{max}-1,1)$ with corresponding expression yields
\begin{align*}
\delta V^{\psi}(\Delta,t_{max}&-2,1) = \\
& (1-q_{t_{max}-1})(1-p)(f(\Delta+1)-\theta^{\psi} +\\
& (1-p)V^{\psi}(0) + pV^{\psi}(\Delta+2)) +\\
& q_{t_{max}-1}pV^{\psi}(\Delta+1) - \\
& (1-q_1)(1-p)V^{\psi}(\Delta+1) - \\
& q_1pV^{\psi}(\Delta+1).
\end{align*}
We recall that $V^{\psi}(\Delta+1)-V^{\psi}(\Delta) = \sigma$ for $\Delta\geq1$. Then,
\begin{align*}
\delta V^{\psi}(\Delta&,t_{max}-2,1) =\\
& [(1-q_1)(2p-1)-p^2(1-q_{t_{max}-1})]V^{\psi}(1) + \\
& (1-q_{t_{max}-1})(1-p)(f(\Delta+1)-\theta^{\psi})+\\
& \{[(1-q_1)(2p-1)-p^2(1-q_{t_{max}-1})]\Delta+\\
& (1-q_{t_{max}-1})(1-p)p\}\sigma.
\end{align*}
Meanwhile, $V^{\psi}(1) = \frac{\theta^{\psi}-f(0)}{p}$. Hence, we have
\begin{align*}
& \frac{\delta V^{\psi}(\Delta,t_{max}-2,1)}{\alpha} = \\
& \quad \frac{(1-q_1)(2p-1)-p(1-q_{t_{max}-1})}{p}\theta^{\psi}+ \\
& \quad(1-q_{t_{max}-1})(1-p)(\Delta+1)+\\
& \quad \{[(1-q_1)(2p-1)-p^2(1-q_{t_{max}-1})]\Delta+\\
& \quad (1-q_{t_{max}-1})(1-p)p\}\sigma.
\end{align*}
We define the coefficient before $\Delta$ as $\mathcal{Q}_2$. Then, we have \eqref{eq-EquivalentEq4} holds.
\begin{figure*}[!t]
\normalsize
\begin{equation}\label{eq-EquivalentEq4}
\begin{split}
\mathcal{Q}_2= & [(1-q_1)(2p-1)-p^2(1-q_{t_{max}-1})]\sigma + (1-q_{t_{max}-1})(1-p)\\
= & \frac{(1-q_1)(2p-1)-p^2(1-q_{t_{max}-1})+ (1-q_{t_{max}-1})(1-p)(q_1+p-2q_1p)}{q_1+p-2q_1p}\\
= & \frac{(1-2p)\{q_1-1+(1-q_{t_{max}-1})[p+q_1(1-p)]\}}{q_1+p-2q_1p}.
\end{split}
\end{equation}
\hrulefill
\vspace*{4pt}
\end{figure*}
Then, under Condition~\ref{con}, we know that $\mathcal{Q}_2\geq0$. Hence, $\frac{\delta V^{\psi}(\Delta,t_{max}-2,1)}{\alpha}$ is non-decreasing in $\Delta$. Then, when $\Delta = 1$, we have \eqref{eq-EquivalentEq5} holds.
\begin{figure*}[!t]
\normalsize
\begin{equation}\label{eq-EquivalentEq5}
\begin{split}
\frac{\delta V^{\psi}(1,t_{max}-2,1)}{\alpha} = &  \frac{(1-q_1)(2p-1)-p(1-q_{t_{max}-1})}{(2p+q_1-2q_1p)(p+q_1-2q_1p)} +  \frac{(1-q_{t_{max}-1})(1-p)p}{q_1+p-2q_1p}+(1-q_{t_{max}-1})(1-p) + \mathcal{Q}_2\\
\triangleq & \mathcal{Q}_3.
\end{split}
\end{equation}
\hrulefill
\vspace*{4pt}
\end{figure*}
Again, under Condition~\ref{con}, we know that $\mathcal{Q}_3\geq0$. Combing with the fact that $\alpha>0$, we can conclude that $\delta V^{\psi}(1,t_{max}-2,1)\geq0$. Consequently, $\delta V^{\psi}(\Delta,t_{max}-2,1)\geq\delta V^{\psi}(1,t_{max}-2,1)\geq0$ for $\Delta\geq1$.
\item $s = (\Delta,t_{max}-1,1)$ where $\Delta\geq1$. When $t_{max}>2$, we have
\begin{align*}
\delta V^{\psi}(\Delta,t_{max}-1,1) =& pV^{\psi}(\Delta+1) - \\
& (1-q_1)(1-p)V^{\psi}(\Delta+1) - \\
& q_1pV^{\psi}(\Delta+1)\\
= & (1-q_1)(2p-1)V^{\psi}(\Delta+1)\\
\leq & 0.
\end{align*}
When $t_{max}=2$, we have
\begin{align*}
\delta V^{\psi}(\Delta,1,1) = & pV^{\psi}(\Delta+1) - \\
& (1-q_1)(1-p)V^{\psi}(\Delta+1,1,1) - \\
& q_1pV^{\psi}(\Delta+1).
\end{align*}
From \eqref{eq-zipfoptimalproof5}, we know that $V^{\psi}(\Delta+1,1,1) = (1-q_1)(2p-1)V^{\psi}(\Delta+1) + V^{\psi}(\Delta+1)$. Bring in the expression yields
\begin{multline*}
\delta V^{\psi}(\Delta,1,1) = \\
(1-q_1)[(2-2q_1)p^2+(3q_1-1)p-q_1]V^{\psi}(\Delta+1).
\end{multline*}
We recall that $0<p<\frac{1}{2}$ and $0\leq q_1\leq1$. Hence, $\delta V^{\psi}(\Delta,1,1)\leq0$.
\end{itemize}
Combining the above cases, we can conclude that the policy iteration algorithm converges to the threshold preemptive policy. Then, by Theorem~\ref{thm-policyimprovement}, the threshold preemptive policy is optimal.

\end{document}